\newcommand{\point}[1]{\node[circle,inner sep = 0pt,minimum size =0pt]}
\newcommand{\vertex}[1]{\node[circle,inner sep = 0pt,minimum size =3pt,fill = #1]}
\newcommand{\Vertex}[1]{\node[circle,inner sep = 0pt,minimum size =5pt,fill = #1]}
\newtheorem{lemma}{Lemma}
\newtheorem{proposition}{Proposition}
\newtheorem{theorem}{Theorem}
\newtheorem{corollary}{Corollary}
\theoremstyle{definition}
\newcommand{\commentout}[1]{}
\DeclareMathOperator{\conv}{conv}
\DeclareMathOperator{\id}{id}
\begin{document}

\thispagestyle{empty}
\centerline{\Large\bf Isometric embedding of Busemann surfaces into $L_1$}

\vspace{10mm}

\centerline{{\sc Jérémie Chalopin}, {\sc Victor Chepoi}, and {\sc Guyslain Naves}}

\vspace{3mm}

\date{\today}

\medskip
\begin{small}
\centerline{Laboratoire d'Informatique Fondamentale, Aix-Marseille Universit\'e and CNRS,}
\centerline{Facult\'e des Sciences de Luminy, F-13288 Marseille Cedex 9, France}

\centerline{\texttt{\{jeremie.chalopin, victor.chepoi, guyslain.naves\}@lif.univ-mrs.fr}}
\end{small}

\bigskip\bigskip\noindent {\footnotesize {\bf Abstract.}  In this
paper, we prove that any non-positively curved 2-dimensional surface
(alias, Busemann surface) is isometrically embeddable into $L_1$. As a
corollary, we obtain that all planar graphs which are 1-skeletons of
planar non-positively curved complexes with regular Euclidean polygons
as cells are $L_1$-embeddable with distortion at most $2+\pi/2<4$. Our
results significantly improve and simplify the results of the recent
paper {\it A. Sidiropoulos, Non-positive curvature, and the planar
embedding conjecture, FOCS 2013.}}

\section{Avant-propos}
Isometric and low distortion embeddings of finite and infinite metric
spaces into $L_p$-spaces is one of the main subjects in the theory of
metric spaces. Work in this area was initiated by Cayley in 1841 and
continued in the first half of the 20th century by Fréchet, Menger,
Schoenberg, and Blumenthal. Since these days it is known that all
metric spaces isometrically embed into $L_{\infty}$. Metric spaces
isometrically embeddable into $L_2$ were characterized by Menger and
Schoenberg. Even if  embeddability into $L_1$ can be defined in several
equivalent ways and a few necessary conditions for $L_1$-embedding
are known, metric spaces isometrically embeddable into $L_1$ cannot be
characterized in an efficient way, because deciding whether a finite
metric space is $L_1$-embeddable is NP-complete.  On the other hand,
many classes of metric spaces (Euclidean and spherical metrics, tree
and outerplanar metrics, as well as graph metrics of some classes of
graphs) are known to be $L_1$-embeddable; for a full account of the
theory of isometric embeddings, see the book \cite{DeLa}.

Although already simple metric spaces are not $L_1$-embeddable,
Bourgain \cite{Bou} established that any metric space on $n$ points
can be embedded into $L_1$ with $O(\log n)$ (multiplicative)
distortion and this important result has found numerous algorithmic
applications (for a theory of low distortion embeddings of metric
spaces and its algorithmic applications, the interested reader can
consult the book \cite{Mat} and the survey \cite{InMa}). One of main
open problems in this domain is the so-called {\it planar embedding
conjecture} asserting that all planar metrics (i.e., metrics of planar
graphs) can be embedded into $L_1$ with constant distortion. This
conjecture was established for series-parallel graphs \cite{GuNeRaSi};
on the other hand, several classes of planar graphs are known to be
$L_1$-embeddable (see the book \cite{DeLa} and the survey
\cite{BaCh_survey}), in particular, it was shown in \cite{ChDrVa_jalg}
that the three basic classes of non-positively curved planar graphs
(so-called, (3,6),(6,3), and (4,4)-graphs) are $L_1$-embeddable.

Recently, Sidiropoulos \cite{Si} proved that for any finite set $Q$ of
a non-positively curved planar surface $(S,d),$ the metric space
$(Q,d)$ is $L_1$-embeddable with constant distortion. As a
consequence, all planar graphs which give rise to non-positively
curved surfaces can be embedded into $L_1$ with constant
distortion.\footnote{\noindent Non-positively curved metric spaces
constitute a large class of geodesic metric spaces at the heart of
modern metric geometry and playing an essential role in geometric
group theory \cite{BrHa,Pa}.} The proof-method in \cite{Si} uses at
minimum the geometry of  $(S,d)$ and essentially
employs probabilistic techniques.  Using the convexity of the distance
function, Sidiropoulos ``approximates'' $(Q,d)$ by special planar
graphs called ``bundles", which he shows to be $L_1$-embeddable with
constant distortion. To provide such an embedding, the author searches
for a good distribution over a special type of cuts (bipartitions of
$Q$), called ``monotone cuts", defined on bundles.  Searching for this
good distribution is the most technically involved part of the paper
\cite{Si}.

In this paper, we prove that in fact all non-positively curved planar
surfaces $(S,d)$ (which we call {\it Busemann surfaces}) are
$L_1$-embeddable (without any distortion). This significantly improves
the result of \cite{Si}.  Our approach is geometric and
combinatorial. First, we establish some elementary properties of
convexity in Busemann surfaces, analogous to the properties
of usual convexity in ${\mathbb R}^2$.  Then we use some of these
properties to show that for any finite set $Q$ in general position of
$S$, $(Q,d)$ is $L_1$-embeddable. For this, we extend to Busemann
surfaces the proof-method of a combinatorial Crofton lemma given by
R. Alexander \cite{Al} for finite point-sets in general position in
${\mathbb R}^2$ endowed with a metric in which lines are
geodesics. Using local perturbations of points, we extend our result to
all finite sets $Q$ of $S$. Then the fact that $(S,d)$ is
$L_1$-embeddable follows from a compactness result of \cite{BrDhCaKr}
about $L_p$-embeddings.

\section{Preliminaries}
\subsection{$L_1$-embeddings}

A metric space $(X,d)$ is {\it isometrically embeddable} into a metric
space $(X',d')$ if there exists a map $\varphi: X\rightarrow X'$ such
that $d'(\varphi(x),\varphi(y))=d(x,y)$ for any $x,y\in X$.  More
generally, $\varphi: X\mapsto X'$ is an {\it embedding with
  (multiplicative) distortion} $c \ge 1$ if $d(x,y) \le
d'(\varphi(x),\varphi(y)) \le c \cdot d(x,y)$ for all $x,y \in X$
(non-contractive embedding), or if $\frac{1}{c} \cdot d(x,y) \le
d'(\varphi(x),\varphi(y)) \le d(x,y)$ for all $x,y \in X$
(non-expansive embedding). Let $(\Omega,{\mathcal A},\mu)$ be a measure space
consisting of a set $\Omega$, a $\sigma$-algebra $\mathcal A$ of
subsets of $\Omega$, and a measure $\mu$ on $\mathcal A$. Given a
function $f:\Omega\rightarrow {\mathbb R},$ its $L_1$-{\it norm} is
defined by $\|f\|_1=\int_{\Omega} \vert f(w)\vert \mu (dw).$ Then
$L_1(\Omega,{\mathcal A},\mu)$ denotes the set of functions
$f:\Omega\rightarrow {\mathbb R}$ which satisfy $\|f\|_1<\infty$. The
$L_1$-norm defines a metric on $L_1(\Omega,{\mathcal A},\mu)$ by
taking $\| f-g\| _1$ as the distance between two functions $f,g\in
L_1(\Omega,{\mathcal A},\mu)$.  A metric space $(X,d)$ is said to be
$L_1$--{\it embeddable} if there exists an isometric embedding of
$(X,d)$ into $L_1(\Omega,{\mathcal A},\mu)$ for some measure space
$(\Omega,{\mathcal A}, \mu)$ \cite{DeLa}. If $\Omega$ is finite (say,
$|\Omega|=n$) and ${\mathcal A}=2^{\Omega},$ the resulting space
$L_1(\Omega,{\mathcal A},\mu)$ is the $n$-dimensional $l_1$-space
$({\mathbb R}^n,d_1),$ where the $l_1$-distance between two points
$x=(x_1,\ldots,x_n)$ and $y=(y_1,\ldots,y_n)$ is
$d_1(x,y)=\sum_{i=1}^n |x_i-y_i|.$ A metric space $(X_n,d)$ on $n$
points is $l_1$-{\it embeddable} (and $d$ is called an $l_1$-{\it
  metric}) if there exists an isometric embedding of $(X_n,d)$ into
some $l_1$-space $({\mathbb R}^m,d_1)$. It is well known \cite[Chapter
  4]{DeLa} that the set of all $l_1$-metrics on $X_n$ forms a closed
cone CUT$_n$ in ${\mathbb R}^{\frac{n(n-1)}{2}}$, called the {\it cut
  cone}. CUT$_n$ is generated by the cut semimetrics $\delta_S$ for
$S\subseteq X_n,$ where $\delta_S(x,y)=1$ if $|S\cap \{ x,y\}|=1$ and
$\delta_S(x,y)=0$ otherwise. A well--known compactness result of
\cite{BrDhCaKr} implies that $L_1$--embeddability of a metric space is
equivalent to $l_1$--embeddability of its finite subspaces.

\subsection{Geodesics and geodesic metric spaces}

In this subsection, we recall some definitions and notations on
geodesic metric spaces; we closely follow the books
\cite{BrHa} and \cite{Pa}.  Let $(X,d)$ be a metric space.  A {\it
path} in $X$ is a continuous map $\gamma:[a,b]\rightarrow X$, where
$a$ and $b$ are two real numbers with $a\le b$.  If $\gamma(a)=x$ and
$\gamma(b)=y,$ then $x$ and $y$ are the {\it endpoints} of
$\gamma$ and that $\gamma$ joins $x$ and $y$.  A {\it geodesic path}
(or simply a {\it geodesic}) in $X$ is a path $\gamma:
[a,b]\rightarrow X$ that is distance-preserving, that is, such that
$d(\gamma(s),\gamma(t))=|s-t|$ for all $s,t\in [a,b].$ A {\it geodesic
line} (or simply a {\it line}) is a distance-preserving map
$\gamma:{\mathbb R}\rightarrow X$ and a {\it geodesic ray} (or simply
a {\it ray}) is a distance-preserving map $\gamma:
[0,\infty)\rightarrow X.$ A path $\gamma: [a,b]\rightarrow X$ is said
to be a {\it local geodesic} if for all $t$ in $(a,b)$ one can find a
closed interval $I(t) \subseteq [a,b]$ containing $t$ in its interior
such that the
restriction of $\gamma$ on $I(t)$ is geodesic.  A metric
space $X$ is {\it geodesic} if every pair of points in $X$ can be
joined by a geodesic. A {\it uniquely geodesic space} is a geodesic
space in which every pair of points can be joined by a unique
geodesic.

\subsection{Non-positively curved spaces}

We continue with the definitions of non-positively curved spaces in
the sense of Alexandrov \cite{BrHa} and of Busemann \cite{Pa}.  A {\it
geodesic triangle} $\Delta:=\Delta (x_1,x_2,x_3)$ in a geodesic metric
space $(X,d)$ consists of three points in $X$ (the vertices of
$\Delta$) and a geodesic between each pair of vertices (the edges of
$\Delta$). A {\it comparison triangle} for $\Delta (x_1,x_2,x_3)$ is a
triangle $\Delta (x'_1,x'_2,x'_3)$ in the Euclidean plane ${\mathbb
E}^2$ such that $d_{{\mathbb E}^2}(x'_i,x'_j)=d(x_i,x_j)$ for $i,j\in
\{ 1,2,3\}.$ A geodesic metric space $(X,d)$ is a {\it
CAT(0) space} (or a {\it non-positively curved space in the sense of
Alexandrov}) \cite{Gr} if for all geodesic triangles $\Delta
(x_1,x_2,x_3)$ of $X$, if $y$ is a point on the side of $\Delta(x_1,x_2,x_3)$ with
vertices $x_1$ and $x_2$ and $y'$ is the unique point on the line
segment $[x'_1,x'_2]$ of the comparison triangle
$\Delta(x'_1,x'_2,x'_3)$ such that $d_{{\mathbb E}^2}(x'_i,y')=
d(x_i,y)$ for $i=1,2,$ then $d(x_3,y)\le d_{{\mathbb
E}^2}(x'_3,y').$ CAT(0) spaces have many fundamental properties and can be
characterized in several natural ways (for a full account of
this theory consult the book \cite{BrHa}).

A {\it Busemann space} (or a {\it non-positively curved space in the
sense of Busemann}) is a geodesic metric space $(X,d)$ in which the
distance function between any two geodesics is convex: for any two
reparametrized geodesics $\gamma: [a,b]\rightarrow X$ and $\gamma':
[a',b']\rightarrow X$ the map $f_{\gamma,\gamma'}(t):[0,1]\rightarrow
{\mathbb R}$ defined by
$f_{\gamma,\gamma'}(t)=d(\gamma((1-t)a+tb),\gamma'((1-t)a'+tb'))$ is
convex.  Each CAT(0) space is a Busemann space, but not
vice-versa. However, Busemann spaces still satisfy most of fundamental
properties of CAT(0) spaces: they are contractible, uniquely geodesic,
local geodesics are geodesics, and geodesics vary continuously with
their endpoints (for these and other results on Busemann spaces
consult the book \cite{Pa}).  Busemann spaces and CAT(0) spaces are
the same in the case of smooth Riemannian manifolds and of piecewise
Euclidean or hyperbolic complexes (because in the latter case, the
CAT(0) property is equivalent to the uniqueness of geodesics; see
\cite[Theorem 5.4]{BrHa}).

\subsection{Busemann surfaces}

A {\it planar surface} $S$ is a 2-dimensional manifold without
boundary, i.e., $S$ is homeomorphic to  the plane ${\mathbb R}^2$.  A
geodesic metric space $(S,d)$ is called a {\it Busemann surface}
if $S$ is a planar surface and the metric space
$(S,d)$ is a Busemann space.  Notice that
each point $x$ of $S$ has an open neighborhood
$B(x,\epsilon)$ which is homeomorphic to an open ball in the plane.

Particular instances of Busemann surfaces are
non-positively curved piecewise-Euclidean (PE) (or piecewise
hyperbolic) planar complexes without boundary. In fact, as we will
show below, any finite non-positively curved planar complex can be
extended to a Busemann surface. Recall that a {\it
planar PE complex} $X$ is obtained from a (not necessarily finite)
planar graph $G$ by replacing each inner face of $G$ having $n$ sides
by a convex $n$-gon in the Euclidean plane. Then $X$ is called a {\it
regular planar complex} if each face of $G$ with $n$ sides is replaced
by a regular $n$-gon in the plane. Note that the graph $G$ is the
1-skeleton of $X$.  The complex $X$ is called a {\it non-positively
curved planar complex} if the sum of angles around each inner vertex
of $G$ is at least $2\pi$ or, equivalently, if $X$ endowed with the
intrinsic $l_2$-metric is  uniquely geodesic. We will call a planar graph $G$
a {\it Busemann graph} (or a {\it non-positively curved planar graph})
if $G$ is the 1-skeleton of a regular non-positively curved planar complex.  Basic
examples of Busemann graphs are so-called
(3,6),(4,4), and (6,3)-graphs (a planar graph $G$ embedded into the
plane is called a $(p,q)$-{\it graph} if the degrees of all inner
vertices are at least $p$ and all inner faces have lengths at least
$q$). It was shown in \cite{ChDrVa_jalg} (see also \cite[Proposition
8.6]{BaCh_survey}) that all (3,6),(4,4), and (6,3)-graphs are
$l_1$-embeddable (for other properties of these graphs, see \cite{BaPe}).
In this paper, we will present a Busemann
planar graph which is not $l_1$-embeddable.

To embed a finite non-positively curved planar complex
$X$ into a Busemann surface $S$, to each boundary edge
$e$ of $X$ we add a closed halfplane $H_e$ of ${\mathbb R}^2$ so that
$e$ is a segment of the boundary of $H_e$. If two boundary edges
$e,e'$ of $X$ share a common endvertex $x,$ then $H_e$ and $H_e'$ will be
glued along the rays of their boundaries emanating from $x$ which are
disjoint from $e$ and $e'$. 
It can be easily seen that the resulting planar surface $S$ is CAT(0) and
that $X$ isometrically embeds into $S$.

\subsection{Main results}

We continue with the formulation of the main results of this paper:

\begin{theorem} \label{main} If $(S,d)$ is a Busemann surface, then $(S,d)$ is $L_1$-embeddable.
\end{theorem}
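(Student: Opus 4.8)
The plan rests on four steps, two soft reductions and two carrying the geometric content. By the compactness theorem of \cite{BrDhCaKr} recalled above, it suffices to show that every \emph{finite} subspace $(Q,d)$ of $(S,d)$ is $\ell_1$-embeddable, i.e.\ that $d|_Q$ belongs to the cut cone $\mathrm{CUT}_{|Q|}$. I would then reduce further to the case where $Q$ is in \emph{general position} --- at a minimum: no three points of $Q$ lie on a common geodesic, the maximal geodesics through pairs of $Q$ form an arrangement with only transversal single crossings and no concurrences away from $Q$, and no such geodesic passes through a ``cone point'' of $S$ (a point of total angle exceeding $2\pi$). Since locally $S$ carries a Busemann metric on a planar domain and each of these conditions is codimension one, the configurations that fail general position form a nowhere dense subset of $S^{|Q|}$; hence an arbitrary $Q$ is a limit of general-position configurations $Q_k$, and as $d$ is continuous and $\mathrm{CUT}_n$ is closed, $d|_{Q_k}\in\mathrm{CUT}_n$ for all $k$ forces $d|_Q\in\mathrm{CUT}_n$.

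Next I would assemble the ``planar convex geometry'' of $(S,d)$. Using unique geodesicity of Busemann spaces and the planarity of $S$, I would show: every geodesic segment extends to a maximal geodesic, which --- because $S$ is a boundaryless surface --- is a proper embedding of an \emph{open} interval into $S\cong\mathbb{R}^2$, hence a properly embedded line splitting $S$ into two open regions $\ell^{+},\ell^{-}$; and two distinct maximal geodesics meet in a connected set. Granting the general-position hypothesis, the family $\mathcal{L}=\{\ell_{ij}:1\le i<j\le n\}$ of maximal geodesics through the pairs of $Q$ then behaves combinatorially like a line arrangement in the plane: any two members cross at most once, no $\ell_{ij}$ contains a third point of $Q$, and --- this is the point where cone points must be controlled, since the regions $\ell^{+},\ell^{-}$ need not be convex --- each geodesic segment $[q_i,q_j]$ meets each $\ell_{ab}$ with $\{a,b\}\cap\{i,j\}=\emptyset$ in at most one point and at no cone point, so that ``$\ell_{ab}$ separates $q_i$ from $q_j$'' really is a bipartition of $Q$.

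On this foundation I would run the combinatorial Crofton argument of R.\ Alexander \cite{Al}: construct non-negative weights $\lambda_\ell$, $\ell\in\mathcal{L}$, such that for all $i\neq j$
\[
d(q_i,q_j)=\sum_{\ell\in\mathcal{L}\,:\,\ell\text{ separates }q_i,q_j}\lambda_\ell .
\]
Because $[q_i,q_j]$ is a geodesic meeting each line of $\mathcal{L}$ at most once, the right-hand side is additive over ``collinear'' triples, so the $\binom{|Q|}{2}$ length identities are mutually consistent; that they admit a non-negative solution is the substance of Alexander's lemma, obtained from the local structure of $\mathcal{L}$ at each vertex of the arrangement together with the triangle inequalities of $(Q,d)$, and I expect this counting to transfer once the qualitative facts above are in place. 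Writing $S_\ell$ for one side of $\ell$, the identity becomes $d|_Q=\sum_{\ell\in\mathcal{L}}\lambda_\ell\,\delta_{S_\ell}$, which exhibits $d|_Q\in\mathrm{CUT}_n$ and closes the chain.

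I expect the main obstacle to be the interface between the two middle steps: isolating exactly the planar-convexity facts that Alexander's counting consumes and proving them for an \emph{arbitrary} Busemann surface, where ``sides'' and ``angles'' are not given in advance and where cone points genuinely break the naive Euclidean picture (notably the convexity of half-planes). Once those facts are secured --- unique prolongation of geodesics, two-sidedness of geodesic lines, single transversal crossings, and a pseudoline-arrangement structure on $\mathcal{L}$ with the correct separation property for $Q$ --- the Crofton bookkeeping and the two reductions (perturbation, compactness) should be essentially routine.
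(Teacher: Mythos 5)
Your skeleton coincides with the paper's (compactness reduction, reduction to general position, an Alexander-type Crofton formula), but the two steps that carry all of the content are left as expectations, and the Crofton step is mis-stated in a way that matters. You attach weights $\lambda_\ell$ to the maximal geodesics $\ell_{ij}$ through pairs of $Q$ and claim $d(q_i,q_j)=\sum\{\lambda_\ell : \ell \mbox{ separates } q_i,q_j\}$; since every line of your family passes through two points of $Q$, it never strictly separates a pair involving those points, and already for $|Q|=2$ or $3$ the right-hand side is empty while the left-hand side is positive. In Alexander's argument, and in the paper, the cut semimetrics are indexed by \emph{equivalence classes of lines avoiding $Q$}, i.e.\ by the bipartitions $(Q',Q'')$ they induce; the segments $[q_i,q_j]$ enter only as positive/negative \emph{extremal segments} (inner bitangents of $\conv(Q'),\conv(Q'')$ and boundary edges lying in the two ``triangular'' regions), and the weight of a class is $\tfrac12\sigma_t$ where $\sigma_t$ is a signed sum of their lengths. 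The two substantive facts are (i) $\sigma_t\ge 0$ and (ii) the double count showing each segment $[q,r]$ occurs equally often with sign $+$ and $-$ unless $\{q,r\}=\{p_i,p_j\}$, in which case it occurs twice positively. Both rest on a development of planar convexity in Busemann surfaces that you only gesture at: geodesic extension, separation of $S$ by a geodesic line, convexity of closed halfplanes and cones, Pasch and Peano, Carath\'eodory number $3$, the circular boundary structure of $\conv(Q)$, existence of exactly two inner bitangents of disjoint convex hulls, and the line-separability lemmas. Writing ``I expect this counting to transfer'' defers exactly the part of the proof that is not routine.

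The reduction to general position is also not secured as you state it. A bare Busemann surface carries no smooth or measure-theoretic structure, so ``each condition is codimension one, hence the exceptional set is nowhere dense'' is not an argument; moreover the extra genericity you demand (single transversal crossings, no concurrences away from $Q$, no geodesic of the arrangement through a ``cone point'') is neither needed nor clearly attainable: total angle is not even defined for a general Busemann metric, singular behavior could be dense, and extensions of $[q_i,q_j]$ to a line are not unique, so ``the'' maximal geodesic $\ell_{ij}$ is not well defined (the paper copes with this by proving that \emph{every} line extending an extremal segment is a tangent or inner bitangent). Your worry that cone points break convexity of halfplanes is unfounded: a closed halfplane bounded by a geodesic line is convex in every Busemann surface, by unique geodesicity. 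The paper's perturbation needs only ``no three points of $Q$ collinear'' and is explicit: move one point at a time by less than $\epsilon/2m$ in a direction avoiding the finitely many closed sets $C(p',q')\cup[p',q']\cup C(q',p')$, strictly decreasing the number of collinear triples; then $d_\epsilon\to d$ and closedness of $\mathrm{CUT}_n$ gives $d\in\mathrm{CUT}_n$, and the final compactness step is as you describe. Until the convexity lemmas, the correct indexing of the Crofton sum, the nonnegativity $\sigma_t\ge 0$, and a workable perturbation argument are supplied, the proposal is a plan rather than a proof.
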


\begin{corollary} \label{graph} Any Busemann graph $G$ endowed with its standard-graph metric $d_G$
admits a non-expansive  $L_1$-embedding with distortion at most $2+\pi/2$.
\end{corollary}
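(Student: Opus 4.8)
The plan is to deduce the corollary from Theorem~\ref{main} by realizing the Busemann graph $G$ inside a Busemann surface and then comparing the graph metric $d_G$ with the restriction of the surface metric. First I would recall from the preliminaries that a Busemann graph $G$ is, by definition, the $1$-skeleton of a regular non-positively curved planar complex $X$ (all cells being regular Euclidean polygons), and that the construction described at the end of Subsection~2.4 (adding halfplanes $H_e$ along boundary edges and gluing them along boundary rays) embeds $X$ isometrically into a CAT(0) — hence Busemann — planar surface $(S,d)$. By Theorem~\ref{main}, $(S,d)$ embeds isometrically into $L_1$; restricting this embedding to the vertex set $V(G)$ gives an isometric embedding of $(V(G),d|_{V(G)})$ into $L_1$. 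So the whole task reduces to estimating, for every pair of vertices $u,v$ of $G$, the ratio between $d_G(u,v)$ and $d(u,v)$, where $d$ is the intrinsic Euclidean ($l_2$-)metric of the complex.

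The key comparison has two directions. One inequality, $d(u,v)\le d_G(u,v)$, is immediate: any edge-path in $G$ from $u$ to $v$ is a curve in $X\subseteq S$ of the same length (each edge of a regular polygon has unit length, matching the graph metric), so the infimal curve length $d(u,v)$ is at most the combinatorial distance $d_G(u,v)$; thus the embedding obtained above is non-expansive (after the harmless rescaling that makes the $L_1$-distances equal to the $d$-distances, which are $\le d_G$). The harder inequality is the upper bound $d_G(u,v)\le (2+\pi/2)\,d(u,v)$. For this I would take a geodesic $\gamma$ in $S$ from $u$ to $v$ of length $\ell=d(u,v)$ and convert it into an edge-path in $G$ of length at most $(2+\pi/2)\ell$. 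The geodesic $\gamma$ passes through a sequence of cells of $X$ (and possibly through some of the added halfplanes $H_e$); within each cell it is a straight Euclidean segment. The strategy is to replace the portion of $\gamma$ crossing a polygonal cell $P$ by a walk along the boundary $\partial P$: a chord of a regular $n$-gon of side length $1$ can be ``rounded'' to a boundary path, and one bounds the total detour incurred over all traversed cells. The constant $2+\pi/2$ should come from an isoperimetric-type estimate: a Euclidean segment of length $t$ that stays inside a union of unit-side regular polygons meeting the non-positive-curvature (angle $\ge 2\pi$) condition can be approximated by a boundary walk whose length exceeds $t$ by a factor controlled by the worst-case ratio of a polygon's half-perimeter to its relevant diameter, and the $\pi/2$ term reflects the limiting case of polygons with many sides, whose boundary between two points is at most $\pi/2$ times the straight-line distance (half the circle-circumference-to-diameter ratio).

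The main obstacle I expect is precisely this detour estimate: controlling the cumulative overhead of replacing a geodesic segment by boundary arcs across a whole chain of cells, uniformly over all Busemann complexes, so that the global multiplicative constant is exactly $2+\pi/2$ rather than something larger. One has to handle cells entered and exited through the same edge or through adjacent edges, cells where $\gamma$ only clips a corner, and the halfplanes $H_e$ attached in the embedding of $X$ into $S$ (a geodesic of $S$ between two vertices of $G$ may well leave $X$; one must argue either that it can be taken to stay in $X$, by convexity of $X$ in $S$, or that excursions into $H_e$ only help). I would prove that $X$ is convex in $S$ — which follows because each $H_e$ is glued so that the union remains CAT(0) and $X$ is an intersection of halfspaces locally — so that $\gamma\subseteq X$ and only genuine polygonal cells need be analyzed; then the per-cell bound plus a telescoping argument over the cell-sequence yields the claim. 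The additive ``$2$'' in the bound presumably absorbs the first and last cells (where $u$ and $v$ are themselves vertices and the rounding is cheap) or the endpoints' partial traversals, while the $\pi/2$ factor governs the interior. Assembling these pieces and checking the constant is the crux; everything else is bookkeeping.
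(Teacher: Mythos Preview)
Your plan matches the paper's proof: embed $G$ in the Busemann surface via the regular complex $X$, invoke Theorem~\ref{main}, and compare $d_G$ with the intrinsic metric $d$ by rounding the geodesic onto the $1$-skeleton cell by cell. Two remarks dissolve the obstacles you anticipate.

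First, the geodesic-leaving-$X$ issue is a non-issue: since the paper already asserts that $X$ embeds \emph{isometrically} into $S$, for $x,y\in X$ the $X$-geodesic realizes $d_S(x,y)$, so you may work entirely inside $X$ and never see the attached halfplanes $H_e$; no separate convexity argument is needed.

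Second---and this is where your intuition about the constant goes astray---the bound $2+\pi/2$ is purely local, not the outcome of any global telescoping or endpoint correction. The paper proves (Lemma~\ref{cl:shortcut-face}) that for \emph{any} two boundary points $a,b$ of a single unit-side regular Euclidean polygon, the shorter boundary path has length at most $(2+\pi/2)\,d(a,b)$. When $a,b$ lie on non-incident sides one has $d(a,b)\ge 1$, so the two partial end-edges contribute at most $2\le 2\,d(a,b)$; the full edges in between are bounded by the corresponding arc of the circumscribed circle, whose arc-to-chord ratio is at most $\pi/2$. When $a,b$ lie on incident sides the factor is at most $2$. Summing this per-cell inequality over the faces crossed by the $X$-geodesic immediately gives $d_{G}(x,y)\le(2+\pi/2)\,d(x,y)$; there is no cumulative overhead to manage, and the ``$2$'' has nothing to do with the first and last cells of the chain.
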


To prove Theorem \ref{main}, by a compactness result of
\cite{BrDhCaKr}, it suffices to show
that for any finite subset $Q$ of $S,$ $(Q,d)$ is
$l_1$-embeddable. For this, first we show that any finite subset of
$S$ in general position (no three points on a common geodesic line) is
$l_1$-embeddable.  Then, using this result and a local perturbations
of points, we prove that there exists a sequence $d_i$ of $l_1$-metrics
on $Q$ converging to $d$. Since the cone  of $l_1$-metrics on
$Q$  is closed \cite{DeLa}, $(Q,d)$ is $l_1$-embeddable as well.  Our proof
that any finite set of $S$ in general
position is $l_1$-embeddable is based on a beautiful Crofton formula
by Alexander \cite{Al} established for finite point-sets in general position of
${\mathbb R}^2$ endowed with a metric in which lines are
geodesics (for another use of this formula, see \cite{ChFi}). To generalize
this result to Busemann surfaces
$S$, we extend to $S$ some elementary properties  of the usual convexity
in the plane (see also \cite{Ma} for some similar properties  of planar
 CAT(0) complexes and  \cite{DhGoHoPoSm}  for such properties for topological affine planes).
 This is done in Section \ref{sec:npcsurfaces}. The proof
of the Crofton formula for Busemann surfaces is given in
Section \ref{sec:crofton} and the proof of the main results is
completed in Section \ref{sec:main}.

\section{Geodesic lines and convexity}\label{sec:npcsurfaces}

In this section, we present elementary properties of geodesic lines
and convex sets in Busemann planar surfaces $(S,d)$. For
two points $x,y\in S,$ we  denote by $[x,y]$ the unique geodesic segment
joining $x$ and $y$.
A set of points $Q$ of $S$ is in {\it general position} if no three points
of $Q$ are collinear, i.e., lie on a common line of $S$. For three points $x,y,z$ of $S$, we will
denote by $\Delta^*(x,y,z)$  the closed region
of $S$ bounded by the geodesics $[x,y],[y,z],[z,x]$ of the geodesic triangle
$\Delta(x,y,z).$ A set $R\subseteq S$ is called {\it convex} if
$[p,q]\subseteq R$ for any $p,q\in R$. For a set $Q$ of $S$ we denote
by conv$(Q)$ the smallest convex set containing $Q$ and call conv$(Q)$
the {\it convex hull} of $Q$.

\subsection{Geodesic lines} \label{sec:lines}

A geodesic metric space $(X,d)$ is said to have the {\it geodesic
  extension property} if for every local geodesic $\gamma:
[a,b]\rightarrow X,$ with $a<b,$ there exists $\epsilon>0$ and a local
geodesic $\gamma': [a,b+\epsilon]\rightarrow X$ such that
$\gamma'|_{[a,b]}=\gamma$. In Busemann spaces local
geodesics are geodesics, hence for such spaces the geodesic extension
property is equivalent to the fact that the geodesic between any two
distinct points can be extended to a geodesic line. It was established
in Proposition 5.12 of \cite{BrHa} that any CAT(0) space that is
homeomorphic to a finite dimensional manifold has the geodesic
extension property. By \cite[Footnote 24]{BrHa} and the proof of this
proposition, an analogous statement holds for Busemann
spaces. Therefore, for Busemann surfaces $(S,d)$
we obtain:

\begin{lemma} \label{extension}  $S$ has the geodesic extension property.
\end{lemma}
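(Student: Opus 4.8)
The statement is that a Busemann surface $(S,d)$ has the geodesic extension property. The plan is to reduce this to the corresponding fact for CAT(0) manifolds established in Proposition 5.12 of \cite{BrHa}, and then to transfer the argument to the Busemann setting using the remark in \cite[Footnote 24]{BrHa}. Concretely, in a Busemann space local geodesics are already geodesics (this is recalled in the excerpt), so it suffices to prove that every geodesic segment $\gamma\colon[a,b]\to S$ with $a<b$ can be prolonged slightly past its endpoint $b$ to a local geodesic (equivalently, a geodesic) defined on $[a,b+\epsilon]$ for some $\epsilon>0$. Applying this repeatedly (and symmetrically at the other endpoint) then shows that any geodesic extends to a full geodesic line $\mathbb{R}\to S$.

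The key local step is the following: let $x=\gamma(b)$ and pick the open neighborhood $B(x,\delta)$ of $x$ that is homeomorphic to an open planar ball (this is exactly the property of planar surfaces noted in the Preliminaries). Inside this chart one runs the argument of \cite[Proposition 5.12]{BrHa}: the space of directions at $x$ (or, more elementarily, the geodesics of length $\delta/2$ issuing from $x$) is, by uniqueness of geodesics in Busemann spaces and continuous dependence of geodesics on their endpoints, parametrized by a topological circle, i.e.\ by $\partial B(x,\delta/2)$. The incoming geodesic $\gamma$ arrives at $x$ along one direction; one must produce an outgoing geodesic from $x$ that forms a local geodesic with $\gamma$ at $x$, i.e.\ such that the concatenation is locally length-minimizing near $x$. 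In the CAT(0) case this is done via the comparison inequality (the angle between the two sub-geodesics at $x$ must be $\pi$); in the Busemann case one instead uses the convexity of the distance function between geodesics, which is precisely the content of \cite[Footnote 24]{BrHa}, to show that such an opposite geodesic exists and that the concatenation is a local geodesic. Since local geodesics are geodesics in $S$, this gives the desired extension.

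I expect the main obstacle to be making the ``opposite direction exists'' step fully rigorous in the Busemann (as opposed to CAT(0)) category, since Busemann spaces lack the comparison-triangle machinery and one must argue directly from convexity of $t\mapsto d(\gamma(t),\gamma'(t))$ together with a topological (intermediate-value / degree) argument on the circle of directions $\partial B(x,\delta/2)$ to locate a direction producing a geodesic of length $>$ the sum that would be forced by a non-straight concatenation. However, this is exactly what is carried out in the proof of \cite[Proposition 5.12]{BrHa} as adapted by \cite[Footnote 24]{BrHa}, so for a Busemann surface the conclusion follows directly; the only thing specific to our setting that we invoke is that $S$, being a planar surface, is a $2$-manifold, so the finite-dimensionality hypothesis of that proposition is satisfied. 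Hence $S$ has the geodesic extension property, and in particular every geodesic segment of $S$ lies on a geodesic line.
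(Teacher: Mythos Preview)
Your proposal is correct and follows essentially the same approach as the paper: both invoke Proposition~5.12 of \cite{BrHa} for CAT(0) manifolds and then appeal to \cite[Footnote~24]{BrHa} to transfer the argument to the Busemann setting, using only that $S$ is a (2-dimensional) manifold. The paper simply states this reduction without the additional mechanistic commentary you provide about the circle of directions and the convexity argument.
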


The next lemma immediately follows from the definition of Busemann spaces.

\begin{lemma} \label{ball-convexity}
Closed balls of $S$  are convex.
\end{lemma}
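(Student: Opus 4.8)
The plan is to derive the convexity of closed balls directly from the definition of a Busemann space, arguing that the convexity of the distance function between geodesics forces the distance-to-a-point function to be convex along any geodesic. Fix a center $c\in S$ and a radius $r\ge 0$, and let $p,q$ be two points of the closed ball $\overline{B}(c,r)$, i.e. $d(c,p)\le r$ and $d(c,q)\le r$. We must show that the geodesic segment $[p,q]$ lies entirely in $\overline{B}(c,r)$, that is, $d(c,z)\le r$ for every $z\in[p,q]$.

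First I would set up the two geodesics to which the Busemann convexity condition will be applied. Let $\gamma\colon[0,1]\to S$ be an affine reparametrization of the geodesic segment $[p,q]$, so that $\gamma(0)=p$ and $\gamma(1)=q$, and let $\gamma'\colon[0,1]\to S$ be the constant ``geodesic'' $\gamma'(t)=c$ for all $t$ (a degenerate geodesic on the singleton $\{c\}$; if one prefers to avoid degenerate geodesics, take instead the geodesic $[c,c]$, or note that the convexity statement for geodesics passes to the limit as the second geodesic shrinks to a point, using continuity of $d$). Then the function $f_{\gamma,\gamma'}(t)=d(\gamma(t),c)$ is convex on $[0,1]$ by the defining property of Busemann spaces. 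A convex function on $[0,1]$ attains its maximum at an endpoint, so for every $t\in[0,1]$ we get
\[
d(\gamma(t),c)\;\le\;\max\{f_{\gamma,\gamma'}(0),\,f_{\gamma,\gamma'}(1)\}\;=\;\max\{d(p,c),\,d(q,c)\}\;\le\;r.
\]
Since every point $z\in[p,q]$ is of the form $\gamma(t)$ for some $t\in[0,1]$, this shows $z\in\overline{B}(c,r)$, hence $[p,q]\subseteq\overline{B}(c,r)$ and the ball is convex.

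The only real subtlety — the place I would be most careful — is the use of a constant map as one of the two geodesics in the Busemann condition: strictly speaking the definition quantifies over genuine (reparametrized) geodesics $\gamma'\colon[a',b']\to X$, and a constant path is only a geodesic in the trivial sense $a'=b'$. The clean fix is to invoke continuity of the metric together with continuity of geodesics in their endpoints (both recorded in the excerpt for Busemann spaces): pick any sequence of nondegenerate geodesic segments whose endpoints converge to $c$, apply the Busemann inequality to each, obtain convexity of each $f_{\gamma,\gamma'_n}$, and pass to the limit to conclude that $t\mapsto d(\gamma(t),c)$ is a pointwise limit of convex functions, hence convex. Everything else is the elementary fact that a convex function on a segment is bounded above by the larger of its endpoint values, so there is no computational content beyond this limiting argument.
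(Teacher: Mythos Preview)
Your proof is correct and is essentially the approach the paper has in mind: the paper does not give a proof at all, merely stating that the lemma ``immediately follows from the definition of Busemann spaces,'' and your argument spells out precisely that derivation (convexity of $t\mapsto d(\gamma(t),c)$ from the Busemann condition, together with the endpoint-maximum property of convex functions). Your care about the degenerate constant geodesic is appropriate and the limiting fix is fine, though one could equally note that the Busemann inequality applied to the two geodesics $[c,p]$ and $[c,q]$ (sharing the endpoint $c$) already gives $d(c,\gamma(t))\le (1-t)d(c,p)+t\,d(c,q)\le r$ directly.
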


\begin{lemma} \label{halfplanes}  Any geodesic line $\ell$ partitions $S$ into two connected components.
\end{lemma}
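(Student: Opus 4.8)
The plan is to use the Jordan curve theorem together with the uniqueness and non-crossing behaviour of geodesics in a Busemann surface. Concretely, let $\ell$ be a geodesic line in $S$, parametrized as a distance-preserving map $\gamma:{\mathbb R}\to S$. First I would argue that the image of $\ell$ is a closed subset of $S$ that is homeomorphic to ${\mathbb R}$: it is injective by the distance-preserving property, it is closed because $\gamma$ is proper (a geodesic has no accumulation points since $d(\gamma(s),\gamma(t))=|s-t|\to\infty$), and it is a topological embedding since $\gamma$ is a continuous injection that is proper into the locally compact Hausdorff space $S$. Because $S$ is homeomorphic to ${\mathbb R}^2$, a properly embedded copy of the real line is a Jordan arc "going to infinity in both directions," and by the classical two-ended version of the Jordan curve theorem (or by adding a point at infinity to turn $S\cup\{\infty\}$ into $S^2$ and $\ell\cup\{\infty\}$ into a Jordan curve in $S^2$), the complement $S\setminus\ell$ has exactly two connected components.

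The delicate point — and the one I expect to be the main obstacle — is not the existence of two components but showing that passing to the one-point compactification really does turn $\ell\cup\{\infty\}$ into an embedded circle, i.e. that $\ell$ is not just closed but \emph{properly} embedded so that its two ends both converge to the added point $\infty$. For this I would invoke that $S$ is a planar surface homeomorphic to ${\mathbb R}^2$, and that the one-point compactification of ${\mathbb R}^2$ is $S^2$; a closed subset of ${\mathbb R}^2$ homeomorphic to ${\mathbb R}$ whose two ends are not relatively compact (which holds here since $\gamma$ is proper) closes up to an embedded circle in $S^2$. Then the Jordan curve theorem in $S^2$ gives exactly two complementary regions $U_1,U_2$ in $S^2\setminus(\ell\cup\{\infty\})$. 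Exactly one of them, say $U_2$, contains a deleted neighbourhood of $\infty$; removing $\infty$ back again, $U_1$ stays a connected open set and $U_2\setminus\{\infty\}$ is still connected (removing a boundary point of an open planar region does not disconnect it), so $S\setminus\ell$ has precisely the two components $U_1$ and $U_2\setminus\{\infty\}$.

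As an alternative, more self-contained route that avoids compactification, I would instead exhaust $S$ by an increasing sequence of closed balls $B_n=B(x_0,n)$ (which are convex by Lemma~\ref{ball-convexity}, hence in particular connected and, being closed balls in a surface homeomorphic to the plane with the geodesic extension property, homeomorphic to closed disks), intersect $\ell$ with each $B_n$ to get a geodesic segment, extend it to a chord of the disk $B_n$, apply the ordinary Jordan arc theorem inside the disk to split $B_n\setminus\ell$ into two pieces, and check that these splittings are compatible as $n$ grows so that in the limit $S\setminus\ell$ has exactly two components. The compatibility step uses that $\ell\cap B_n$ separates $\ell\cap B_{n+1}$ consistently, which follows from uniqueness of geodesics: a geodesic from a point on one side to a point on the other side must meet $\ell$.

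Finally I would record the conclusion: $S\setminus\ell$ has exactly two connected components, and since $\ell$ is closed, these are open; moreover each is convex-like in the weak sense that it is "on one side" of $\ell$, which is what will be used in the sequel. I will refer to the closures of these two components as the two closed halfplanes bounded by $\ell$. This completes the proof of Lemma~\ref{halfplanes}.
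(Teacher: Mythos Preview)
Your approach is correct in outline and genuinely different from the paper's. You argue purely topologically: the geodesic line $\gamma:\mathbb{R}\to S$ is a proper embedding (this is right---$\gamma$ is continuous, injective, and preimages of compact sets are closed and bounded in $\mathbb{R}$ since $d(\gamma(0),\gamma(s))=|s|$), hence extends to an embedding of $S^1\cong\mathbb{R}\cup\{\infty\}$ into $S^+\cong S^2$, and the Jordan curve theorem finishes. The paper instead argues locally and by contradiction: it picks a small convex ball $B(x,\epsilon)$ meeting $\ell$ in a segment $[p,q]$, takes $u,v$ on opposite local sides, and shows that any path $\gamma$ from $u$ to $v$ avoiding $\ell$ would, together with $[u,v]$, bound a region into which a ray of $\ell$ enters but cannot exit without producing two distinct geodesics between two points---contradicting uniqueness of geodesics. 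Your route uses only that $\ell$ is a properly embedded copy of $\mathbb{R}$ and is more topological; the paper's route leans on the Busemann structure (unique geodesics) and on a Jordan-type bounded-region argument, but stays within elementary planar reasoning.

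One point in your write-up is muddled: after compactifying, the point $\infty$ lies \emph{on} the Jordan curve $\ell\cup\{\infty\}$, not in either complementary region. So your sentence ``exactly one of them, say $U_2$, contains a deleted neighbourhood of $\infty$'' is wrong---both $U_1$ and $U_2$ accumulate at $\infty$ (every point of a Jordan curve is on the boundary of both sides). Fortunately this only simplifies matters: since $\infty$ is on the curve, $S^2\setminus(\ell\cup\{\infty\})=S\setminus\ell$ on the nose, and the Jordan curve theorem immediately gives exactly two components. No ``removing $\infty$ back again'' step is needed; you can delete that paragraph. Your alternative exhaustion-by-balls sketch is plausible but would need more care (e.g.\ why closed metric balls in $S$ are disks) and is not needed once the compactification argument is cleaned up.
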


\begin{proof}
Let $x$ be a point of $\ell$ and let $B(x,\epsilon)$ be a closed ball
centered at $x$. Since $B(x,\epsilon)$ is convex and $S$ is a planar
surface, $B(x,\epsilon)\cap \ell=[p,q]$ and the segment $[p,q]$
partitions $B(x,\epsilon)$ in two connected components. Let $u,v$ be
two points from different components of $B(x,\epsilon/2)\setminus
[p,q]$. By Lemma~\ref{ball-convexity}, $p \notin [u,v]$. Since
$B(x,\epsilon)$ is convex, necessarily $[u,v]$ intersects $[p,q]$; let
$x'\in [u,v]\cap [p,q]$.  Now, if $\ell$ does not separate $S$, then
$u$ and $v$ can be connected by a path $\gamma$ not intersecting
$\ell$.  Let $R$ be the region of $S$ bounded by $\gamma$ and
$[u,v]$. Then one of the points $p,q,$ say $p$, belongs to $R$.  The
ray $r_{x'}$ of $\ell$ emanating from $x'$ and passing via $p$ enters
the region $R$, hence $r_{x'}$ must intersect the boundary of $R$ in a
point $w$ different from $x'$. Then $p\in [x',w]$.  Since $\ell\cap
\gamma=\varnothing,$ necessarily $w \in [u,v]$ and we conclude that $x'$
and $w$ are joined by two different geodesics, one is a portion of
$\ell$ passing via $p \notin [u,v]$ and the second is a
portion of $[u,v],$ a contradiction.
\end{proof}


For a line $\ell$, we denote by $\mathring{H}^-_{\ell}$
and $\mathring{H}^+_{\ell}$ the two connected components
of $S\setminus \ell$, respectively, and call them {\it open
halfplanes}. The {\it closed halfplanes} defined by $\ell$ are the
sets $H^-_{\ell}=\mathring{H}^-_{\ell}\cup \ell$ and
$H^+_{\ell}=\mathring{H}^+_{\ell}\cup \ell$. Since each
line $\ell$ is convex, the following result is straightforward.

\begin{lemma} \label{closed_halfplane}  The closed halfplanes $H^-_{\ell}$ and $H^+_{\ell}$  are
convex sets of $S$.
\end{lemma}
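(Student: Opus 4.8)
The plan is to argue by contradiction, using only Lemma~\ref{halfplanes}, the convexity of the line $\ell$, and unique geodesity of the Busemann space $S$; it suffices to treat $H^+_{\ell}$, since $H^-_{\ell}$ is handled symmetrically. I would fix $p,q\in H^+_{\ell}$ and suppose the geodesic $[p,q]$ is not contained in $H^+_{\ell}$. Since $H^+_{\ell}=\mathring{H}^+_{\ell}\cup\ell$ and $S\setminus\ell=\mathring{H}^-_{\ell}\cup\mathring{H}^+_{\ell}$ (Lemma~\ref{halfplanes}), this forces a point $z\in[p,q]\cap\mathring{H}^-_{\ell}$. Parametrizing $[p,q]$ as a geodesic $\gamma\colon[0,L]\to S$ with $\gamma(0)=p$, $\gamma(L)=q$, $L=d(p,q)$, I would write $z=\gamma(s)$; as $p,q\notin\mathring{H}^-_{\ell}$ we have $z\neq p,q$, hence $0<s<L$.

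The next step is to produce two crossings of $[p,q]$ with $\ell$. The arc $\gamma([0,s])$ is connected and contains $p\in H^+_{\ell}$ and $z\in\mathring{H}^-_{\ell}$; were it disjoint from $\ell$, it would lie in a single connected component of $S\setminus\ell$, necessarily $\mathring{H}^-_{\ell}$ because it contains $z$, contradicting $p\notin\mathring{H}^-_{\ell}$. So there is $a=\gamma(t_a)\in\gamma([0,s])\cap\ell$, and $t_a<s$ because $z\notin\ell$. Symmetrically, $\gamma([s,L])$ meets $\ell$ at some $b=\gamma(t_b)$ with $t_b>s$. In particular $t_a<t_b$, so $a\neq b$.

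To conclude: $a$ and $b$ both lie on $\ell$, which is convex, so by unique geodesity the segment $[a,b]$ -- the unique geodesic joining them -- is contained in $\ell$. But the reparametrized subarc $\gamma|_{[t_a,t_b]}$ is also a geodesic from $a$ to $b$, and it runs through $z=\gamma(s)$ with $t_a<s<t_b$; by uniqueness $z\in[a,b]\subseteq\ell$, contradicting $z\in\mathring{H}^-_{\ell}$. Hence $[p,q]\subseteq H^+_{\ell}$ and the lemma follows. I expect the only delicate point to be the extraction of \emph{two distinct} intersection points $a,b$ of $[p,q]$ with $\ell$ (so that a hypothetical excursion of $[p,q]$ into $\mathring{H}^-_{\ell}$ genuinely yields two distinct geodesics between the same pair of points); everything else is a routine application of Lemma~\ref{halfplanes} together with the convexity of lines already recorded above.
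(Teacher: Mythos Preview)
Your argument is correct and is precisely the ``straightforward'' proof the paper alludes to without writing out: the paper simply remarks that convexity of $H^\pm_\ell$ follows from the convexity of the line $\ell$, and your contradiction argument (two distinct crossings with $\ell$ forcing the intermediate point onto $\ell$ by unique geodesity) is exactly how one makes that remark rigorous.
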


\begin{lemma} \label{two-lines}
Let $\ell$ and $\ell'$ be two intersecting geodesic lines such that
$\ell'$ is contained in the closed halfplane $H_{\ell}^+$ defined by
$\ell$, $x \in \ell \cap \ell'$, and let $r_1,\ldots r_4$ be the four
rays emanating from $x$ defined as in Figure~\ref{fig:two-lines}. Then
$r_1 \cup r_3$ and $r_2 \cup r_4$ are also geodesic lines.
\end{lemma}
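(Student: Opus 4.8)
The plan is to show that the concatenation $r_1 \cup r_3$ is a local geodesic at $x$ (the only non-trivial point), since being a local geodesic away from $x$ is immediate because $r_1, r_3$ are portions of the genuine geodesic lines $\ell, \ell'$; then, because $S$ is a Busemann space, local geodesics are geodesics, so $r_1 \cup r_3$ is a geodesic line, and symmetrically for $r_2 \cup r_4$. To verify the local-geodesic condition at $x$, it suffices to pick points $p \in r_1$ and $q \in r_3$ close to $x$ and show $d(p,q) = d(p,x) + d(x,q)$, i.e.\ that $x$ lies on the geodesic $[p,q]$, or equivalently that $[p,q]$ is exactly the broken path $[p,x]\cup[x,q]$.

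The key geometric input is the hypothesis that $\ell'$ lies in the closed halfplane $H_\ell^+$, together with Lemma~\ref{halfplanes} and Lemma~\ref{closed_halfplane}. First I would fix the labelling of the four rays as in the figure so that $r_1$ and $r_2$ are the two subrays of $\ell$ emanating from $x$, while $r_3$ and $r_4$ are the two subrays of $\ell'$; the condition $\ell' \subseteq H_\ell^+$ forces both $r_3$ and $r_4$ to lie in $\mathring H_\ell^+$ except for their common endpoint $x$, so in a small ball $B(x,\epsilon)$ the line $\ell$ separates $B(x,\epsilon)$ into the component containing (a neighborhood of $x$ along) $r_3$ and $r_4$ and the other component, which meets neither $r_3$ nor $r_4$. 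Now take $p \in r_1$ with $d(p,x) < \epsilon/2$ and $q \in r_3$ with $d(q,x)<\epsilon/2$, so both lie in $B(x,\epsilon)$. The geodesic $[p,q]$ stays inside $B(x,\epsilon)$ by convexity of balls (Lemma~\ref{ball-convexity}). Since $p \in \ell$ and $q \in \mathring H_\ell^+$, and $H_\ell^-$ is convex (Lemma~\ref{closed_halfplane}), the only point of $[p,q]$ that can lie on $\ell = H_\ell^-\cap H_\ell^+$ is $p$ itself; hence $[p,q]\setminus\{p\}$ lies in the open halfplane $\mathring H_\ell^+$. I claim $x \in [p,q]$: if not, then $[p,x]\cup[x,q]$ and $[p,q]$ are two distinct paths from $p$ to $q$ inside the disk $B(x,\epsilon)$, bounding a region $R$; using that $r_1$ and $r_3$ emanate from $x$ into opposite sides of $\ell'$ (by the ray labelling — $r_1$ is on the $H_{\ell'}^-$ side and $r_3\subseteq \ell'$), one runs the same ``a ray entering a bounded region must exit through its boundary'' argument as in the proof of Lemma~\ref{halfplanes} to produce two distinct geodesics between a pair of points, contradicting unique geodesicity. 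Therefore $d(p,q)=d(p,x)+d(x,q)$.

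A cleaner way to phrase the last step, which I would prefer, is to use unique geodesicity directly: the broken path $[p,x]\cup[x,q]$ has the property that near $x$ it coincides with $\ell$ on one side and with $\ell'$ on the other; if it were strictly longer than $[p,q]$, then $[p,q]$ would have to cross from the $\ell'$-side configuration, and since $\ell'\subseteq H_\ell^+$ pins $r_3,r_4$ to one side of $\ell$ while $r_1,r_2$ are the two sides of $\ell$, a short case analysis on which side of $\ell$ and of $\ell'$ the interior of $[p,q]$ lies shows it cannot avoid re-crossing one of $\ell,\ell'$ and hence duplicating a geodesic sub-arc. Thus $r_1\cup r_3$ is a local geodesic through $x$, hence a geodesic line; the argument for $r_2\cup r_4$ is identical after swapping the roles of the halfplanes. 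The main obstacle I anticipate is purely bookkeeping: making the ray-labelling convention from the figure precise enough that ``$r_1$ and $r_3$ are on opposite sides of $\ell'$'' and ``$r_3,r_4$ lie in $H_\ell^+$'' are simultaneously unambiguous, so that the separation/region arguments can be invoked without hidden ambiguity about which broken path is being compared to $[p,q]$.
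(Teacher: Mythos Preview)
Your local-geodesic strategy is sound, and the underlying mechanism---a crossing with one of the ``other'' two rays forces two distinct geodesics between two points of the \emph{same} original line $\ell$ or $\ell'$---is exactly what the paper uses. But the paper's execution is more direct and sidesteps the two issues you flag. In the main case $\ell\cap\ell'=\{x\}$ the paper does not pass through the local-geodesic property at all: it takes \emph{arbitrary} $y\in r_2$ and $z\in r_4$, reduces to the situation $[y,z]\cap(r_2\cup r_4)=\{y,z\}$, and observes that $[y,z]$ must then meet $r_1$ or $r_3$ in some point $z'$; since $y,z'\in\ell'$ (if $z'\in r_3$) or $z,z'\in\ell$ (if $z'\in r_1$), one has two distinct geodesics between them, contradicting unique geodesicity. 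Because this argument is symmetric in $r_1$ and $r_3$, your bookkeeping worry simply evaporates: one never has to decide which ray of $\ell$ pairs with which ray of $\ell'$, only that the four rays together exhaust a neighbourhood of $x$.

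Two further points. The paper treats the case $\ell\cap\ell'=[x',x'']$ with $x'\ne x''$ separately---there your ``the only non-trivial point is $x$'' framing breaks down---by noting that $r_2\cup r_4$ is covered by two rays of $\ell,\ell'$ sharing the nondegenerate overlap $[x',x'']$, hence is immediately a local geodesic. And your sketch invokes ``$r_1$ is on the $H_{\ell'}^-$ side'', which does not follow from the hypothesis $\ell'\subseteq H_\ell^+$; the paper's argument never needs any such statement.
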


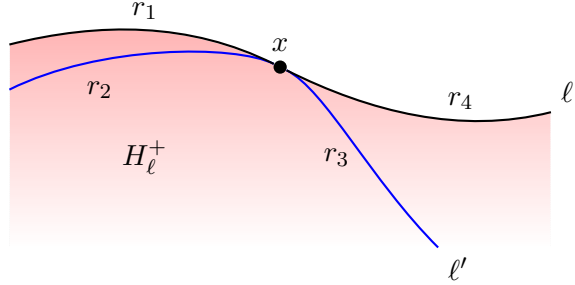
\begin{figure}
\begin{center}
\begin{tikzpicture}[x=0.3cm,y=0.3cm,>=latex]
\fill[shading=axis,top color = red!30,bottom color = white,draw=white]
(1,17) .. controls (5,18) and (9,18) .. (13,16)
         .. controls (17,14) and (21,13) .. (25,14)
         -- (25,8) -- (1,8) -- (1,17);
\Vertex{black} (nx13y16) at (13,16) {};
\draw[blue,thick] (nx13y16) .. controls (15,15) and (17,11) .. (20,8);
\draw[blue,thick] (1,15) .. controls (5,17) and (11,17) .. (nx13y16);
\draw[black,thick] (nx13y16) .. controls (17,14) and (21,13) .. (25,14);
\draw[black,thick] (1,17) .. controls (5,18) and (9,18) .. (nx13y16);
\draw (15.5,12) node {$r_3$};
\draw (5,15) node {$r_2$};
\draw (21,14.5) node {$r_4$};
\draw (7,18.5) node {$r_1$};
\draw (20,8) node[anchor = north west] {$\ell'$};
\draw (25,14) node[anchor = south west] {$\ell$};
\draw (7,12) node {$H^+_\ell$};
\draw (13,17) node {$x$};
\end{tikzpicture}
\end{center}
\caption{Illustration to Lemma~\ref{two-lines}.} 
\label{fig:two-lines}
\end{figure}

\begin{proof}
We will prove that $\ell_0=r_2 \cup r_4$ is a geodesic line. If $\ell
\cap \ell'= [x',x'']$ with $x' \ne x''$, then $\ell_0$ is a local
geodesic because it is covered by the rays of $\ell$ and $\ell'$
emanating from $x'$ and $x''$ sharing the geodesic segment
$[x',x'']$. Thus $\ell_0$ is a geodesic line.

Now, suppose that $\ell \cap \ell'=\{ x\}$. Pick any two points $y \in
r_2$ and $z \in r_4$ and suppose by way of contradiction that $[y,z]$
is not contained in $\ell_0$. Moreover, we can suppose without loss of
generality that $\ell_0 \cap [y,z]=\{ y,z\}$. But then $[y,z]$
necessarily intersects one of the rays $r_1$ or $r_3$, say there
exists $z'\in [y,z] \cap r_3$.  Then we obtain two distinct geodesics
between $y$ and $z'$: one along $\ell'$ and the second along $[y,z]$.
\end{proof}

\begin{lemma} \label{Pasch} (Pasch axiom)
If $\Delta(x,y,z)$ is a geodesic triangle, $u \in [x,y]$, and $z \in
[x,v],$ then $[u,v] \cap [y,z] \ne \varnothing$ (see Figure~\ref{fig:axioms}(a)).
\end{lemma}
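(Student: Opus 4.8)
The plan is to exploit that geodesic lines separate $S$ into convex halfplanes (Lemmas~\ref{halfplanes} and~\ref{closed_halfplane}). We may assume $v\neq z$, since otherwise $z\in[u,v]\cap[y,z]$; and we may assume that $\Delta(x,y,z)$ is nondegenerate, i.e., that $x,y,z$ do not lie on a common geodesic line of $S$, since in the collinear case the claim follows from a short case analysis on the order of the points along that line. Using Lemma~\ref{extension}, extend $[y,z]$, $[x,v]$ and $[x,y]$ to geodesic lines $m$, $\ell$ and $n$; note that $[x,z]\subseteq[x,v]\subseteq\ell$. By nondegeneracy, $x\notin m$, $y\notin\ell$ and $z\notin n$, so each of $m$, $\ell$, $n$ splits $S$ into two closed convex halfplanes; orient the notation so that $x\in\mathring{H}^+_m$, $y\in\mathring{H}^+_\ell$ and $z\in\mathring{H}^+_n$.

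The idea is to separate $u$ from $v$ by the line $m$ and then use $\ell$ and $n$ to pin the resulting crossing point onto $[y,z]$. The steps are: (1) $u\in H^+_m\cap H^+_\ell\cap H^+_n$, immediate from Lemma~\ref{closed_halfplane} since $u\in[x,y]$ and both endpoints of $[x,y]$ lie in each of these closed halfplanes; (2) $v\in H^+_\ell$ trivially (as $v\in\ell$), and $v\in H^+_n$ because $v$ lies on the ray of $\ell$ from $x$ through $z$, a ray meeting $n$ only at $x$ (otherwise $\ell$ and $n$ share a nondegenerate segment issuing from $x$, forcing $z\in n$ or $x,y,z$ collinear), so this ray, being connected, disjoint from $n$ off $x$ and containing $z\in\mathring{H}^+_n$, stays in $\mathring{H}^+_n$; and (3) the crucial claim $v\in\mathring{H}^-_m$. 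Granting (3), $[u,v]$ is a path from the closed halfplane $H^+_m$ into the disjoint open set $\mathring{H}^-_m$, so by connectedness it contains a point $w\in m$; by (1), (2) and Lemma~\ref{closed_halfplane}, $[u,v]\subseteq H^+_\ell\cap H^+_n$, hence $w\in m\cap H^+_\ell\cap H^+_n$. Finally $m$ meets $\ell$ at $z$ with $y$ strictly on the $+$ side of $\ell$, so $m\cap H^+_\ell$ is the closed sub-ray of $m$ from $z$ containing $y$; symmetrically $m\cap H^+_n$ is the closed sub-ray of $m$ from $y$ containing $z$; and the intersection of these two sub-rays is exactly $[y,z]$. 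Hence $w\in[u,v]\cap[y,z]$.

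The main obstacle is claim (3): that $v$ lies \emph{strictly} across $m$ from $x$. Writing $[x,v]=[x,z]\cup[z,v]$, one first argues that $[x,z]\cap m=\{z\}$ — else $[x,z]$ shares a nondegenerate segment with $m$, and since sub-segments of $m$ issuing from $z$ are totally ordered and geodesics are unique, this again yields collinearity — so that $[x,z]\setminus\{z\}\subseteq\mathring{H}^+_m$, and it remains to see that $[z,v]\setminus\{z\}$ enters $\mathring{H}^-_m$. The genuine difficulty, and the reason the 2‑manifold hypothesis (via the preceding lemmas) is essential, is that in a Busemann surface two geodesic lines through a common point need not cross transversally: near a cone point one line may be \emph{tangent} to another from one side, so a priori $\ell$ could lie entirely in $H^+_m$, keeping $[z,v]$ on $x$'s side. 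To exclude this I would localize in a small convex ball $B=B(z,\varepsilon)$ (convex by Lemma~\ref{ball-convexity}, homeomorphic to a disk): inside $B$, $\ell\cap B$ and $m\cap B$ are geodesic chords through the centre $z$, each separating $B$; if $\ell\cap B$ lay in the closed half‑disk bounded by $m\cap B$, then $\ell$ would be contained in a closed halfplane of $m$, so Lemma~\ref{two-lines} applied to $(m,\ell)$ would yield a geodesic line formed from a ray of $m$ issuing from $z$ together with the ray of $\ell$ issuing from $z$ towards $x$; depending on the pairing this line passes through $z$, $x$, and an endpoint of $[y,z]$ on $m$, and the subcase in which that endpoint is $y$ contradicts nondegeneracy. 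The complementary ``thin'' subcase is the one where the triangle's edges $[z,x]$ and $[z,y]$ share a nondegenerate initial segment, and there the conclusion $[u,v]\cap[y,z]\neq\varnothing$ is checked directly (one shows $[u,v]$ is forced either through that shared segment or through a portion of $[y,z]$). The remaining bookkeeping — the symmetric description of $m\cap H^+_n$, and the collinear case — is routine.
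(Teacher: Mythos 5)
Your overall strategy (separate $u$ from $v$ by a line $m$ extending $[y,z]$ and pin the crossing point using the lines through $[x,v]$ and $[x,y]$) is different from the paper's, but it has a genuine gap exactly at the step you yourself flag as crucial, claim (3): in a Busemann surface it is simply not true that $v$ must lie strictly on the far side of $m$, and your patch via Lemma~\ref{two-lines} does not close the hole. Concretely, take the Euclidean cone of total angle $3\pi$ with apex $z$ (a CAT(0), hence Busemann, surface), put $y,x,v$ at unit distance from $z$ in directions $0$, $\pi/4$, $11\pi/8$, and let $m$ extend $[y,z]$ beyond $z$ in direction $3\pi/2$. Then $z\in[x,v]$ (both angles at $z$ between $[z,x]$ and $[z,v]$ are $\ge\pi$), the points are not collinear, $[z,x]$ and $[z,y]$ share no segment, and yet $v$ lies strictly on the same side of $m$ as $x$: here $\ell\subset H^+_m$, so Lemma~\ref{two-lines} applies, but the ray pairing it produces is the one you do not treat --- the $x$-ray of $\ell$ continues into the ray of $m$ \emph{away} from $y$ (no collinearity of $x,y,z$), while the $v$-ray of $\ell$ continues into the ray of $m$ through $y$, i.e.\ one only learns that $z\in[y,v]$. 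This configuration is neither your ``endpoint is $y$'' contradiction nor your ``thin'' subcase, so your case analysis misses it; and in it the transversal-crossing mechanism gives nothing (in the model example Pasch holds because $[u,v]$ is forced through the apex $z$ itself, a phenomenon your argument never produces). Since geodesic extensions are non-unique at such branch points, one could hope to rescue (3) by \emph{choosing} the extension $m$ suitably, but you never make such a choice, and justifying that a separating extension always exists is essentially the whole difficulty. (Smaller issues of the same origin: in step (2) the dichotomy ``$z\in n$ or $x,y,z$ collinear'' is not exhaustive, since $[x,z]$ could share a segment with $n$ on the far side of $x$ from $y$ --- repairable via convexity of $\ell\cap n$; and the identification of $m\cap H^+_\ell$ with a sub-ray from $z$ also presupposes $m\cap\ell=\{z\}$.)

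For contrast, the paper avoids any transversality statement at $z$: it first reduces to $[u,v]\cap[x,z]=\varnothing$, shows $v\notin\Delta^*(x,y,z)$ (a ray of a line extending $[x,v]$ would otherwise meet $[x,y]$ or $[y,z]$ and violate convexity of the intersection of two geodesics), and then, taking the line $\ell$ extending $[x,y]$, notes that $[u,v]$ must meet $r_y\cup[y,z]\cup[z,x]\cup r_x$ while $[u,v]\cap\ell$ is convex and misses $x$ and $y$, forcing the intersection to lie on $[y,z]$. If you want to keep your halfplane framework, you would need either to prove the existence of an extension of $[y,z]$ that strictly separates $x$ from $v$ (which fails as stated and at best requires a careful choice plus the degenerate case $z\in[y,v]$ handled separately), or to switch to an argument of the paper's type that only uses convexity of intersections with lines.
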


\begin{proof} The assertion is obvious if $u \in \{x,y\}$ or $v \in [y,z]$. So, let
$u \notin \{x,y\}$ and $v \notin [y,z]$. If there exists a point $v'
\in [u,v] \cap [x,z]$, then $z \in [u,v]$ by convexity of $[u,v] \cap
[x,v]$. So, we can further suppose that $[u,v] \cap [z,x] =
\varnothing$.

Note that $v \notin \Delta^*(x,y,z)$. Indeed, if $v \in
\Delta^*(x,y,z)$, consider any line $\ell$ extending
$[x,v]$. The ray from $\ell$ emanating from $v$ and not containing $x$ intersects
$[x,y]$ or $[y,z]$. But in this case $\ell \cap [x,y]$ or
$\ell \cap [y,z]$ is not convex.

Let $\ell$ be a line extending $[x,y]$.  Let $r_x$ and $r_y$ be the
two disjoint rays of $\ell$ emanating from $x$ and $y$, respectively.
$[u,v]$ intersects $r_y \cup [y,z] \cup [z,x] \cup r_x$. Because
$[u,v] \cap l$ is convex and does not contain $x$ nor $y$, $[u,v] \cap
(r_x \cup r_y) = \varnothing$. Therefore $[u,v] \cap [y,z] \neq \varnothing$.
\end{proof}

For two distinct points $x,y\in S,$ let $C(x,y)=\{ z\in S: x\in
[y,z]\}$ and $C(y,x)=\{ z\in S: y\in [x,z]\};$ we will call the
sets $C(x,y)$ and $C(y,x)$ {\it cones}. Since $S$ satisfies the
geodesic extension property, the set $C(x,y)\cup [x,y]\cup C(y,x)$ can
be equivalently defined as the union of all geodesic lines extending $[x,y]$.

\begin{lemma} \label{cones}
$C(x,y)$ and $C(y,x)$ are convex and closed subsets of $S$.
\end{lemma}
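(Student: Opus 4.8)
The plan is to treat the two claims separately: closedness of $C(x,y)$ is essentially immediate, while convexity is the real content, which I would derive from the Pasch axiom (Lemma~\ref{Pasch}).

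\emph{Closedness.} I would first record the reformulation $C(x,y)=\{z\in S: d(y,z)=d(y,x)+d(x,z)\}$. The inclusion $\subseteq$ is just the definition of $x\in[y,z]$; for the converse, if $d(y,z)=d(y,x)+d(x,z)$, then the concatenation of $[y,x]$ and $[x,z]$ is a path from $y$ to $z$ of length $d(y,z)$, hence a geodesic, and since $S$ is uniquely geodesic this path coincides with $[y,z]$, so $x\in[y,z]$. Thus $C(x,y)$ is the preimage of the singleton $\{d(x,y)\}$ under the continuous function $z\mapsto d(y,z)-d(x,z)$, and is therefore closed. The same argument applies to $C(y,x)$.

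\emph{Convexity.} Let $z_1,z_2\in C(x,y)$ and $w\in[z_1,z_2]$; I must show $x\in[y,w]$. If $x\in\{z_1,z_2\}$ or $w\in\{z_1,z_2\}$, the conclusion follows at once by chasing the betweenness relations (for instance, if $x=z_1$ then $w\in[x,z_2]\subseteq[y,z_2]$ with $y,x,w$ occurring in this order along $[y,z_2]$), so I may assume all four points are distinct. Now I would apply Lemma~\ref{Pasch} to the geodesic triangle $\Delta(z_1,y,w)$: since $x\in[z_1,y]$ and $w\in[z_1,z_2]$, the Pasch axiom produces a point $a\in[x,z_2]\cap[y,w]$. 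Because $z_2\in C(x,y)$, the geodesic $[y,z_2]$ passes through $x$ and hence contains $[x,z_2]$, so $a$ lies on $[y,z_2]$ with $y,x,a$ in this order; consequently $x\in[y,a]$. Finally $a\in[y,w]$ forces $[y,a]\subseteq[y,w]$, and therefore $x\in[y,w]$, as desired. By symmetry ($x\leftrightarrow y$), $C(y,x)$ is convex as well.

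I do not expect a genuine obstacle here: the whole point is to recognize that Lemma~\ref{Pasch} applied to $\Delta(z_1,y,w)$ is exactly the right tool, after which the argument reduces to elementary bookkeeping of orders of points along geodesics (using the uniqueness of geodesics in $S$ repeatedly to identify sub-segments). The only mildly delicate part is dispatching the degenerate configurations, which is routine.
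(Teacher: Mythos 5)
Your proof is correct and follows essentially the same route as the paper: closedness via the characterization $d(y,z)=d(y,x)+d(x,z)$ and continuity of the distance, and convexity by a single application of the Pasch axiom (the paper intersects $[y,w]$ with $[z_1,x]$ rather than $[x,z_2]$, but this is the same argument up to swapping the roles of the two cone points). No gaps worth noting.
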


\begin{proof}
Let $u,v \in C(x,y)$ and $w \in [u,v], w \notin \{u,v\}$. By Pasch
axiom, there exists a point $w' \in [y,w] \cap [u,x]$. Since $w' \in
[u,x] \subset C(x,y)$, we conclude that $x \in [w',y] \subset [w,y]$,
whence $w \in C(x,y)$. This shows that $C(x,y)$ is convex. To show that
$C(x,y)$ is closed, let $\{u_i\}$ be a sequence of points of $C(x,y)$
converging to a point $u \in S$. Since $\{d(y,u_i)\}$ converges to
$d(y,u)$, $\{d(x,u_i)\}$ converges to $d(x,u)$, and $d(y,x) + d(x,u_i)
= d(y,u_i),$ we conclude that $d(y,x) + d(x,u) = d(y,u)$, hence $u\in
C(x,y)$.
\end{proof}

\begin{lemma} \label{triangle}  $\Delta^*(x,y,z)$ is the convex hull of $\{x,y,z\}$.
\end{lemma}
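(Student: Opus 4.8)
The plan is to prove the two inclusions $\conv\{x,y,z\} \subseteq \Delta^*(x,y,z)$ and $\Delta^*(x,y,z) \subseteq \conv\{x,y,z\}$ separately. For the first inclusion, since $\conv\{x,y,z\}$ is by definition the smallest convex set containing the three vertices, it suffices to show that $\Delta^*(x,y,z)$ is itself a convex set containing $x,y,z$; then $\conv\{x,y,z\}\subseteq\Delta^*(x,y,z)$ follows immediately.

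To see that $\Delta^*(x,y,z)$ is convex, I would take two points $p,q\in\Delta^*(x,y,z)$ and argue that the whole segment $[p,q]$ stays inside the region. The natural tool here is the Pasch axiom (Lemma~\ref{Pasch}) together with the halfplane machinery: extend each side of the triangle, say $[x,y]$, to a geodesic line $\ell_{xy}$. By Lemma~\ref{halfplanes} and Lemma~\ref{closed_halfplane}, $\ell_{xy}$ determines two convex closed halfplanes, and $z$ lies (weakly) in one of them, call it $H_{xy}$; similarly define $H_{yz}$ and $H_{xz}$. I would argue that $\Delta^*(x,y,z) = H_{xy}\cap H_{yz}\cap H_{xz}$: the ``$\subseteq$'' direction is clear since the triangular region lies on the appropriate side of each extended side, and the ``$\supseteq$'' direction requires showing a point in the intersection of the three halfplanes cannot escape the bounded region cut out by the three geodesic sides — here one uses that each side $[x,y]$ is exactly $\ell_{xy}\cap H_{yz}\cap H_{xz}$ (a point of $\ell_{xy}$ outside the segment $[x,y]$ lies strictly outside one of the other two halfplanes, which follows because otherwise $\ell_{xy}$ would meet another side in a non-convex set, exactly as in the proof of Lemma~\ref{Pasch}). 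Once $\Delta^*(x,y,z)$ is expressed as a finite intersection of convex sets, its convexity is immediate, and it obviously contains $x,y,z$.

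For the reverse inclusion $\Delta^*(x,y,z)\subseteq\conv\{x,y,z\}$, I would take an arbitrary point $p\in\Delta^*(x,y,z)$ and show it lies in $\conv\{x,y,z\}$ by a two-step ``coning'' argument. Draw the geodesic ray from $x$ through $p$; since $p$ is in the region bounded by the three sides and on the far side of $[y,z]$ from $x$ (using the halfplane description above, or Pasch directly), this ray, extended past $p$, must meet the opposite side $[y,z]$ at some point $w$ — this is precisely a Pasch-type statement. Then $w\in[y,z]\subseteq\conv\{x,y,z\}$ and $p\in[x,w]\subseteq\conv\{x,y,z\}$ by convexity of the hull. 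One must be slightly careful about the degenerate cases where $p$ lies on one of the sides or equals a vertex, but those are trivial. Thus every point of $\Delta^*(x,y,z)$ is in the hull, completing the proof.

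The main obstacle I anticipate is the ``$\supseteq$'' direction of the identity $\Delta^*(x,y,z) = H_{xy}\cap H_{yz}\cap H_{xz}$, i.e., genuinely pinning down that the region ``bounded by'' the three geodesic sides coincides with this intersection of halfplanes. The subtlety is purely topological/combinatorial: one needs that the three extended lines pairwise meet only along the sides of the triangle (their extensions do not re-enter the region), and that the bounded complementary region of the union of the three sides is exactly the triple intersection of the halfplanes. This should follow by repeatedly invoking non-convexity of an intersection of a line with a geodesic segment (the recurring contradiction used in Lemmas~\ref{Pasch} and~\ref{two-lines}), plus Lemma~\ref{halfplanes} for the Jordan-curve-type separation, but it requires a careful case analysis of how a putative escaping point or escaping subsegment of $[p,q]$ would have to cross the boundary. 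Alternatively, one can bypass the explicit halfplane identity and prove convexity of $\Delta^*(x,y,z)$ directly: given $p,q$ in the region, if $[p,q]$ left the region it would cross one of the three sides, say $[x,y]$, in two points (or be tangent), and then applying Pasch to the relevant sub-triangles yields two distinct geodesics between a pair of points — the same style of contradiction. I would present whichever version turns out cleaner, but the Pasch-plus-halfplane route seems most in keeping with the lemmas already assembled.
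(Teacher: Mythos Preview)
Your proposal is correct and follows the same two-inclusion strategy as the paper, but the execution differs in both halves in ways worth noting. For convexity, the paper does \emph{not} establish the identity $\Delta^*(x,y,z)=H_{xy}\cap H_{yz}\cap H_{xz}$; instead it argues directly by contradiction: if $[p,q]$ escapes, shrink to a maximal escaping subarc so that $p,q$ lie on two different sides of $\Delta$ (say $p\in[x,y]$, $q\in[x,z]$), then extend $[x,y]$ to a line $\ell$ and observe that $z$, hence $q$, hence all of $[p,q]$, lies in the closed halfplane of $\ell$ containing $\Delta^*$ --- which forces $p\in\{x,y\}$ (and symmetrically $q\in\{x,z\}$), so $s\in\Delta$ after all. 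This sidesteps exactly the obstacle you flagged: the paper never needs the global ``$\supseteq$'' direction of the halfplane identity, only the local fact that near an interior point of a side the $z$-side is the inside of $\Delta^*$. Your approach would work too, but is more labor.

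For the inclusion $\Delta^*\subseteq\conv\{x,y,z\}$, the paper takes an \emph{arbitrary} line $\ell$ through the interior point $w$; each ray of $\ell$ from $w$ must exit the bounded region through $\Delta$, yielding $u,v\in\Delta$ with $w\in[u,v]\subseteq\conv\{x,y,z\}$. Your version --- coning from the vertex $x$ --- also works, but note that the statement ``the ray from $x$ through $p$ meets $[y,z]$'' is precisely the Peano axiom (Lemma~\ref{Peano}), which the paper proves \emph{after} this lemma and \emph{using} it. So to avoid circularity you would need to argue it directly (ruling out exits through $[x,y]$ or $[x,z]$ by unique geodesics, as you implicitly intend); the paper's choice of an arbitrary line dodges this bookkeeping entirely.
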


\begin{proof}
First we show that $\Delta^*:=\Delta^*(x,y,z)$ is
convex. Suppose by way of contradiction that for two points $p,q \in
\Delta^*$, $[p,q]$ contains a point $s\in S \setminus
\Delta^*$. Without loss of generality, we can assume that $[p,q] \cap
\Delta^* = \{p,q\}$ and that $p \in [x,y]$, $q \in [x,z]$. Let
$\ell$ be a line extending $[x,y]$. By convexity of closed halfplanes,
$[p,q]$ and $z$ must be in the same halfplane defined by $\ell$, thus, since
$[p,q] \cap \Delta^*=\{p,q\}$, the only possible positions for $p$ are $p \in \{x,
y\}$. Similarly, $q \in \{x, z\}$. This implies that $s \in \Delta \subset
\Delta^*$, a contradiction.

Now, we show that $\Delta^* \subseteq \conv(x,y,z)$. Notice that
$\Delta(x,y,z) \subseteq \conv(x,y,z)$. Let $w \in \Delta^* \setminus
\Delta(x,y,z)$. Let $\ell$ be any line containing $w$. The two rays of
$\ell$ emanating from $w$ must each intersect $\Delta$, say in $u$ and $v$
respectively. Hence $w \in [u,v] \subset \conv(x,y,z)$.
\end{proof}

\subsection{Geodesic convexity}\label{sec:convexity}
In this subsection, we establish some elementary properties of convex sets in Busemann
surfaces $(S,d)$.

\begin{lemma} \label{Peano} (Peano axiom)
 If $\Delta(x,y,z)$ is a geodesic triangle, $p\in [x,y],$ $q\in
[x,z],$ and $u\in [p,q],$ then there exists a point $v\in [y,z]$ such
that $u\in [x,v]$. (see Figure~\ref{fig:axioms}(b))
\end{lemma}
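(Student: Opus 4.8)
The plan is to use the geodesic extension property (Lemma~\ref{extension}) to turn the point $u\in[p,q]$ into a ray, and then use the Pasch axiom (Lemma~\ref{Pasch}) to locate where that ray meets $[y,z]$. Concretely, first extend the segment $[x,u]$ to a geodesic line $\ell$, and let $r$ be the ray of $\ell$ emanating from $x$ that passes through $u$. The claim will be that $r$ meets $[y,z]$ in a point $v$, and then automatically $u\in[x,v]$ since $u$ lies on the geodesic segment of $r$ between $x$ and any point of $r$ beyond $u$; we just need $v$ to be beyond $u$ on $r$, which will follow from the geometry below.

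The heart of the argument is showing $r\cap[y,z]\neq\varnothing$. First I would dispose of the degenerate cases: if $p\in\{x,y\}$ or $q\in\{x,z\}$, or if $u\in\{p,q\}$, the statement is immediate (e.g. if $p=x$ then $u\in[x,q]\subseteq\Delta(x,y,z)$ and one takes $v$ appropriately, using that $[x,q]\subseteq[x,z]$ or extending). So assume $p,q,u$ are all interior. By Lemma~\ref{triangle}, $u\in[p,q]\subseteq\conv(x,y,z)=\Delta^*(x,y,z)$. Now I consider the ray $r'$ of $\ell$ emanating from $x$ \emph{opposite} to $r$. Since $u$ is interior to $\Delta^*$ (it is not on $[x,y]$ or $[x,z]$, because $[p,q]\cap\Delta(x,y,z)$ would then force $u\in\{p,q\}$ by convexity of intersections of segments with lines), the ray $r'$ immediately leaves $\Delta^*$: indeed, arguing as in the proof of Pasch, $\ell$ cannot meet $[x,y]$ or $[x,z]$ in more than the point giving a convex (hence connected) intersection, and since $\ell$ already contains the interior point $u$ it meets the boundary of $\Delta^*$ only on the $[y,z]$ side and on the two sides $[x,y],[x,z]$ in at most the single further crossing. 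Following $r$ from $x$ (which starts at the vertex $x$ on the boundary, goes into the interior through $u$, and must exit the bounded region $\Delta^*$), $r$ must cross $\partial\Delta^*=[x,y]\cup[y,z]\cup[z,x]$ at a point other than $x$; that crossing cannot be on $[x,y]$ or $[z,x]$ (convexity of $\ell\cap[x,y]$ and $\ell\cap[z,x]$, which contain $x$, would then force $r$ to run along $[x,y]$ or $[z,x]$, contradicting that $r$ passes through the interior point $u$). Hence the crossing is a point $v\in[y,z]$.

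Finally I would check $u\in[x,v]$: the portion of $r$ from $x$ to $v$ is a geodesic segment (sub-arc of the line $\ell$), it contains $x$ and $v$, and it contains $u$ — because $u$ lies between $x$ and the exit point $v$ along $r$ (every point of $r$ strictly between $x$ and where $r$ first leaves the compact convex-ish region is interior, and the only candidate exit points we found all lie on $\partial\Delta^*$, with the relevant one being $v$). So $[x,v]$, being the unique geodesic between $x$ and $v$, equals this sub-arc and contains $u$, as desired. The main obstacle I anticipate is the topological step — making rigorous that the ray $r$ starting at the boundary vertex $x$ and passing through an interior point of $\Delta^*$ must exit through $[y,z]$ and not through the two sides incident to $x$; this is exactly the kind of Jordan-curve-plus-convexity argument already used in the proofs of Lemma~\ref{halfplanes} and Lemma~\ref{Pasch}, so I would lean on those techniques (region bounded by a Jordan curve, a ray entering it must cross the boundary again, and convexity of $\ell\cap(\text{side})$ to exclude the wrong sides) rather than redo them from scratch.
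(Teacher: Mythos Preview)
Your proposal is correct and follows essentially the same approach as the paper: extend $[x,u]$ to a line using Lemma~\ref{extension}, invoke Lemma~\ref{triangle} to place $u$ in $\Delta^*(x,y,z)$, and argue that the outgoing ray must meet $[y,z]$ because a crossing with $[x,y]$ or $[x,z]$ would (by unique geodesics) force $u$ onto one of those sides. The only cosmetic difference is that the paper takes the ray emanating from $u$ away from $x$ rather than from $x$ through $u$, which makes the exclusion of the sides $[x,y],[x,z]$ a one-line consequence of unique geodesics and lets it dispense with the boundary cases in a single sentence.
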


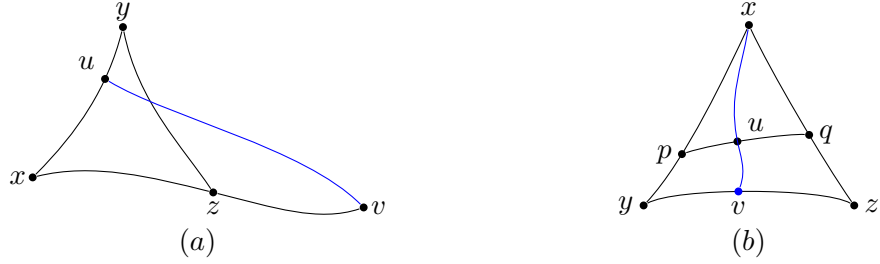
\begin{figure}
\begin{center}
\begin{tabular}{cc}
\begin{minipage}{0.45\textwidth}
\begin{center}
\begin{tikzpicture}[x=0.2cm,y=0.2cm,>=latex]
\vertex{black} (nx28y11) at (28,11) {};
\vertex{black} (nx18y12) at (18,12) {};
\vertex{black} (nx6y13) at (6,13) {};
\vertex{black} (nx12y23) at (12,23) {};
\draw[black] (nx18y12) .. controls (22,11) and (25,10) .. (nx28y11);
\draw[black] (nx18y12) .. controls (14,13) and (10,14) .. (nx6y13);
\draw[black] (nx12y23) .. controls (13,18) and (16,15) .. (nx18y12);
\draw[black] (nx6y13) .. controls (9,16) and (11,19) .. (nx12y23)
  node[circle,inner sep = 0pt,minimum size =3pt,fill = black,pos=0.7] (u) {};
\draw[blue] (nx28y11) .. controls (24,15) and (15,17) .. (u);
\draw (u) node[anchor = south east] {$u$};
\draw (29,11) node {$v$};
\draw (18,11) node {$z$};
\draw (5,13) node {$x$};
\draw (12,24) node {$y$};
\end{tikzpicture}
\end{center}
\end{minipage} &
\begin{minipage}{0.45\textwidth}
\begin{center}
\begin{tikzpicture}[x=1.4cm,y=1.2cm,>=latex]
\vertex{black} (x) at (1,2) {};
\vertex{black} (y) at (0,0) {};
\vertex{black} (z) at (2,0) {};

\draw[black] (x) .. controls (0.8,1.5) and (0.4,0.5) .. (y)
 node[circle,inner sep = 0pt,minimum size =3pt,fill = black,pos=0.7] (p) {};
\draw[black] (x) .. controls (1.2,1.5) and (1.7,0.5) .. (z)
 node[circle,inner sep = 0pt,minimum size =3pt,fill = black,pos=0.6] (q) {};
\draw[black] (y) .. controls (0.3,0.2) and (1.7,0.2) .. (z)
 node[circle,inner sep = 0pt,minimum size =3pt,fill = blue,pos=0.46] (v) {};
\draw[black] (p) .. controls (0.7,0.7) and (1.4,0.8) .. (q)
 node[circle,inner sep = 0pt,minimum size =3pt,fill = black,pos=0.4] (u) {};

\draw[blue] (x) to[out=260,in=100] (u) to[out=280,in=70] (v);

\draw (x) node[anchor=south] {$x$};
\draw (y) node[anchor=east] {$y$};
\draw (z) node[anchor=west] {$z$};
\draw (p) node[anchor=east] {$p$};
\draw (q) node[anchor=west] {$q$};
\draw (u) node[anchor=south west] {$u$};
\draw (v) node[anchor=north] {$v$};
\end{tikzpicture}
\end{center}
\end{minipage}\\
$(a)$ & $(b)$
\end{tabular}
\end{center}

\caption{Pasch and Peano axioms.} 
\label{fig:axioms}
\end{figure}

\begin{proof}
We can suppose that $u \notin \Delta(x,y,z),$ otherwise the result is
obvious. By Lemma \ref{triangle}, the point $u$ belongs to
$\Delta^*(x,y,z).$ Let $\ell$ be a geodesic line extending $[x,u]$ and
let $r_{u}$ be the ray of $\ell$ emanating from $u$ and not passing via
$x$. Necessarily $r_u$ will intersect one of the sides of
$\Delta^*(x,y,z)$.  Since $u\notin [x,y]\cup [x,z]$, $x\in \ell\setminus r_u$, and $S$
is uniquely geodesic, $r_u$ necessarily intersects $[y,z]$ in a point
$v$. Then $u \in [x,v],$ and we are done.
\end{proof}

It is well known \cite{VdV} that Peano axiom is equivalent to the following property,
called {\it join-hull commutativity}: for any convex set $A$ and any
point $p \notin A,$ conv$(p \cup A) = \cup \{ [p,x]: x \in A\}$. As a
consequence, we obtain:

\begin{lemma} \label{jhc} $S$ is join-hull commutative.
\end{lemma}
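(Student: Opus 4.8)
The plan is to invoke directly the equivalence between the Peano axiom and join-hull commutativity, which the excerpt has already cited from \cite{VdV}, together with Lemma~\ref{Peano} just established. So the proof is essentially a one-line deduction. Let me think about how to phrase it so that the logical reduction is clean and so that I actually check the hypotheses under which the equivalence applies.

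First I would recall what needs to be shown: for every convex set $A \subseteq S$ and every point $p \in S \setminus A$, we have $\conv(p \cup A) = \bigcup\{[p,x] : x \in A\}$. One inclusion, $\conv(p\cup A) \supseteq \bigcup\{[p,x]: x\in A\}$, is immediate since $\conv(p\cup A)$ is convex and contains both $p$ and every $x \in A$. The nontrivial inclusion is that $\bigcup\{[p,x]: x\in A\}$ is itself convex (it obviously contains $p\cup A$), and this is exactly where the Peano axiom is used: given two points $u \in [p,x]$ and $u' \in [p,x']$ with $x,x' \in A$, one applies Lemma~\ref{Peano} to the geodesic triangle $\Delta(p,x,x')$ to find a point $v \in [x,x'] \subseteq A$ (convexity of $A$) with any point on $[u,u']$ lying on $[p,v']$ for a suitable $v' \in [x,x']$ — more precisely, one shows each point of $[u,u']$ lies on some segment $[p, w]$ with $w$ in the convex hull of $\{x, x'\}$, hence in $A$. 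Since $S$ is uniquely geodesic and has the geodesic extension property (Lemmas in the excerpt), all the triangles and line extensions used in Lemma~\ref{Peano} are available. Thus the abstract equivalence of \cite{VdV} applies verbatim.

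The main (only) obstacle is making sure that the setting of \cite{VdV} — convexity spaces where segments $[x,y]$ are well-defined — genuinely matches $S$: here $[x,y]$ is the unique geodesic segment (guaranteed since $S$ is uniquely geodesic), convex sets are as defined in Section~\ref{sec:npcsurfaces}, and Lemma~\ref{Peano} gives precisely the Peano property in the form required. Given that, nothing further is needed; the lemma follows. I will write the proof as a short deduction citing \cite{VdV} and Lemma~\ref{Peano}.

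\begin{proof}
By a classical result of \cite{VdV}, in any convexity space in which convex sets are closed under taking geodesic segments, the Peano axiom is equivalent to join-hull commutativity, i.e., to the property that $\conv(p \cup A) = \bigcup\{[p,x] : x \in A\}$ for every convex set $A$ and every point $p \notin A$. Since $S$ is uniquely geodesic, for any two points the segment $[x,y]$ is well defined, and Lemma~\ref{Peano} establishes that $(S,d)$ satisfies the Peano axiom. Hence $S$ is join-hull commutative.

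For completeness, we indicate the nontrivial inclusion. Fix a convex set $A$ and a point $p \notin A$, and let $T = \bigcup\{[p,x] : x \in A\}$. Clearly $p \cup A \subseteq T \subseteq \conv(p \cup A)$, so it suffices to prove that $T$ is convex. Let $u \in [p,x]$ and $u' \in [p,x']$ with $x, x' \in A$, and let $w \in [u,u']$. Applying Lemma~\ref{Peano} to the geodesic triangle $\Delta(p,x,x')$ (with $u \in [p,x]$, $u' \in [p,x']$, $w \in [u,u']$), we obtain a point $v \in [x,x']$ with $w \in [p,v]$. Since $A$ is convex, $v \in [x,x'] \subseteq A$, whence $w \in [p,v] \subseteq T$. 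Thus $[u,u'] \subseteq T$, so $T$ is convex and $T = \conv(p \cup A)$.
\end{proof}
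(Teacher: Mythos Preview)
Your proof is correct and follows exactly the paper's approach: the paper simply states Lemma~\ref{jhc} as an immediate consequence of the equivalence (cited from \cite{VdV}) between the Peano axiom and join-hull commutativity, together with Lemma~\ref{Peano}. Your additional ``for completeness'' paragraph spelling out the nontrivial inclusion is a correct application of Lemma~\ref{Peano} and goes slightly beyond what the paper writes, but the core argument is the same.
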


We continue by establishing that the Carathéodory number of $S$ equals 3:

\begin{lemma} \label{Caratheodory}
For any finite set $Q$ of $S$, $\conv(Q)=\cup\{ \Delta^*(x,y,z): x,y,z\in Q\}.$
\end{lemma}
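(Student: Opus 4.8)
The inclusion $\bigcup\{\Delta^*(x,y,z):x,y,z\in Q\}\subseteq\conv(Q)$ is immediate: by Lemma~\ref{triangle} each $\Delta^*(x,y,z)$ equals $\conv(\{x,y,z\})$, and this is contained in $\conv(Q)$ as soon as $x,y,z\in Q$. The content is the reverse inclusion, which I would prove by induction on $|Q|$. For $|Q|\le 3$ it is exactly Lemma~\ref{triangle} (when $|Q|<3$ one allows coincident vertices, and $\conv(Q)$ is a single point or a segment). For the inductive step, assume the identity for all sets of at most $n$ points and let $|Q|=n+1\ge 4$. Fix $p\in Q$ and put $Q'=Q\setminus\{p\}$. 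Join-hull commutativity (Lemma~\ref{jhc}) gives $\conv(Q)=\bigcup\{[p,w]:w\in\conv(Q')\}$; the induction hypothesis gives $\conv(Q')=\bigcup\{\Delta^*(x,y,z):x,y,z\in Q'\}$; and a second application of Lemma~\ref{jhc}, now to the convex set $\Delta^*(x,y,z)$, yields $\bigcup\{[p,w]:w\in\Delta^*(x,y,z)\}=\conv(\{p\}\cup\Delta^*(x,y,z))=\conv(\{p,x,y,z\})$. Combining these, $\conv(Q)=\bigcup\{\conv(\{p,x,y,z\}):x,y,z\in Q'\}$, so the whole lemma reduces to the case of four points:
\[
\conv(\{p,x,y,z\})=\Delta^*(p,x,y)\cup\Delta^*(p,x,z)\cup\Delta^*(p,y,z)\cup\Delta^*(x,y,z).
\]

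To prove this four-point identity I would once more invoke Lemma~\ref{jhc}: $\conv(\{p,x,y,z\})=\bigcup\{[p,w]:w\in\Delta^*(x,y,z)\}$, so it is enough to show that each segment $[p,w]$ with $w\in\Delta^*(x,y,z)$ lies in the union of the four triangles. If $p\in\Delta^*(x,y,z)$, then $[p,w]\subseteq\Delta^*(x,y,z)$ since $\Delta^*(x,y,z)$ is convex (Lemma~\ref{triangle}), and we are done. Otherwise $p\notin\Delta^*(x,y,z)$; then $[p,w]\cap\Delta^*(x,y,z)$ is a convex subset of the segment $[p,w]$, closed and containing $w$ but not $p$, hence a subsegment $[s,w]$, and the points of $[p,s)$ lie outside $\Delta^*(x,y,z)$. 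Consequently $s$ is a boundary point of the region $\Delta^*(x,y,z)$, i.e.\ it lies on one of its sides, say $s\in[x,y]$. Then $[s,w]\subseteq\Delta^*(x,y,z)$, while $[p,s]\subseteq\Delta^*(p,x,y)$ because $\Delta^*(p,x,y)$ is convex and contains both $p$ and $s\in[x,y]\subseteq\Delta^*(p,x,y)$; so $[p,w]=[p,s]\cup[s,w]\subseteq\Delta^*(p,x,y)\cup\Delta^*(x,y,z)$. This settles the four-point identity and closes the induction.

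The one step requiring care is the assertion that the endpoint $s$ of $[p,w]\cap\Delta^*(x,y,z)$ lies on $\Delta(x,y,z)=[x,y]\cup[y,z]\cup[z,x]$; this amounts to the planar-topology fact that $\Delta(x,y,z)$ is the topological boundary of $\Delta^*(x,y,z)$ in $S$ (an interior point of $\Delta^*(x,y,z)$ has a whole neighborhood inside $\Delta^*(x,y,z)$ and so cannot be approached by points of $[p,w]$ outside it), a Jordan-curve type argument in the same spirit as the proof of Lemma~\ref{halfplanes}. I expect this to be the only real obstacle; once it is granted, everything else is routine bookkeeping with join-hull commutativity and the convexity of triangles. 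One could alternatively try to prove directly that $\bigcup\{\Delta^*(x,y,z):x,y,z\in Q\}$ is convex, but that route seems to need the same boundary analysis, so I would keep the inductive formulation above.
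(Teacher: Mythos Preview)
Your proof is correct and follows the same approach as the paper: induction on $|Q|$, join--hull commutativity, and the boundary-crossing step that a segment from a point outside $\Delta^*(x,y,z)$ to a point inside must meet one of its sides. The paper is slightly more direct---it does not first isolate the four-point case but argues in one stroke that if $p\in[x,q]\setminus K^x$ with $q\in\Delta^*(y,z,v)$, then $[x,q]$ meets some side $[y,z]$ at a point $q'$ with $p\in[x,q']\subseteq\Delta^*(x,y,z)$---and it, too, takes the topological boundary step you flag for granted.
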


\begin{proof}
We proceed by induction on $n=|Q|.$ Let $x \in Q$ and suppose
that the assertion holds for the set $Q^x = Q \setminus
\{x\}$. Let $K^x = \conv(Q^x)$. Pick $p\in
\conv(Q).$ If $p \in K^x,$ we are done by induction
assumption. So, let $p \notin K^x$. Since
$\conv(Q) = \conv(x \cup K^x)$, by join-hull commutativity
there exists a point $q\in K^x$ such that $p \in [x,q]$. By induction
assumption, there exists three points $y,z,v \in Q^x$ such that $q \in
\Delta^* = \Delta^*(y,z,v)$. The geodesic segment $[x,q]$ necessarily
intersects one of the sides $[y,z],[z,v],[v,y]$ of $\Delta^*$, say there exists
$q' \in [x,q]\cap [y,z]$. But then $p \in [x,q']$ and we conclude that
$p \in \conv (x,y,z)$ with $x,y,z \in Q$.
\end{proof}

Let $Q$ be a nonempty finite set of points of $S$ and $K:=\conv(Q)$. Let
$Q_0$ denote the set of all points $u\in Q$ such that $u$ does not
belong to any triangle $\Delta^*(x,y,z)$ with $x,y,z \in Q$ and $u\ne
x,y,z$. By Lemma \ref{Caratheodory}, $Q_0 \ne \varnothing$, moreover
$\conv(Q_0) = \conv(Q).$ We call the points of $Q_0$ {\it
extremal points} of $Q$ (or of $K$). A line $\ell$ is called a {\it
tangent line} (or, simply a {\it tangent}) of $K$ if $K \cap
\ell \ne \varnothing$ and $K$ is contained in one of the closed halfplanes
defined by $\ell$.  A geodesic segment $[x,y]$ is called an {\it edge}
of $K$ if $x,y\in Q$ and some line $\ell$ extending $[x,y]$ is a
tangent of $K$. Clearly, each edge of $K$ belongs to the boundary of
$K$. A geodesic line $\ell$ is called a {\it bitangent} of two
disjoint convex sets $K'$ and $K''$ if $\ell$ is a tangent line of
$K'$ and $K''$. A bitangent $\ell$ of $K',K''$ is called an {\it inner
bitangent} if $K'$ and $K''$ belong to different closed halfplanes
defined by $\ell$.

\begin{lemma} \label{edges_general_position}
If $Q$ is a nonempty finite set in general position of $S,$ then $[p,q]$ is an
edge of $K := \conv(Q)$ if and only if any line $\ell$ extending
$[p,q]$ is a tangent of $K$. If $(Q',Q'')$ is a bipartition of $Q$
such that the convex hulls $K' = \conv(Q'), K'' = \conv(Q'')$
are disjoint and $\ell$ is an inner bitangent of $K',K''$ such that
$p' \in Q' \cap \ell$ and $p'' \in Q'' \cap \ell,$ then any geodesic line
extending $[p',p'']$ is an inner bitangent of $K',K''$.
\end{lemma}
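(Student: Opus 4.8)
The plan is to treat the two assertions separately, both via the same basic mechanism: a line through a point of the hull on its boundary is a tangent precisely when the hull does not "cross" the line, and the general-position hypothesis forces this.

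For the first assertion, one direction is trivial: if every line extending $[p,q]$ is a tangent, then in particular $[p,q]$ is an edge by definition. For the converse, suppose $[p,q]$ is an edge, so some line $\ell_0$ extending $[p,q]$ is a tangent, say $K\subseteq H^+_{\ell_0}$. Let $\ell$ be an arbitrary line extending $[p,q]$. By the geodesic extension property (Lemma~\ref{extension}) and uniqueness of geodesics, $\ell$ and $\ell_0$ share the segment $[p,q]$, so each of $\ell,\ell_0$ is the union of $[p,q]$ with two rays, one emanating from $p$ and one from $q$. Since $Q$ is finite and in general position, no point of $Q$ other than $p,q$ lies on $\ell$. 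Suppose for contradiction that $\ell$ is not a tangent of $K$: then $K$ meets both open halfplanes $\mathring H^-_\ell$ and $\mathring H^+_\ell$. Since $K=\conv(Q)=\bigcup\{\Delta^*(x,y,z):x,y,z\in Q\}$ by Lemma~\ref{Caratheodory}, there are points of $Q$ strictly on each side of $\ell$; pick $a\in Q\cap\mathring H^-_\ell$ and $b\in Q\cap\mathring H^+_\ell$. Then $[a,b]\subseteq K$ must cross $\ell$ at an interior point $w$. Now apply Lemma~\ref{two-lines} (or a direct Pasch-type argument, Lemma~\ref{Pasch}) at the crossing: the portion of $\ell$ near $w$ together with $[a,b]$ produces, after extending, a configuration in which $K$ straddles $\ell_0$ as well — concretely, one of the two $\ell$-rays at $p$ (resp.\ $q$) together with the corresponding $\ell_0$-ray forms a geodesic line (Lemma~\ref{two-lines}) lying in $H^+_{\ell_0}$ and separating $a$ from $b$, forcing $[a,b]$ to cross that line twice against unique geodesicity, a contradiction. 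Hence $\ell$ is a tangent of $K$.

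For the second assertion, the argument is parallel. Let $\ell$ be an inner bitangent of $K'=\conv(Q')$ and $K''=\conv(Q'')$, with $p'\in Q'\cap\ell$, $p''\in Q''\cap\ell$, so $K'\subseteq H^-_\ell$ and $K''\subseteq H^+_\ell$ and $K'\cap K''=\varnothing$. Since $Q=Q'\cup Q''$ is in general position, $\ell$ contains no point of $Q$ other than $p'$ and $p''$, and $[p',p'']\subseteq\ell$ by uniqueness of geodesics. Let $\ell_1$ be any line extending $[p',p'']$; again $\ell_1$ and $\ell$ share $[p',p'']$ and differ only in the rays emanating from $p'$ and from $p''$. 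If $\ell_1$ were not a tangent of $K'$, then (by Lemma~\ref{Caratheodory}, as above) some $a\in Q'$ lies strictly in $\mathring H^+_{\ell_1}$; but $K'\subseteq H^-_\ell$, so $a$ lies on the opposite side of $\ell$ from where it lies relative to $\ell_1$, and since $\ell,\ell_1$ agree along $[p',p'']$, the swap must occur at one of the rays — say at the ray $r$ of $\ell$ emanating from $p'$. Consider a point $q'\in Q'$ witnessing $K'\subseteq H^-_\ell$ near that ray; the segment $[a,q']\subseteq K'$ then crosses $r$, and combining $r$ with the matching $\ell_1$-ray via Lemma~\ref{two-lines} yields a geodesic line meeting $[a,q']$ twice, contradicting uniqueness of geodesics. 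Thus $\ell_1$ is a tangent of $K'$, and symmetrically of $K''$; since $p'\in\ell_1$, $p''\in\ell_1$ and $K'\subseteq H^-_\ell$, $K''\subseteq H^+_\ell$, the line $\ell_1$ keeps $K'$ and $K''$ in different closed halfplanes (it cannot merge them without meeting a point of $Q$ off $[p',p'']$), so $\ell_1$ is an inner bitangent.

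The main obstacle, I expect, is making precise the "swap occurs at one of the rays" step: because $\ell$ and $\ell_1$ coincide exactly on $[p',p'']$, any point separated differently by the two lines must be separated by exactly one of the two ray-pairs, and one must rule out pathologies where the crossing point coincides with $p'$ or $p''$ — here the general-position hypothesis (no third point of $Q$ collinear with $p',p''$, hence no point of $Q$ on $\ell\setminus[p',p'']$) is what saves us, together with Lemma~\ref{two-lines} which guarantees that pasting an $\ell$-ray to the opposite $\ell_1$-ray at $p'$ or $p''$ really produces a bona fide geodesic line to which unique geodesicity can be applied. Everything else is bookkeeping with Lemmas~\ref{closed_halfplane}, \ref{Pasch}, and \ref{Caratheodory}.
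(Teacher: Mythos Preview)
Your overall plan is sound --- reduce to what happens at the two rays where the old tangent $\ell_0$ and the new extension $\ell$ diverge, and invoke general position --- but the mechanism you propose for the contradiction does not go through as written.

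First, your appeal to Lemma~\ref{two-lines} is unjustified. That lemma requires one of the two lines to lie entirely in a closed halfplane of the other. Here $\ell$ and $\ell_0$ share the segment $[p,q]$, but the ray of $\ell$ from $p$ may lie in $H^+_{\ell_0}$ while the ray of $\ell$ from $q$ lies in $H^-_{\ell_0}$; neither line need be contained in a halfplane of the other, so the hypotheses of Lemma~\ref{two-lines} are not available. Second, even granting a mixed line $L$, the claim that $[a,b]$ (or $[a,q']$ in the second part) ``crosses $L$ twice'' is asserted rather than proved, and ``meeting twice'' by itself is not a contradiction in a uniquely geodesic space: it only forces the subsegment between the two meeting points to lie on $L$.

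The paper's argument avoids all of this with a single clean step. If $\ell_0$ is a tangent with $K\subseteq H^+_{\ell_0}$ and some other extension $\ell'$ of $[p,q]$ is not a tangent, then a point $y\in Q$ on the wrong side of $\ell'$ necessarily lies in the wedge between the ray of $\ell_0$ and the ray of $\ell'$ emanating from $p$ (or symmetrically from $q$). Now look at $[y,q]$: by convexity of the halfplanes $H^+_{\ell_0}$ and $H^-_{\ell'}$ (Lemma~\ref{closed_halfplane}), the geodesic $[y,q]$ is trapped in that wedge and is forced to pass through its apex $p$. Thus $p\in[y,q]$, i.e.\ $y\in C(q,p)$, contradicting general position directly. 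The second assertion is handled identically, producing a point of $Q$ in $C(p',p'')$ or $C(p'',p')$. No ray-splicing, no Lemma~\ref{two-lines}, no double-crossing argument --- the contradiction is collinearity, which is exactly what general position forbids.
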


\begin{proof}
Let $[p,q]$ be an edge of $K$ and let $\ell$ be a tangent of $K$
extending $[p,q]$. Suppose by way of contradiction that some line
$\ell'$ extending $[p,q]$ is not a tangent of $K$, i.e., two points
$x$ and $y$ of $Q$ belong to complementary halfplanes defined by
$\ell'$, say $y$ is in the region delimited by the rays of $\ell$ and
$\ell'$ emanating from $p$ that do not contain $q$. Then consider
$[y,q]$, by convexity of the halfplanes delimited by $\ell$ and
$\ell'$, $p \in [y,q]$, hence $y \in C(q,p)$, contrary to the
assumption that the points of $Q$ are in general position.

Analogously, if $\ell$ is an inner bitangent of $K',K'$ as defined in
the lemma, but a line $\ell'$ extending $[p',p'']$ is not an inner
bitangent, we will conclude that some point $x \in Q$, $x \notin \{p',p''\}$
belongs to one of the cones $C(p',p'')$ or $C(p'',p'),$ contrary to
the assumption that the points of $Q$ are in general position.
\end{proof}

\begin{lemma} \label{decomposition_convex_hull}
For any finite set $Q$ of $S,$ the set $Q_0$ of extremal points of
$K:=\conv (Q)$ admits a circular order $\pi=(p_{i_1},\ldots
p_{i_k}),$ such that the geodesic segments $[p_{i_j},p_{i_{j+1
(\textrm{mod} ~k)}}]$ are edges of $K$ and the boundary of $K$ is the
union of these edges $[p_{i_j},p_{i_{j+1 (\textrm{mod} ~k)}}]$.
\end{lemma}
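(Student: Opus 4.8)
The plan is to argue by induction on $|Q|$ after a harmless reduction. Since $\conv(Q_0)=\conv(Q)=K$ and, straight from the definition, the extremal points of $Q_0$ are again $Q_0$, it suffices to treat the case $Q=Q_0$; set $n=|Q|$. The cases $n\le 2$ are trivial, and for $n=3$ the three points are not collinear (a segment has only two extremal points), so $K=\Delta^*(p_1,p_2,p_3)$ by Lemma~\ref{triangle}; extending a side $[p_i,p_j]$ to a line (Lemma~\ref{extension}), the closed halfplane of that line containing the third vertex is convex (Lemma~\ref{closed_halfplane}) and hence contains $K$, so each side is an edge of $K$, the boundary $\partial K$ is the union of the three sides, and $(p_1,p_2,p_3)$ is the desired order.

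Assume now $n\ge 4$ and the lemma for smaller sets. Remove an extremal point $p_n$ and put $Q'=Q\setminus\{p_n\}$, $K'=\conv(Q')$. From $Q=Q_0$ one gets $Q'=Q'_0$ and, by Lemma~\ref{Caratheodory}, $p_n\notin K'$; moreover $K'$ has nonempty interior, since its extremal points are the $\ge 3$ points of $Q'$ and so are not collinear. By induction $Q'$ has a circular order $(q_1,\dots,q_{n-1})$ whose consecutive segments $e'_t$ are edges of $K'$ and cover $\partial K'$. As $K'$ is compact, convex, with interior and $p_n\notin K'$, there are two tangent lines of $K'$ through $p_n$; by Lemma~\ref{Caratheodory} they meet $K'$ in single points $a,b\in Q'$, and $a\ne b$ since two distinct geodesic lines through $p_n$ cannot share a further point. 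Let $m$ be a line through $a$ and $b$. The crucial point is that $a$ and $b$ are \emph{consecutive} in the order of $Q'$. To see this, let $W$ be the intersection of the two closed halfplanes of the tangent lines that contain $K'$; using the fact that geodesic lines separate $S$ (Lemma~\ref{halfplanes}) one checks that $W$ is the wedge at $p_n$ bounded by the tangent rays $\rho_1,\rho_2$ from $p_n$ through $a,b$, that $W\cap m=[a,b]$, and that $W\cap H=\Delta^*(p_n,a,b)$, where $H$ is the closed halfplane of $m$ on the side of $p_n$. Since $K'\subseteq W$, any $q_c\in Q'\setminus\{a,b\}$ lying in $H$ would lie in $\Delta^*(p_n,a,b)$ with $q_c\notin\{p_n,a,b\}$, contradicting $q_c\in Q_0$; hence every point of $Q'\setminus\{a,b\}$ — and then, by Lemma~\ref{Caratheodory} and convexity of halfplanes, all of $K'$ — lies on the side of $m$ opposite to $p_n$. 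So $m$ supports $K'$ along $m\cap K'$, a segment whose endpoints lie in $Q'$ by Lemma~\ref{Caratheodory}, whence $m\cap K'=[a,b]$; thus $[a,b]$ is an edge of $K'$ carrying no vertex but $a,b$, so it coincides with one of the $e'_t$, i.e. $a$ and $b$ are consecutive.

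Renumber so that $a=q_1$, $b=q_{n-1}$. I claim $(q_1,\dots,q_{n-1},p_n)$ is the required order for $Q$. The tangent line of $K'$ through $p_n$ and $a$ (resp.\ $b$) passes through $p_n$, so its $K'$-side halfplane also contains $K=\conv(K'\cup\{p_n\})$; hence $[p_n,a]$ and $[p_n,b]$ are edges of $K$. Next, $K=\Delta^*(p_n,a,b)\cup K'$: by join-hull commutativity (Lemma~\ref{jhc}) every point of $K$ lies on a segment $[p_n,x]$ with $x\in K'\subseteq W$, and its point of intersection with $m$ lies in $W\cap m=[a,b]\subseteq K'$, splitting $[p_n,x]$ into a sub-segment in $W\cap H=\Delta^*(p_n,a,b)$ and one in $K'$. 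From $K=\Delta^*(p_n,a,b)\cup K'$ one obtains $\partial K=[p_n,a]\cup[p_n,b]\cup\bigcup_{t=1}^{n-2}e'_t$: the boundary is contained in $\partial\Delta^*(p_n,a,b)\cup\partial K'$, the relative interior of the common segment $[a,b]$ becomes interior to $K$ (a neighbourhood of it is covered by the two pieces), and no other point of $\partial\Delta^*(p_n,a,b)\cup\partial K'$ does, since $\Delta^*(p_n,a,b)\cap K'=[a,b]$. Finally, for $1\le t\le n-2$ the edge $e'_t$ of $K'$ lies on the arc of $\partial K'$ facing away from $p_n$, and one verifies from the planar structure of $S$ that $p_n$ lies in the halfplane of its supporting line that contains $K'$; hence that line supports $K$, and $e'_t$ is an edge of $K$ as well. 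This completes the induction.

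The step I expect to be the main obstacle is the adjacency of $a$ and $b$: one must rule out that enlarging $K'$ by $p_n$ swallows one of the old extremal points, which is precisely where the hypothesis $Q=Q_0$ is used, and it rests on the planar ``wedge'' picture — that the two tangents from the exterior point $p_n$ cut $K'$ out of the triangle $\Delta^*(p_n,a,b)$, and that the chord $[a,b]$ is then a supporting segment of $K'$ — which has to be extracted from Lemma~\ref{halfplanes} (and the behaviour of geodesic lines in $S$) rather than taken from Euclidean intuition. The accompanying point-set topology of $\partial K$ in the surface, the fact that the far-arc edges remain supported after adding $p_n$, and degenerate configurations such as several points of $Q$ lying on a common line require only routine, if somewhat lengthy, checking.
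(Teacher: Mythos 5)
Your overall strategy coincides with the paper's: induct on $|Q|$, reduce to the case where every point of $Q$ is extremal, delete one extremal point $p_n$, and insert it between two consecutive vertices of the inductively ordered smaller hull. The problem is the step you lean on to start the insertion: ``As $K'$ is compact, convex, with interior and $p_n\notin K'$, there are two tangent lines of $K'$ through $p_n$,'' touching $K'$ at points $a,b\in Q'$. In a Busemann surface this is not an off-the-shelf fact: none of the available lemmas (geodesic extension, convexity of balls and halfplanes, Pasch, Peano, Carath\'eodory) yields the existence of a supporting line through a prescribed exterior point, and there is no rotating-line or Hahn--Banach type argument to quote. Indeed, in the paper this very statement is \emph{deduced from} the lemma you are proving (it is invoked later, in the proof of Lemma~\ref{line-separable}, Case~2, by citing Lemma~\ref{decomposition_convex_hull}), and the entire technical content of the paper's proof is an argument that manufactures these two tangents: for the deleted point $x$ one considers, for each pair $u,v\in Q^x$, the region $R(u,v)$ between rays from $x$ through $u$ and $v$, maximizes the count $n(u,v)=|Q\cap R(u,v)|$, and shows via Pasch that a maximizing pair $y,z$ satisfies $n(y,z)=|Q|$, so the lines extending $[x,y]$ and $[x,z]$ are tangent to $K$ and $y,z$ are consecutive in the order on $Q^x$. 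Your proposal assumes exactly this conclusion, so the induction step has a genuine gap at its crux.

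Two secondary points. Your argument that $a\neq b$ (``two distinct geodesic lines through $p_n$ cannot share a further point'') is false as stated: in a Busemann space two distinct geodesic lines may share a nondegenerate segment and branch afterwards (this is precisely the first case treated in Lemma~\ref{two-lines}), so distinctness of the touching points needs a different justification. And your last step --- that every edge $e'_t$ of $K'$ other than $[a,b]$ remains an edge of $K$ --- is asserted (``one verifies from the planar structure of $S$\,''); it does require an argument, which the paper supplies: if a tangent $\ell$ of $K^x$ extending $[u,v]$ were not tangent to $K$, it would have to cross both $[x,y]$ and $[x,z]$, and then, taking $u'\in[x,u]\cap[y,z]$, the set $\ell\cap[x,u]$ would fail to be convex. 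These could be repaired, but the missing existence of the two tangents from $p_n$ is the essential omission.
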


\begin{proof}
We proceed by induction on $|Q|$. For a point $x \in Q,$ let
$Q^x = Q \setminus \{x\}$ and $K^x = \conv(K^x)$.  If for some $x \in
Q,$ $x \in K^x,$ then $x \notin Q_0$ and
$\conv(Q)=\conv(Q^x),$ so we can apply the induction
hypothesis to the set $Q^x$. Thus we can assume that all points of $Q$
are extremal.

To define the required circular order $\pi$ on $Q,$ for each point
$x \in Q$ we have to define its two neighbors in $\pi$.  Consider the
set $Q^x$. Then obviously $Q^x_0 = Q^x$ and, by the induction
assumption, the points of $Q^x$ admit the required circular order
$\pi'$. We will prove now that we can find two consecutive points
$y,z \in Q^x$ of $\pi'$ such that $[x,y]$ and $[x,z]$ are edges of
$K$. Then the circular order $\pi$ is obtained from $\pi'$ by
inserting the point $x$ between $y$ and $z$.

For two distinct points $u,v \in Q^x$, let $R(u,v)$ be the closed
region of $S$ comprised between two rays of $\ell'$ and $\ell''$ from
$x$ extending $[x,u]$ and $[x,v],$ respectively. Let $n(u,v)$ be the
number of points of $Q \cap R(u,v)$.
Notice that for any point $w \in Q^x,$
we have $w \in R(u,v)$ if and only if $[x,w] \cap [u,v] \ne
\varnothing$.

Let $y,z$ be a pair of points of $Q^x$ for which $n(y,z)$ is
maximal. We claim that $n(y,z) = |Q|$. Suppose by way of contradiction
that there exists a point $y'\in Q^x$ such that some line $\ell'$
extending $[x,y]$ separates $y'$ from $z$. We claim that
$n(y',z) > n(y,z)$. Let $\ell$ be a geodesic line extending $[x,y']$ and
defining $n(y',z)$. First, since $\ell'$ separates $y'$ and $z,$ we
have $[y',z] \cap \ell' \ne \varnothing$.  Since $y\notin
\Delta^*(x,y',z)$ because the points of $Q$ are extremal, this implies
that $[y',z] \cap [x,y] \ne \varnothing$,
i.e., $y \in R(y',z).$ Let $u \in [y',z] \cap [x,y]$. Pick any point
$w \in R(y,z)$. Then there exists $w' \in [y,z] \cap [w,x].$ Since
$[u,z] \cap [w',x] \ne \varnothing$ by Pasch axiom applied to the
triangle $\Delta(x,y,w')$, we conclude that
$[w,x]$ intersects $[y',z],$ whence $w \in R(y',z).$ Since $y'\in
R(y',z) \setminus R(y,z),$ we deduce that $n(y',z) > n(y,z),$ contrary to
the maximality choice of the pair $y,z$. This proves that
$n(y,z) = |Q|$. Therefore for the lines $\ell', \ell''$ extending
$[x,y]$ and $[x,z],$ the set $Q$ is contained in the convex set
$R(y,z)$ bounded by $\ell'$ and $\ell''$. Hence $\ell',\ell''$ are
tangents of $K = \conv(Q),$ showing that $[x,y],[x,z]$ are edges
of $K$.

Note that $[y,z]$ is an edge of $K^x = \conv(Q^x)$ (i.e., $y,z$ are
consecutive in $\pi'$), otherwise $\Delta^*(x,y,z)$ will contain yet
another point of $Q$, contrary to the assumption that all points of
$Q$ are extremal. It remains to show that any edge $[u,v]$ of $K^x$
different from $[y,z]$ is also an extremal edge of $K$. From the choice
of $[y,z],$ we conclude that $[x,u]\cap [y,z]\ne \varnothing$ and
$[x,v]\cap [y,z]\ne \varnothing$. Let $u' \in [x,u]\cap [y,z]$. Pick any
line $\ell$ extending $[u,v]$; $\ell$ is a tangent of $K^x$. If $\ell$
is not a tangent of $K$, then $\ell$ intersects $[x,y]$ and
$[x,z]$. Consequently, $[x,u']$ intersects $\ell$, and thus $\ell \cap
[x,u]$ is not convex, a contradiction. Hence $[u,v]$ is also an edge
of $K$.
\end{proof}

\begin{lemma}\label{convex_and_point}
Let $A \subset S$ and $q \notin \conv(A)$. Let $\ell$ and $\ell'$ be
two lines containing $q$ and intersecting $\conv(A)$. Orient $\ell$ and
$\ell'$ from $q$ to $\conv(A)$, and let $H^+_\ell$, $H^+_{\ell'}$ be the two
halfplanes to the right of $\ell$ and $\ell'$, respectively. Then,
either $A \cap H^+_\ell \subset H^+_{\ell'}$ or $A \cap H^+_{\ell'} \subset H^+_{\ell}$.
\end{lemma}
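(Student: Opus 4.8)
The plan is to argue by contradiction: suppose both inclusions fail, so there exist points $a \in (A \cap H^+_\ell) \setminus H^+_{\ell'}$ and $b \in (A \cap H^+_{\ell'}) \setminus H^+_\ell$. Thus $a$ lies strictly on the left side of $\ell'$ (in $\mathring H^-_{\ell'}$) while lying in $H^+_\ell$, and symmetrically $b$ lies strictly on the left of $\ell$ while lying in $H^+_{\ell'}$. Since $a,b \in A \subseteq \conv(A)$, the geodesic segment $[a,b]$ is contained in $\conv(A)$, and in particular $q \notin [a,b]$. The idea is that, near $q$, the two oriented lines $\ell$ and $\ell'$ emanate from $q$ into $\conv(A)$ along two rays $r = r_\ell$ and $r' = r_{\ell'}$, and these two rays together with the ``far'' parts of $\conv(A)$ force $[a,b]$ to have to cross one of the lines in a way that creates a branching geodesic, contradicting unique geodesics.

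Concretely, first I would set up the local picture at $q$: let $r$ (resp. $r'$) be the ray of $\ell$ (resp. $\ell'$) from $q$ toward $\conv(A)$, and let $s$ (resp. $s'$) be the opposite ray. By Lemma \ref{halfplanes} each of $\ell,\ell'$ separates $S$ into two halfplanes, and the four rays $r,r',s,s'$ at $q$ are arranged cyclically; the orientation convention ``$H^+$ to the right'' means that, going around $q$, the cyclic order of the rays determines which halfplane overlaps are forced. The relevant consequence is that the region $\mathring H^+_\ell \cap \mathring H^+_{\ell'}$ and the region $\mathring H^-_\ell \cap \mathring H^-_{\ell'}$ are the two ``opposite'' sectors at $q$, while $\mathring H^+_\ell \cap \mathring H^-_{\ell'}$ and $\mathring H^-_\ell \cap \mathring H^+_{\ell'}$ are the other two. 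Our points satisfy $a \in H^+_\ell \cap \mathring H^-_{\ell'}$ and $b \in \mathring H^-_\ell \cap H^+_{\ell'}$, so $a$ and $b$ lie in opposite sectors of this second kind. Now take a point $c$ on $r \cap \conv(A)$ (nonempty since $\ell$ meets $\conv(A)$ and the meeting happens on the $r$ side by construction) and $c'$ on $r' \cap \conv(A)$. Then $c,c',a,b$ all lie in $\conv(A)$ while $q$ does not; consider the quadrilateral region they bound, or more efficiently: the segment $[a,b]$ must cross $\ell$, say at a point $z \in [a,b] \cap \ell$ (since $a \in H^+_\ell$, possibly on $\ell$, and $b \in \mathring H^-_\ell$). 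I would show $z$ must lie on the ray $r$, not on $q$ or on $s$: it cannot be $q$ since $q \notin \conv(A) \supseteq [a,b]$, and if $z$ were on the opposite ray $s$ one can use the sector positions of $a,b$ together with the convexity of the closed halfplane $H^-_{\ell'}$ (Lemma \ref{closed_halfplane}), which contains $a$ and also contains the relevant portion of $\ell$ on the $s$ side but not on the $r$ side, to derive that $b$ would have to lie in $H^-_{\ell'}$, contradicting $b \in \mathring H^+_{\ell'}$. Hence $z \in r$, i.e. $z \in \ell'$-side is irrelevant but $z$ is a point of $\ell$ between $q$ and $\conv(A)$.

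Once $z \in r \subseteq \ell$ with $z \in [a,b] \subseteq \conv(A)$, I get a branching contradiction exactly as in the proofs of Lemmas \ref{two-lines} and \ref{Pasch}: the ray $s$ of $\ell$ from $z$ toward $q$ and beyond, versus the geodesic $[a,b]$ through $z$, give two distinct geodesics between $z$ and some common further point — more precisely, $q$ lies on $\ell$ on one side of $z$, and since $\ell \setminus \{z\}$ together with $[a,b]$ would give two different geodesics between $z$ and a point where they re-meet; alternatively, and more cleanly, since $z \in \conv(A)$ and $z$ lies on $\ell$ strictly between $q$ and $\conv(A)$ is impossible because then the sub-ray of $r$ from $q$ to $z$ lies in $\conv(A)$ by convexity (as $\conv(A)$ is convex and contains $z$ and — no, it need not contain $q$). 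I would instead finish by the symmetric argument with $\ell'$: $[a,b]$ also crosses $\ell'$ at some $z' \in r'$, and then the four points $q, z, z'$ plus the sector structure force $[a,b]$ to leave and re-enter a closed halfplane, contradicting Lemma \ref{closed_halfplane}. The main obstacle I anticipate is bookkeeping the cyclic arrangement of the four rays at $q$ rigorously — i.e.\ proving that ``$H^+$ to the right of both'' really does put $a$ and $b$ in opposite sectors — which requires invoking the planarity of $S$ and Lemma \ref{halfplanes} carefully rather than appealing to a Euclidean picture; once that sector dichotomy is pinned down, the contradiction via convexity of closed halfplanes (Lemma \ref{closed_halfplane}) and uniqueness of geodesics is routine.
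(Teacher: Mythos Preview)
Your setup coincides with the paper's: argue by contradiction with points $a \in (A \cap H^+_\ell) \setminus H^+_{\ell'}$ and $b \in (A \cap H^+_{\ell'}) \setminus H^+_\ell$, note $q \notin [a,b] \subset \conv(A)$, and locate the crossings $z \in [a,b] \cap \ell$ and $z' \in [a,b] \cap \ell'$. Your observation that $z$ must lie on the forward ray $r$ (and likewise $z' \in r'$) is correct and is the essential point --- it follows because $z \in \conv(A)$ while $\ell \cap \conv(A)$, being convex in $\ell$ and not containing $q$, lies entirely on one ray from $q$.

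Where your write-up drifts is the finish. The ``two distinct geodesics between $z$ and some common further point'' attempt needs you to produce that further point, and you do not. The ``$[a,b]$ leaves and re-enters a closed halfplane'' attempt cannot work as stated, since $[a,b]$ and $\ell$ are geodesics in a uniquely geodesic space and hence meet in a convex set (a point); nothing re-enters. Your sector-tracking idea \emph{can} be completed by a short case split --- whichever of $r, r'$ the segment $[a,b]$ crosses first, it is then trapped in a sector whose only exits toward $b$'s sector are the backward rays $s$ or $s'$, contradicting $z \in r$, $z' \in r'$ --- but you have not carried this out, and you rightly flag the cyclic bookkeeping at $q$ as the obstacle.

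The paper sidesteps that bookkeeping by working inside the geodesic triangle $\Delta(q,a,b)$: since $\Delta^*(q,a,b)$ is convex (Lemma~\ref{triangle}), the segments $[q,z] \subset \ell$ and $[q,z'] \subset \ell'$ both lie in $\Delta^*(q,a,b)$. Each is a chord from the apex $q$ to the opposite side $[a,b]$ and separates $a$ from $b$ within this planar disk; since both are oriented from $q$ into the triangle, the fixed orientation of $S$ forces $a$ to lie on the right of $[q,z]$ iff it lies on the right of $[q,z']$. This directly contradicts $a \in H^+_\ell \setminus H^+_{\ell'}$. I would replace your sector discussion with this triangle argument; it is the same idea, but the convexity of $\Delta^*$ does the planarity bookkeeping for you.
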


\begin{proof}
Suppose by way of contradiction that there exist $p \in \left(H^+_l
\cap A\right) \setminus H^+_{\ell'}$ and $p' \in \left(H^+_{\ell'}
\cap A\right) \setminus H^+_{\ell}$. Consider the geodesic triangle
$\Delta(p,p',q)$. By assumption $q \notin [p,p']$. Let $u \in [p,p']
\cap \ell$ and $u' \in [p,p'] \cap \ell'$.  We have $u \neq p'$, $u'
\neq p$, and $u, u' \in \conv(A)$. By Lemma~\ref{triangle}, $[p,u]
\subset \Delta^*(q,p,p')$ and $[p,u'] \subset \Delta^*(q,p,p')$. This
contradicts the facts that $p'$ is on the left of $\ell$ and on the
right of $\ell'$ and that $p$ is on the left of $\ell'$ and on the
right of $\ell$.
\end{proof}

\begin{lemma} \label{bitangent}
If $(Q',Q'')$ is a proper bipartition of a finite set $Q$ of $S$ such that the
convex hulls $K'=\conv(Q'), K''=\conv(Q'')$ are disjoint, then there
exists four (not necessarily distinct) points $p', q' \in Q'$ and
$p'', q'' \in Q''$ and two inner bitangents extending $[p',p'']$ and
$[q',q'']$ respectively. Any inner bitangent of $K',K''$  extends $[p',p'']$ or
$[q',q'']$.
\end{lemma}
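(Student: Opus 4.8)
The plan is to mimic the classical fact that two disjoint compact convex sets in the plane have exactly two inner common tangents, using the nesting/monotonicity statement of Lemma~\ref{convex_and_point} in place of elementary plane geometry. First I would record that $K'=\conv(Q')$ and $K''=\conv(Q'')$ are compact: by Lemma~\ref{Caratheodory} each is a finite union of regions $\Delta^*(x,y,z)$, and each such region is a compact Jordan domain by Lemma~\ref{halfplanes}. Since $(Q',Q'')$ is proper, $K'$ and $K''$ are nonempty, so $r:=d(K',K'')>0$. Throughout I would fix the orientation convention ``$K'$ on the left'' and look only for inner bitangents $\ell$ with $K'\subseteq H^-_\ell$ and $K''\subseteq H^+_\ell$; reversing orientation covers all inner bitangents. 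Note also that any geodesic line through two points $p'\in Q'$ and $p''\in Q''$ automatically contains the whole segment $[p',p'']$ (a subpath of a geodesic line is a geodesic, and $S$ is uniquely geodesic), so an inner bitangent meeting $Q'$ in $p'$ and $Q''$ in $p''$ indeed extends $[p',p'']$.

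\textbf{Existence.} I would produce one inner bitangent by an advancing-tangent-line procedure. Start from any extremal point $p''_0$ of $K''$; since $p''_0\notin K'$, applying Lemma~\ref{decomposition_convex_hull} to $\conv(Q'\cup\{p''_0\})$ (where $p''_0$ is extremal) shows there is a tangent line $\ell_0$ of $K'$ through $p''_0$, touching $K'$ at some extremal point $a_0\in Q'_0$; orient it with $K'$ on the left. If $K''\subseteq H^+_{\ell_0}$ we are done, since then $\ell_0$ is a tangent line of $K''$ as well, hence an inner bitangent extending $[a_0,p''_0]$. Otherwise some extremal point of $K''$ lies strictly to the left of $\ell_0$; I would then replace $\ell_0$ by an appropriately chosen tangent line of $K'$ through such a point, and argue via Lemma~\ref{convex_and_point}, applied with the external point ranging over $\partial K'$, that the tangency point on $\partial K'$ advances monotonically along the circular order of $Q'_0$ given by Lemma~\ref{decomposition_convex_hull}. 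As $Q'_0$ and $Q''_0$ are finite and no configuration can recur, the process terminates at a tangent line of $K'$ with $K''\subseteq H^+$, i.e.\ at an inner bitangent; running the same procedure with the reversed orientation yields a second one.

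\textbf{There are exactly two.} Let $\mathcal T'$ be the family of tangent lines of $K'$ with $K'$ on the left, circularly ordered by their tangency faces along the boundary cycle of $K'$. The core claim is that the subset $\mathcal R\subseteq\mathcal T'$ of those $\ell$ with $K''$ \emph{strictly} to the right is an arc; this is exactly a monotonicity statement, to be derived from Lemma~\ref{convex_and_point} together with the Pasch and Peano axioms (Lemmas~\ref{Pasch} and~\ref{Peano}), saying that the position of the convex set $K''$ relative to $\ell$ cannot oscillate as $\ell$ turns around $\partial K'$. The construction above shows $\mathcal R$ is nonempty (and, unless $K'$ or $K''$ is a point, has nonempty interior), and its two extreme positions, the endpoints of $\overline{\mathcal R}$, are inner bitangents: being limits of lines with $K'$ on the left and $K''$ strictly on the right, they satisfy $K'\subseteq H^-_\ell$, $K''\subseteq H^+_\ell$, and $K''\cap\ell\neq\varnothing$. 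Conversely, any inner bitangent $\ell^\ast$ with $K'$ on the left meets the convex hull $K'$ (resp.\ $K''$) in a face, hence passes through an extremal point $p'\in Q'_0$ (resp.\ $p''\in Q''_0$), so $\ell^\ast$ extends $[p',p'']$; and $\ell^\ast$ must be an endpoint of $\overline{\mathcal R}$, because $K'$ and $K''$ touch $\ell^\ast$ in disjoint faces (they are disjoint), so one small turn of $\ell^\ast$ sends $K''$ strictly to the right while the opposite turn makes $K''$ straddle $\ell^\ast$. Thus the inner bitangents extend exactly the two segments $[p',p'']$ and $[q',q'']$ coming from the two endpoints of $\overline{\mathcal R}$.

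\textbf{Main obstacle.} The delicate part is the monotonicity behind ``$\mathcal R$ is an arc'': one must show, uniformly whether tangency occurs at a vertex or along an edge of $K'$ or of $K''$, that the side of $\ell$ carrying $K''$ changes monotonically as $\ell$ rotates, and here Lemma~\ref{convex_and_point} does the work; one must also be careful because geodesic extension in $S$ need not be unique, so a tangency ray of $K'$ does not single out a tangent line — which is exactly why the conclusion is phrased in terms of the segments $[p',p'']$, $[q',q'']$ rather than the lines. The remaining routine care concerns the degenerate cases where $K'$ or $K''$ is a single point or a segment (then $p'=q'$ or $p''=q''$, and the two inner bitangents may extend the same segment), all handled by the same argument with $\mathcal T'$ interpreted accordingly.
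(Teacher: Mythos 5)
There is a genuine gap. The heart of your argument is the claim that the family $\mathcal R\subseteq\mathcal T'$ of tangent lines of $K'$ having $K''$ strictly on the right is an ``arc,'' equivalently that your advancing-tangent procedure is monotone, never recurs, and terminates; but this is exactly what you do not prove --- you announce it as ``to be derived from Lemma~\ref{convex_and_point} together with Pasch and Peano'' and flag it yourself as the main obstacle. In a Busemann surface there is no canonical continuous rotation of geodesic lines: geodesic extension is not unique, the tangent lines through a fixed extremal vertex of $K'$ need not form an interval in any natural order, and a ``circular order on $\mathcal T'$ by tangency faces'' is not well defined without further work. Consequently the steps ``one small turn of $\ell^\ast$ sends $K''$ strictly to the right,'' ``the endpoints of $\overline{\mathcal R}$ are limits of lines and hence inner bitangents,'' and ``the tangency point advances monotonically'' all invoke perturbation, limit and ordering facts about the space of geodesic lines that neither the paper's toolkit nor your sketch provides. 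Moreover, Lemma~\ref{convex_and_point} compares two lines through one and the same external point $q$, both meeting $\conv(A)$; in your advancing step the successive tangent lines pass through \emph{different} points of $K''$, so the lemma does not apply in the form you use it, and the ``no configuration can recur'' claim is unsupported.

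The paper sidesteps all of this with a purely discrete extremal argument: among all pairs $p'\in Q'$, $p''\in Q''$ and lines $\ell$ extending $[p',p'']$, it maximizes the potential $n_1(p',p'')=|Q'\cap H^-_{\ell}|+|Q''\cap H^+_{\ell}|$ and shows, by a case analysis using Lemma~\ref{convex_and_point} (always with a common external point) and Lemma~\ref{two-lines}, that a maximizer failing to be an inner bitangent can be strictly improved --- a contradiction; maximizing the symmetric potential $n_2$ gives the second bitangent, and a putative third inner bitangent extending a different segment would have to cross one of the two lines $\ell'$, $\ell''$ twice, which is impossible since the intersection of two geodesics is a point or a segment. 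Your rotation scheme could in principle be made rigorous, but the missing monotonicity/compactness of the line family is the real content and would have to be established from scratch; as written, the proof is incomplete precisely where the difficulty lies.
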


\begin{proof}
For a line $\ell$ extending a segment $[p',p'']$ with $p' \in Q'$,
$p'' \in Q''$, orient $\ell$ from $p'$ to $p''$, and denote by $H^+_l$
the closed halfplane delimited by $\ell$ on the right of $\ell$, and
by $H^-_\ell$ the closed halfplane on the left of $\ell$.

Let $p' \in Q'$ and $p'' \in Q''$ be two points such that for some
line $\ell'$ extending $[p',p'']$ the value of $|Q' \cap H^-_{\ell'}|
+ |Q''\cap H^+_{\ell'}| = n_1(p',p'')$ is as large as possible. We
assert that $\ell'$ is an inner bitangent of $K'$ and $K''$. Suppose
not: then either there exists $q \in Q'' \cap \mathring{H}^-_{\ell'}$ or $p \in
Q' \cap \mathring{H}^+_{\ell'}$, say the first. We will find a pair $s',s''$ with
$n_1(s',s'') > n_1(p',p'')$, leading to a contradiction with the
maximality choice of the pair $p',p''$.

Pick any line $\ell$ extending $[p',q]$. By
Lemma~\ref{convex_and_point} using point $p'$, $Q'' \cap H^+_{\ell'}
\subset Q'' \cap H^+_{\ell}$.  If $Q' \cap H^-_{\ell'} \subset Q' \cap
H^-_{\ell}$, then since $q \in Q'' \cap H^+_{\ell} \setminus Q'' \cap
H^+_{\ell'}$, we conclude that $n_1(p',q) > n_1(p',p'')$.

So, suppose that there exists a point $p \in Q' \cap H^-_{\ell'}$ such
that $p \in \mathring{H}^+_{\ell}$. Let $p \in Q'\cap H^-_{\ell'} \cap
\mathring{H}^+_{\ell}$ be such that for some line $\ell''$ extending
$[p,q]$ the value of $|Q' \cap H^-_{\ell''}|$ is as large as possible.
Suppose there exists $p^* \in Q'\cap H^-_{\ell'} \cap
\mathring{H}^+_{\ell''}$. Let $\ell^*$ be any line extending
$[q,p^*]$.  Then, by Lemma~\ref{convex_and_point} applied to $Q'$ and
$q$ for the lines $\ell''$ and $\ell^*$, we get that $|Q' \cap
H^-_{\ell''}| < |Q' \cap H^-_{\ell }|$, contradicting our choice of
$p$.

By Lemma~\ref{two-lines}, we can assume that $H^-_{\ell'} \cap
\mathring{H}^+_{\ell}$ is contained in the region delimited by the
rays of $\ell$ and $\ell'$ emanating from $p'$ containing respectively
$q$ and $p''$. Consequently, $p \in Q' \cap H^-_{\ell'} \cap
\mathring{H}^+_{\ell}$, and thus, $p$ is in the triangle
$\Delta^*(p',q,p'')$. Consider the ray of $\ell''$ emanating from $p$
not containing $q$. In $\Delta^*(p',q,p'')$, it intersects
$[p',p'']$. Let $u \in [p',p''] \cap \ell''$; note that $u \notin
K''$, since it would imply that $p\in K''$.  By
Lemma~\ref{convex_and_point} applied to $u$, we have that $Q'' \cap
H^+_{\ell'} \subsetneq H^+_{\ell''}$. Consequently, from our choice of
$p$, we have $n_1(p,q) > n_1(p',q')$, contradicting the choice of $p'$
and $q'$.

Analogously, taking two points $q' \in Q'$ and $q'' \in Q''$ such that
for some geodesic line $\ell''$ extending $[q',q'']$ the value of $|Q'
\cap H^+_{\ell''}| + |Q'' \cap H^-_{\ell''}| = n_2(q',q'')$ is
maximum, we will obtain that $n_2(q',q'') = |Q'| + |Q''|$ and thus
$\ell''$ is an inner bitangent.

Finally, if $\ell$ is a third inner
bitangent extending an other segment $[p,q]$ with $p \in Q'$ and $q \in Q''$,
then $\ell$ will necessarily intersect twice one of the lines $\ell'$
or $\ell'$. This establishes that any inner bitangent of $K',K''$
either extends $[p',p'']$ or $[q',q'']$.
\end{proof}

Two sets $K',K''$ of $S$ are called {\it line-separable} if there
exists a geodesic line $\ell$ such that $K'\subset
\mathring{H}^-_{\ell}$ and $K''\subset \mathring{H}^+_{\ell}.$ We will
call two sets $K',K''$ {\it weakly line-separable} if there exists a
line $\ell$ such that $K'\subseteq H^-_{\ell}$ and $K''\subseteq
H^+_{\ell}.$ In the first case, we will say that the line $\ell$ {\it
separates} $K'$ and $K''$ and in the second case that $\ell$ {\it
weakly separates} $K'$ and $K''$.

\begin{lemma} \label{line-separable}
If $(Q',Q'')$ is a bipartition of a finite set $Q$ of $S$ such that the
convex hulls $K' = \conv(Q'), K'' = \conv(Q'')$ are disjoint, then
$K'$ and $K''$ are line-separable.
\end{lemma}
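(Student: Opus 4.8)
The plan is to build a separating line from the bitangent structure provided by Lemma~\ref{bitangent}. Since $K'=\conv(Q')$ and $K''=\conv(Q'')$ are disjoint convex sets with $Q',Q''$ finite, Lemma~\ref{bitangent} yields two inner bitangents $\ell_1$ (extending $[p',p'']$) and $\ell_2$ (extending $[q',q'']$), with $p',q'\in Q'$, $p'',q''\in Q''$, each placing $K'$ and $K''$ in opposite closed halfplanes. These two bitangents bound a region of $S$; intuitively $K'$ lies on one side, $K''$ on the other, and any line threading strictly between the two bitangents will strictly separate them. So the first step is to set up coordinates: orient each $\ell_i$ from the $Q'$-point to the $Q''$-point, so that $K'\subseteq H^-_{\ell_i}$ and $K''\subseteq H^+_{\ell_i}$ for $i=1,2$.

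Next I would handle the case $\ell_1=\ell_2$ (or, more generally, the case where the two bitangents coincide as point-sets). Here $K'$ and $K''$ already lie in complementary \emph{closed} halfplanes of a single line $\ell=\ell_1$. To promote this to strict separation, note that $\ell\cap K'$ and $\ell\cap K''$ are disjoint convex subsets of the line $\ell$ (disjoint since $K'\cap K''=\varnothing$; convex as intersections of convex sets, using Lemma~\ref{closed_halfplane} and convexity of lines), hence they are separated by a point $x_0$ on $\ell$. Using the geodesic extension property (Lemma~\ref{extension}) I can find a line $\ell^\ast$ through $x_0$ transverse to $\ell$, and after a small adjustment push $\ell^\ast$ off $K'$ and $K''$; the partition of $S$ into the two open halfplanes of $\ell^\ast$ (Lemma~\ref{halfplanes}) then separates $K'$ from $K''$. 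Concretely, it is cleaner to argue that among all lines weakly separating $K'$ and $K''$, a line whose intersection with $K'\cup K''$ is a single point (or empty) must exist, and that such a line strictly separates — this is where Lemma~\ref{convex_and_point} does the work, exactly as it did in the proof of Lemma~\ref{bitangent}.

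In the generic case $\ell_1\ne\ell_2$, the two inner bitangents intersect (both meet $K'$, both meet $K''$, and the four "crossing" points force an intersection; if $\ell_1\cap\ell_2=\varnothing$ one derives two geodesics between a pair of points as in Lemma~\ref{halfplanes} and Lemma~\ref{two-lines}). Let $z\in\ell_1\cap\ell_2$. Applying Lemma~\ref{two-lines} at $z$, the four rays emanating from $z$ regroup into two genuine geodesic lines; one of these recombined lines, call it $m$, has the property that $K'$ lies in one open halfplane of $m$ and $K''$ in the other. The reason is that $K'$ is confined to the intersection of the two halfplanes $H^-_{\ell_1}\cap H^-_{\ell_2}$, which is one of the four "quadrants" at $z$, and $K''$ to the opposite quadrant $H^+_{\ell_1}\cap H^+_{\ell_2}$; the recombined line $m$ separates these two opposite quadrants. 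That neither $K'$ nor $K''$ meets $m$ follows because a point of $K'$ on $m$ would lie on a ray of $\ell_1$ or $\ell_2$ and force a non-convex intersection of that line with $K'$, contradicting general position / Lemma~\ref{closed_halfplane}; alternatively one invokes Lemma~\ref{edges_general_position} to see that the bitangents are unique in a suitable sense and argue directly.

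The main obstacle I expect is the case analysis around coincident or degenerate bitangents and, in the non-degenerate case, rigorously justifying that the recombined line from Lemma~\ref{two-lines} gives \emph{strict} (open-halfplane) separation rather than merely weak separation — i.e. ruling out that $K'$ or $K''$ touches $m$. This is a purely local argument at the crossing point $z$ (or along the shared segment $\ell_1\cap\ell_2$), using that closed halfplanes are convex and that a line meeting a convex set in a non-convex "bent" set is impossible; it parallels the final paragraph of the proof of Lemma~\ref{bitangent}, so I would model the write-up on that.
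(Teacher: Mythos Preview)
Your main idea in the generic case has a genuine gap: Lemma~\ref{two-lines} does \emph{not} apply to two transversally crossing inner bitangents. That lemma requires $\ell'$ to lie entirely in one closed halfplane of $\ell$, i.e.\ the two lines must be tangent at their intersection. But an inner bitangent $\ell_1$ passes through $p'\in K'\subseteq H^-_{\ell_2}$ and through $p''\in K''\subseteq H^+_{\ell_2}$, so unless $p',p''\in\ell_2$ (which forces $\ell_1=\ell_2$ by uniqueness of geodesics), the line $\ell_1$ meets both open halfplanes of $\ell_2$ and the hypothesis of Lemma~\ref{two-lines} fails. In a general Busemann surface (already in the Euclidean plane) the union of two rays from different transversal lines through a point is \emph{not} a geodesic line, so your ``recombined line $m$'' need not exist. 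Your handling of the degenerate case $\ell_1=\ell_2$ is also only a sketch: ``push $\ell^\ast$ off $K'$ and $K''$'' and the appeal to Lemma~\ref{convex_and_point} are not argued.

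The paper's proof avoids ray recombination entirely. It looks at the intersection $[u,v]=[p',p'']\cap[q',q'']$ of the two bitangent \emph{segments}, distinguishes whether the endpoints $u,v$ lie in $K',K''$, and in each case finds a point $w$ near this intersection with $w\notin K'\cup K''$; then it picks two points $x,y$ in a small ball $B(w,\epsilon)$ on opposite sides of the relevant geodesic and extends $[x,y]$ to a line $\ell$. Convexity of the ball forces the rays of $\ell$ from $x$ and $y$ to stay disjoint from $\ell'$ and $\ell''$ (or from suitable tangent lines in Case~2), and hence $\ell$ strictly separates $K'$ from $K''$. The key device is thus a \emph{local} construction near a gap point, not a global recombination of the bitangents.
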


\begin{proof}
Let $\ell'$ and $\ell''$ be two inner bitangents of $K',K''$ defined
as in Lemma~\ref{bitangent}: $\ell'$ extends $[p',p'']$ and $\ell''$
extends $[q',q'']$ with $\{p',q'\} \subset Q'$ and $\{p'',q''\} \subset Q''$ (it may
happen that $p' = q'$ or $p'' = q''$).  Let $[p',p''] \cap [q',q''] = [u,v]$,
where $u \in [p',v]$.

\medskip\noindent
{\bf Case 1.} $u\in K'$ and $v\in K''$.

\medskip\noindent Since $K'$ and $K''$ are disjoint, $u \ne v$.  Let
$w$ be an arbitrary point of $[u,v]$ not belonging to $K'$ and $K''$
and let $\epsilon > 0$ be such that $B(w,\epsilon) \cap (K' \cup K'')
= \varnothing$ (it exists since $K', K''$ are closed by
Lemma~\ref{Caratheodory}). Let $x$ and $y$ be two points from
different connected components of $B(w,\epsilon)\setminus [u,v].$ Let
$\ell$ be a geodesic line extending $[x,y]$. By convexity of
$B(w,\epsilon)$, $[x,y] \subset B(w,\epsilon)$ and $\varnothing \neq
\ell \cap [u,v] \subset B(w,\epsilon)$. Hence the two disjoint rays of
$\ell$ with endpoints $x$ and $y$ are disjoint from $\ell'$ and
$\ell''$, thus $\ell$ separates $K'$ and $K''$.

\medskip\noindent
{\bf Case 2.} $u \notin K'$ or $v \notin K'',$ say $v \notin K''$.

\medskip\noindent Let $\epsilon > 0$ such that $B(v,\epsilon) \cap (K'
\cup K'') = \varnothing$. Let $w$ be a point in $B(v,\epsilon/2)
\setminus (\ell' \cup \ell'')$, in the same connected component as
$K''$ in $S \setminus (\ell' \cup \ell'')$. Because $w \notin K''$, by
Lemma~\ref{decomposition_convex_hull} there are two geodesic
$[q_1,w]$, $[w,q_2]$ with $\{q_1,q_2\} \subset Q''$, extendable in two
distinct lines $\ell_1$ and $\ell_2$ respectively, both tangent to
$\conv(Q'' \cup \{w\})$. Denote by $H_1$ and $H_2$ the closed
halfplanes containing $K''$ delimited by $\ell_1$ and $\ell_2$
respectively.  For $i \in \{1,2\}$, the rays of $\ell_i$ must each
intersect a distinct side of the triangle $\Delta(v,p'',q'')$. Note
that if $\ell_i$ intersects $[p'',q'']$, then $[p'',q'']$ is an edge
of $K''$ and $\ell_i$ contains either $p''$ or $q''$. Consequently,
$\ell_i$ must intersects both $[v,p''] \subset \ell'$ and $[v,q'']
\subset \ell''$, and $v \notin {\ell}_i$. Clearly $K' \subset
S\setminus H_i$.

Choose $x \in (\mathring{H_1} \setminus H_2) \cap B(w,\epsilon/2)$ and
$y \in (\mathring{H_2} \setminus H_1) \cap B(w,\epsilon/2)$, clearly
$[x,y] \subset B(v,\epsilon)$. Moreover $\ell_1 \cap [x,y] \neq
\varnothing$ and $\ell_2 \cap [x,y] \neq \varnothing$. Then any line extending
$[x,y]$ separates $K'$ and $K''$.
\end{proof}

\begin{lemma} \label{separation}
If $(Q',Q'')$ is a bipartition of a finite set $Q$ in general position
of $S$ such that the convex hulls $K'=\conv(Q'), K''=\conv(Q'')$ are
weakly line-separable and $(K'\cap K'')\cap Q=\varnothing,$ then the
sets $Q'$ and $Q''$ are line-separable.
\end{lemma}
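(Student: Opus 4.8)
The plan is to emulate the Euclidean argument ``rotate a weakly separating line by a tiny amount about a point lying strictly between the two sets''; the difficulty is that $S$ carries no rotations, so the tilted line must be produced by hand.

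First, since $\ell$ weakly separates $K'$ and $K''$ we have $K'\cap K''\subseteq H^-_{\ell}\cap H^+_{\ell}=\ell$, so $I:=K'\cap K''$ is a convex subset of $\ell$, i.e.\ a subsegment of $\ell$ (possibly empty or a single point), and by hypothesis $I\cap Q=\varnothing$. If $I=\varnothing$ then $K'$ and $K''$ are disjoint, and Lemma~\ref{line-separable} already gives a line with $K'\supseteq Q'$ in one open halfplane and $K''\supseteq Q''$ in the other; so assume $I\ne\varnothing$. Next I would locate $Q\cap\ell$: since $Q'\cap\ell\subseteq K'\cap\ell$, $Q''\cap\ell\subseteq K''\cap\ell$, and the two subsegments $K'\cap\ell$, $K''\cap\ell$ of $\ell$ meet exactly in $I$ (their intersection is $K'\cap K''\cap\ell=I$), the trace $Q'\cap\ell$ lies strictly to one side of $I$ along $\ell$ and $Q''\cap\ell$ strictly to the other; when one of these subsegments contains the other, the corresponding trace is contained in $I$, hence empty. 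Consequently I can fix a point $m^{*}\in\ell\setminus Q$ and an open subinterval $J\subseteq\ell$ with $m^{*}\in J$ and $J\cap Q=\varnothing$ such that, in the linear order of $\ell$, all of $Q'\cap\ell$ lies to the left of $J$ and all of $Q''\cap\ell$ to its right (if one trace is empty, push $J$ past the other; if both are empty, $\ell$ itself separates $Q'$ and $Q''$ and we are done). Also fix $\eta>0$ below the distance from $\ell$ to every point of $(Q'\cup Q'')\setminus\ell$.

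Now I would build the tilted line $\ell_{0}$. Pick $q^{-}\in\ell$ to the left of all of $Q\cap\ell$ and $q^{+}\in\ell$ to the right of all of $Q\cap\ell$; for a small parameter take $p^{-}\in\mathring{H}^+_{\ell}$ near $q^{-}$ and $p^{+}\in\mathring{H}^-_{\ell}$ near $q^{+}$, and let $\ell_{0}$ be a geodesic line extending $[p^{-},p^{+}]$ (Lemma~\ref{extension}). Since $p^{-}$ and $p^{+}$ lie in different components of $S\setminus\ell$ (Lemma~\ref{halfplanes}), $\ell_{0}$ meets $\ell$ in a short convex set; as $p^{-}\to q^{-}$ and $p^{+}\to q^{+}$ the segment $[p^{-},p^{+}]$ tends to $[q^{-},q^{+}]\subseteq\ell$ (geodesics depend continuously on their endpoints in Busemann spaces), so for the parameter small $\ell_{0}$ stays within $\eta$ of $\ell$ over the compact arc of $\ell$ containing $Q\cap\ell$, and its crossing point $x_{0}$ with $\ell$ depends continuously on $p^{-},p^{+}$; sliding $p^{-},p^{+}$ within the $\eta$-neighbourhood of $\ell$, an intermediate-value argument places $x_{0}\in J$. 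Orient $\ell_{0}$ so that $q^{-}\in H^-_{\ell_{0}}$. Then each point of $Q'\cap\ell$, lying on $\ell$ to the left of $x_{0}$ where $\ell_{0}$ runs through $\mathring{H}^+_{\ell}$, is in $\mathring{H}^-_{\ell_{0}}$; each point of $Q''\cap\ell$, lying to the right of $x_{0}$ where $\ell_{0}$ runs through $\mathring{H}^-_{\ell}$, is in $\mathring{H}^+_{\ell_{0}}$; and since $\ell_{0}$ stays within $\eta$ of $\ell$ while the remaining points of $Q'$ (resp.\ $Q''$) lie at distance $\ge\eta$ inside $\mathring{H}^-_{\ell}$ (resp.\ $\mathring{H}^+_{\ell}$), convexity of the closed halfplanes of $\ell$ (Lemma~\ref{closed_halfplane}) puts them on the $\mathring{H}^-_{\ell_{0}}$ (resp.\ $\mathring{H}^+_{\ell_{0}}$) side as well. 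Finally $\ell_{0}$ misses $Q$ entirely, since it meets $\ell$ only inside $J$ and stays $\eta$-close to $\ell$. Hence $\ell_{0}$ strictly separates $Q'$ and $Q''$.

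I expect the third paragraph to be the main obstacle. Without isometries the tilted line is not canonical, and the pathologies a Busemann surface tolerates --- a geodesic may be tangent to a line, two distinct lines may share a segment, and an open halfplane need not be convex --- make it genuinely delicate to verify that $\ell_{0}$ separates: both the ``hugging'' claim (that $\ell_{0}$ stays $\eta$-close to $\ell$ over the bounded region meeting $Q$) and the placement of the crossing point $x_{0}$ inside the gap $J$ must be extracted from continuity of geodesics (Lemma~\ref{extension} together with the Busemann convexity of distance) and finiteness of $Q$, rather than from any quantitative estimate available in this generality.
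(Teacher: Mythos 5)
Your reduction to the case $K'\cap K''\ne\varnothing$ and the bookkeeping of the traces $Q'\cap\ell$, $Q''\cap\ell$ on the weakly separating line are fine, but the core of your argument has a genuine gap, and it is exactly the one you flag at the end. The statement ``$\ell_0$ stays within $\eta$ of $\ell$ over the compact arc of $\ell$ containing $Q\cap\ell$'' (which does follow from Busemann convexity for the segment $[p^-,p^+]$) does not imply that the off-$\ell$ points of $Q'$ and $Q''$ land on the correct sides of $\ell_0$: the two rays of $\ell_0$ beyond $p^-$ and $p^+$ are completely uncontrolled (geodesic extensions are neither unique nor continuous in the endpoints, and continuity of geodesics only governs $[p^-,p^+]$ itself), and it is precisely along those rays that the sidedness of the far points of $Q$ must be checked. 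Already in the Euclidean plane your construction, as stated, can fail: a point of $Q'$ lying below $\ell$ at depth just above $\eta$ but horizontally far beyond $q^+$ can fall in the wedge between $\ell$ and the descending ray of $\ell_0$, i.e.\ on the $Q''$ side, unless the tilt is made small relative to the \emph{diameter of all of $Q$} (not merely the extent of $Q\cap\ell$) --- a quantitative step you neither make nor can extract from the cited lemmas, since ``metrically close to $\ell$'' never by itself determines on which side of $\ell_0$ a point lies. The secondary issue, the intermediate-value placement of the crossing point $x_0$ in $J$, is also unjustified (the crossing set can be a nondegenerate segment and need not vary continuously), but that part is repairable by choosing the two perturbed points inside a small ball centred at a point of $J$ on opposite sides of $\ell$, as in Case 1 of Lemma~\ref{line-separable}; the sidedness of the distant points is the real obstruction.

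The paper avoids all metric estimates by controlling the new line through convexity rather than closeness. Since $(K'\cap K'')\cap Q=\varnothing$ and $Q$ is in general position, $K'\cap K''=[p,q]$ lies inside an edge $[p',q']$ of $K'$ and an edge $[p'',q'']$ of $K''$ with $\{p,q\}\cap\{p',q',p'',q''\}=\varnothing$ (Lemma~\ref{decomposition_convex_hull}); one takes tangents $\ell',\ell''$ extending these edges and picks $x\in B(p,\epsilon)$, $y\in B(q,\epsilon)$ in $\mathring{H}^+_{\ell'}\cap\mathring{H}^-_{\ell''}$. For \emph{any} line $\ell$ extending $[x,y]$, convexity of closed halfplanes (Lemma~\ref{closed_halfplane}) forces $\ell\cap K'\subseteq \ell'\cap K'=[p',q']$ and $\ell\cap K''\subseteq \ell''\cap K''=[p'',q'']$, so $\ell$ automatically weakly separates $K'$ from $K''$ globally, and it can meet $Q$ only inside those edges, which general position forbids. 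If you want to salvage your perturbation idea, you need a substitute for this global control --- e.g.\ sandwiching $\ell_0$ between the tangent lines as the paper does --- rather than an $\eta$-closeness claim on a bounded arc.
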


\begin{proof}
If $K'$ and $K''$ are disjoint, then the result follows from Lemma
\ref{line-separable}. So, let $K' \cap K'' \ne \varnothing$ and let
$\ell_0$ be a line weakly-separating $K'$ from $K''$. We denote the
two halfplanes delimited by $\ell_0$ as $H^+$ and $H^-$ such that $K'
\subset H^-$ and $K'' \subset H^+$. Then $K' \cap K'' \subset \ell_0$,
thus $K' \cap K'' = [p,q]$ for some points $p,q \in K' \cap
K''$. Since $(K' \cap K'') \cap Q = \varnothing$, by Lemma
\ref{decomposition_convex_hull} there exists an edge $[p',q']$ of $K'$
and an edge $[p'',q'']$ of $K''$ such that $[p,q] = [p',q'] \cap
[p'',q'']$ and $\{ p,q\}\cap \{ p',q',p'',q''\}=\varnothing$ (because
the points of $Q$ are in general position). Let
$\ell'$ and $\ell''$ be two tangents of $K'$ and $K''$ extending
$[p',q']$ and $[p'',q'']$, respectively ($\ell_0$ may coincide with one
of these tangents). Notice that $\ell'$ and $\ell''$ also separate
$K'$ and $K''$, as by convexity $\ell' \cap K'' = [p,q] = \ell'' \cap
K'$. So, denote by $H^-_{\ell'}$ and $H^+_{\ell'}$ the halfplanes
delimited by $\ell'$, and $H^-_{\ell''}$ and $H^+_{\ell''}$ the
halfplanes delimited by $\ell''$, such that $K' \subset H_{\ell'}^-
\cap H_{\ell''}^-$ and $K'' \subset H_{\ell'}^+ \cap H_{\ell''}^+$.

Let $\epsilon>0$ be such that $\min \{d(p,p'), d(p,p''), d(q,q'),
d(q,q'')\} > \epsilon.$ Pick two points $x$ and $y$ in
$\mathring{H}_{\ell'}^+ \cap \mathring{H}_{\ell''}^-$ so that $x \in
B(p,\epsilon)$ and $y \in B(q,\epsilon)$. Let $\ell$ be any geodesic
line extending $[x,y]$.  We assert that $\ell$ line-separate $Q'$ and
$Q''$. Since $\{x,y\} \subset \mathring{H}_{\ell'}^+ \cap
\mathring{H}_{\ell''}^-$, from Lemma \ref{closed_halfplane} we
conclude that $[x,y] \subset H_{\ell'}^+ \cap H_{\ell''}^-$. Since $K'
\subset H_{\ell'}^-$ and $K''\subset H_{\ell''}^+,$ we conclude that
$\ell \cap K'\subset \ell' \cap K' = [p',q']$ and $\ell \cap K''
\subset \ell'' \cap K''=[p'',q'']$. In particular, $[p,q] \subset
\ell$ and $K' \subset H^-_{\ell}, K''\subset H^+_{\ell}.$ Therefore,
if $\ell$ does not line-separate $Q'$ and $Q'',$ there exists a point
of $Q'$ or $Q''$ on $\ell$, say $Q'$, hence on $\ell \cap K' \subset
[p',q']$. This contradicts the general position assumption for $Q$.
\end{proof}

\section{Crofton formula (after R. Alexander)}\label{sec:crofton}

In this section we will prove a Crofton formula for finite subsets in
general position of Busemann surfaces. Our proof
uses the convexity results of previous section and follows closely  and
generalizes an analogous result of Alexander \cite{Al}.  

Let $(S,d)$ be a Busemann surface and let $Q=\{ p_1,\ldots, p_n\}$ be
a finite set of distinct points of $S$ in general position.  A line
$\ell$ of $S$ is said to {\it separate} the points of $Q$ if (i) none
of the points of $Q$ lie on $\ell$, and (ii) each of the open
halfplanes determined by $\ell$ contains at least one of the points of
$Q$. Two separating lines are called {\it equivalent} if each
separates $Q$ into the same pair of sets.  Let $L_1,\ldots, L_m$ be
the equivalence classes of lines of $S$ separating the points of $Q$.
For a line $\ell$ separating the points of $Q$ into two nonempty
subsets $Q',Q'',$ let $K'=\conv(Q')$ and $K''=\conv(Q'').$
We will say that a pair of points $p_i,p_j\in Q$ constitutes an {\it
extremal segment} $[p_i,p_j]$ for the pair $(K',K'')$ (or $(Q',Q'')$)
if any line $\ell$ extending $[p_i,p_j]$
weakly separates $K'$ and $K''$.  We
will call an extremal segment $[p_i,p_j]$ {\it positive} if either
$p_i\in Q', p_j\in Q''$ or $p_i\in Q'', p_j\in Q'$ and we call
$[p_i,p_j]$ {\it negative} if either $p_i,p_j\in Q'$ or $p_i,p_j\in
Q''.$

By Lemma \ref{bitangent} (and similarly to the Euclidean plane) the
positive extremal segments correspond to the two pairs $[p',p'']$ and
$[q',q'']$ of segments with $p',q'\in K', p'',q''\in K''$ such that
any line extending one of these segments is an inner bitangent of $K'$
and $K''$.  On the other hand, since any line extending a negative
segment of $K'$ (or of $K''$) is a tangent of $K'$ (or of $K''$), the
negative extremal segments are edges of $K'$ or $K''$. In the
Euclidean plane, they are precisely the edges of $K'$ and $K''$ which
belongs to the triangles $\Delta^*(p',q',x)$ and $\Delta^*(p',q',x)$,
where $x=[p',p'']\cap [q',q''].$ We will show now that a similar
property holds for Busemann surfaces:

\begin{lemma} \label{negative_segment} An edge $[p,q]$ of $K'$ is a negative extremal segment of
$(K',K'')$ if and only if $p,q \in \Delta^*(p',q',x),$ where $x$ is any point of $[p',p''] \cap [q',q'']$.
\end{lemma}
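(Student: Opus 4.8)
The plan is to argue by the analogous Euclidean picture, using the convexity lemmas already established. Let $x$ be a point of $[p',p'']\cap[q',q'']$ (the intersection is the segment $[u,v]$ of Lemma~\ref{line-separable}, and by Lemma~\ref{edges_general_position} any line through two points of $Q$ on a bitangent is itself a bitangent, so the role of $x$ is as in the statement). Recall that $\ell'$, the line extending $[p',p'']$, and $\ell''$, the line extending $[q',q'']$, are both inner bitangents: $K'$ and $K''$ lie in opposite closed halfplanes of each. So the two lines $\ell',\ell''$ meet at $x$ and divide a neighborhood of $x$ into four sectors; $K'$ occupies (part of) one closed sector and $K''$ the opposite closed sector. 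Name the four rays from $x$ as in Lemma~\ref{two-lines}: say $r_{p'}\subset[x,p']$, $r_{p''}\subset[x,p'']$ along $\ell'$ and $r_{q'}\subset[x,q']$, $r_{q''}\subset[x,q'']$ along $\ell''$, arranged so that $K'\subset\Delta^*(p',q',x)$-side and $K''$ on the opposite side. The region $\Delta^*(p',q',x)$ is exactly conv$(\{p',q',x\})$ by Lemma~\ref{triangle}, and it is the closed sector between $r_{p'}$ and $r_{q'}$ locally near $x$, capped off by $[p',q']$.

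For the forward direction: let $[p,q]$ be an edge of $K'$ that is a negative extremal segment of $(K',K'')$, and let $m$ be a line extending $[p,q]$; by hypothesis $m$ weakly separates $K'$ and $K''$. Since $[p,q]$ is an edge of $K'$, $m$ is a tangent of $K'$, so all of $K'$ (in particular $p',q'$) lies in one closed halfplane $H_m^-$; and weak separation forces $K''$, in particular $p'',q''$, into the other closed halfplane $H_m^+$ (possibly meeting $m$ only along $[p,q]$). Now I want to conclude $p,q\in\Delta^*(p',q',x)$. The idea: the segments $[p',p'']$ and $[q',q'']$ each cross from $H_m^-$ to $H_m^+$, so each meets $m$; since $m\cap\ell'$ and $m\cap\ell''$ are convex (single points or subsegments) and $m$ cannot contain $x$ unless $x$ is between the crossing points, a case analysis shows $m$ passes between $x$ and the far endpoints. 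More cleanly: consider any line $m$ extending $[p,q]$. By Lemma~\ref{two-lines} applied to $m$ and $\ell'$ (resp. $m$ and $\ell''$), and using that $m$ separates $p'$ from $p''$ and $q'$ from $q''$ weakly, the point $x$ lies in $H_m^+$ together with $K''$ or on $m$. Since $p\in m\cap K'$ and $K'\subset\Delta^*(p',q',x)$ by Lemma~\ref{triangle} (as $\Delta^*(p',q',x)=\conv(p',q',x)$ and $K'=\conv(Q')\subseteq$ this if $p',q',x$ were the right extremal frame — but this needs justification), $p$ and $q$ lie in $\Delta^*(p',q',x)$. I think the cleanest route is: $[p,q]$ being an edge of $K'$ on a tangent that also has $K''$ on the far side means $[p,q]$ is ``facing'' $K''$; the facing edges of $K'$ are exactly those in $\conv(p',q',x)$ because conv$(p',q',x)$ is the portion of $K'$ between the two contact points $p'$ (with $\ell'$) and $q'$ (with $\ell''$); this is the Euclidean intuition and it should follow from Lemma~\ref{decomposition_convex_hull} (the boundary of $K'$ is a cycle of edges, and $[p,q]$ lies on the sub-arc from $p'$ to $q'$ that bounds $\Delta^*(p',q',x)$).

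For the converse: suppose $[p,q]$ is an edge of $K'$ with $p,q\in\Delta^*(p',q',x)$; I must show every line $m$ extending $[p,q]$ weakly separates $K'$ and $K''$. Since $[p,q]$ is an edge, $m$ is a tangent of $K'$, so $K'\subset H_m^-$ say. It remains to show $K''\subset H_m^+$, i.e. $\mathring H_m^-\cap K''=\varnothing$. Suppose some $z\in K''$ lies strictly on the $K'$-side of $m$. Then $z\notin\Delta^*(p',q',x)$ (since $K''$ is disjoint from $K'$'s sector, or at worst meets it along nothing interior — here general position and disjointness of $K',K''$ when they are, plus the bitangent structure, is used). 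Consider the geodesic triangle $\Delta(p,q,z)$ or the segment $[p'',z]$: since $p,q\in\Delta^*(p',q',x)$ but $z$ is on the far side of $m$ from $\Delta^*(p',q',x)$'s bulk while also on the far side from $K''$'s own sector... Actually the right argument: $m$ is a tangent to $K'$ with $K'$ on one side; $x\in\Delta^*(p',q',x)$ and $p,q$ are in this same region, so $m$ restricted to $\Delta^*(p',q',x)$ has this region on its $H_m^-$ side by convexity of $\Delta^*$ (Lemma~\ref{triangle}) and of halfplanes (Lemma~\ref{closed_halfplane}); in particular $x\in H_m^-$. Then $K''$ lies in the opposite closed sector from $K'$ at $x$, and since $x\in H_m^-$ with the whole $K'$-sector also effectively on that side near $x$, the opposite sector — hence $K''$ — lies in $H_m^+$; make this rigorous by noting $[x,p'']\subset\ell'$ and $[x,q'']\subset\ell''$ both leave the $H_m^-$ region (as $p'',q''\notin H_m^-$ strictly, using that $m$ is tangent to $K'$ and $p'',q''\in K''$ disjoint from $K'$), so any point of $K''=\conv(Q'')$ reached from $x$ via Lemma~\ref{triangle}/Lemma~\ref{jhc} stays in $H_m^+$.

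The main obstacle I expect is pinning down the ``facing arc'' claim precisely: that the edges of $K'$ lying in $\Delta^*(p',q',x)$ are exactly the ones visible from $K''$ across a weakly-separating line. This requires combining Lemma~\ref{decomposition_convex_hull} (cyclic edge structure of $\partial K'$) with the bitangent geometry of Lemma~\ref{bitangent} and Lemma~\ref{two-lines}, and then repeatedly invoking convexity of halfplanes (Lemma~\ref{closed_halfplane}) and of triangles (Lemma~\ref{triangle}) together with the Pasch/Peano axioms (Lemmas~\ref{Pasch},~\ref{Peano}) to transfer the Euclidean argument of Alexander. Everything else is a routine, if careful, translation of the planar picture; the genuinely nontrivial step is ensuring no configuration peculiar to Busemann (non-CAT(0)) surfaces breaks the case analysis, which is why the earlier convexity toolkit was developed.
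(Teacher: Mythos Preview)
Your proposal is a plan, not a proof, and you say so yourself: the ``facing arc'' claim is left as ``the main obstacle,'' the forward direction contains a false start ($K'\subseteq\Delta^*(p',q',x)$ is simply wrong in general, as you note), and both directions trail off into ``make this rigorous by\ldots''. The geometric intuition is correct and matches the paper's, but you have not isolated the mechanism that makes the argument go through.

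The missing idea is this: the two inner bitangents $\ell',\ell''$ cut $S$ into four closed regions $A',A'',B',B''$ with $K'\subset A'$ and $K''\subset A''$ (opposite sectors), and a geodesic line in a Busemann surface cannot cross another geodesic line twice (otherwise there would be two geodesics between the crossing points). With this in hand the paper's proof is short. For the direction $p,q\in\Delta^*(p',q',x)\Rightarrow$ extremal: a tangent $\ell$ of $K'$ through $[p,q]$ cannot meet $[p',q']$, so inside the triangle it must cross both $[x,p']\subset\ell'$ and $[x,q']\subset\ell''$ and thereafter lives in $B'\cup B''$; if $\ell$ also met $K''\subset A''$ it would have to cross $\ell'$ or $\ell''$ a second time. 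For the other direction the paper argues the contrapositive: if $p,q\notin\Delta^*(p',q',x)\setminus\Delta(p',q',x)$ (Lemma~\ref{decomposition_convex_hull} gives that it is all-or-nothing), then the tangent $\ell$ exits $A'$ along a ray of $\ell'$ or $\ell''$ on the far side of $p'$ (resp.\ $q'$) from $x$; either $\ell$ passes through $p'$, in which case Lemma~\ref{two-lines} produces a line through $p,q,p',q''$ violating general position, or $\ell$ enters the interior of $B'$, and then to keep $K''\subset A''$ on one side it would have to recross $\ell'$ --- impossible.

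Your attempts via Lemma~\ref{two-lines}, Pasch/Peano, and join-hull commutativity are all circling this four-region picture without landing on it. Once you set up $A',A'',B',B''$ explicitly and use ``no double crossing,'' the case analysis you were dreading collapses to a few lines.
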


\begin{proof}
Let $\ell',\ell''$ be two inner bitangents of $K',K''$ extending
$[p',p'']$ and $[q',q'']$, respectively. Let $A',A'',B',B''$ be the
four pairwise intersections of closed halfplanes defined by the lines
$\ell'$ and $\ell'',$ so that $K'\subset A'$ and $K''\subset A''$.
Pick any edge $[p,q]$ of $K'$ and let $\ell$ be any line extending
$[p,q].$ Then $\ell$ is a tangent of $K'$. From Lemma
\ref{decomposition_convex_hull}, either both $p,q$ belong to
$\Delta^*(p',q',x)$ or neither of $p$ or $q$ belong to the interior of
$\Delta^*(p',q',x)$.

First suppose that $p,q\in \Delta^*(p',q',x)$. Since $\ell$ is a
tangent of $K',$ $\ell$ does not intersect $[p',q']$. Hence, $\ell$
necessarily intersects the sides $[x,p']$ and $[x,q']$ of
$\Delta^*(p',q',x)$ and enters in the regions $B',B''$.  If $[p,q]$ is
not an extremal segment, then $\ell$ intersects $K''$ (and therefore
$A''$). This means that $\ell$ will intersect twice $\ell'$ or
$\ell'',$ a contradiction. This proves that any edge $[p,q]$ of $K'$
with $p,q\in \Delta^*(p',q',x)$ is a negative extremal segment.

Conversely, suppose that $[p,q]$ is an edge of $K'$ such that
$p,q\notin \Delta^*(p',q',x)\setminus \Delta(p',q',x)$. Suppose by way
of contradiction that $[p,q]$ is a negative extremal segment, i.e.,
$\ell$ weakly separates $K'$ from $K''$. Then necessarily $\ell$
intersects the boundary of $A'$ consisting of rays of $\ell',\ell''$,
say $\ell$ intersects $\ell'\cap A'\cap B'$.  If $\ell$ does not intersect
the interior of $B',$ we will conclude that $\ell$ passes via the
point $p'.$ By Lemma \ref{two-lines} there exists a line extending $[p,q]$
and passing via $p'$ and $q''$, contrary to the
assumption that the points of $Q$ are in
general position. If $\ell$ intersects the interior of $B'$ (i.e.,
$B'\setminus (\ell'\cup \ell'')$), then since $\ell$ weakly separates
$K'$ and $K''\subset A'',$ $\ell$ necessarily intersects a second
time the line $\ell'$ in $[p',q'']$, which is impossible.
Hence $[p,q]$ cannot be a
negative extremal segment.
\end{proof}


For an equivalence class $L_{t}$  of lines, $t=1,\ldots, m$, we will
denote by $Q'_{t}$ and $Q''_{t}$ the two subsets of $Q$ into which any
line $\ell$ of $L_{t}$ separates the points of $Q$ and by $K'_t$ and
$K''_t$ we denote their convex hulls. Let also $ES^+_t$ and $ES^-_t$
be the sets of positive and negative extremal segments of the pair
$(K'_t,K''_t)$. Finally,
set $$\sigma_t:=\sum_{[p_i,p_j]\in ES^+_t}
d(p_i,p_j)-\sum_{[p_i,p_j]\in ES^-_t} d(p_i,p_j).$$
We will show  that for any equivalence class $L_{t}$, $\sigma_t\ge 0$.
Let $p',q'\in K'_t$ and $p'',q''\in K''_t$ such that $[p',q']$ and
$[p'',q'']$ are the positive extremal segments of $K_t',K_t''$. Let $x\in
[p',p''] \cap [q',q'']$. Let $$\sigma'_t = d(x,p') + d(x,q') - \sum \{
d(p,q): [p,q] \in ES^-_t, p,q \in K_t'\} $$ and
$$\sigma''_t = d(x,p'') + d(x,q'') - \sum \{ d(p,q): [p,q] \in ES^-_t, p,q \in K_t''\}.$$
Since $\sigma_t = \sigma'_t + \sigma''_t$, it suffices to show
that $\sigma'_t \ge 0$ and $\sigma''_t \ge 0$. These inequalities are immediate consequences
of the following result:

\begin{lemma}
Let $Q'$ be a finite set of points of $S$ in general position and let
$K' = conv(Q')$.  For any $x \notin K'$, for all $p',q' \in Q'$ such
that $[x,p']$ and $[x,q']$ are edges of $\conv(Q' \cup \{x\})$,
$$
d(x,p')+d(x,q') \geq \sum \{d(p,q):[p,q] \mbox{ is an edge of } K
\mbox{ and } p,q \in \Delta^*(x,p',q')\}.
$$
\end{lemma}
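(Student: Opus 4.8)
The plan is to prove the inequality by induction on the number of extremal edges of $K'$ that lie inside the triangle $\Delta^*(x,p',q')$, peeling off one extremal point of $K'$ at a time. First I would set up the base case: if $K'$ has no edge $[p,q]$ with $p,q\in\Delta^*(x,p',q')$, the right-hand sum is empty and the inequality is trivial since distances are nonnegative. More substantively, if $[p',q']$ itself is the only such edge, then by Lemma~\ref{decomposition_convex_hull} the boundary of $K'$ between $p'$ and $q'$ on the side facing $x$ is the single edge $[p',q']$, and we must show $d(x,p')+d(x,q')\ge d(p',q')$, which is just the triangle inequality applied to the geodesic triangle $\Delta(x,p',q')$.

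For the inductive step, I would pick an extremal point $r$ of $K'$ lying in the interior region cut off by $[p',q']$ (more precisely, an extremal point of $K'$ in $\Delta^*(x,p',q')$ that is consecutive, in the circular order of Lemma~\ref{decomposition_convex_hull}, to $p'$ along the boundary chain facing $x$), so that $[p',r]$ is an edge of $K'$. The key geometric fact to establish is that $[x,p']$ and $[x,r]$ are edges of $\conv(\{x\}\cup (Q'\setminus\{p'\}))$: this should follow from Lemma~\ref{two-lines} and Lemma~\ref{edges_general_position} together with the fact that any line extending $[x,p']$ is tangent to $K'$ (since $[x,p']$ is an edge of $\conv(Q'\cup\{x\})$), so that $p'$ is "visible" from $x$ and $r$ is the next visible extremal point. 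Having removed $p'$, the inductive hypothesis applied to $Q'\setminus\{p'\}$ with the pair of edges $[x,r],[x,q']$ gives
$$
d(x,r)+d(x,q')\ \ge\ \sum\{d(p,q): [p,q]\text{ edge of }\conv(Q'\setminus\{p'\}),\ p,q\in\Delta^*(x,r,q')\}.
$$
Then I would combine this with the triangle inequality $d(x,p')+d(x,r)\ge d(p',r)$ applied in $\Delta(x,p',r)$, obtaining $d(x,p')+d(x,q')\ge d(p',r)+d(x,r)+d(x,q')-d(x,r)$; adding the two displays and using that every edge of $K'$ inside $\Delta^*(x,p',q')$ other than $[p',r]$ is also an edge of $\conv(Q'\setminus\{p'\})$ inside $\Delta^*(x,r,q')$ (this is the part of Lemma~\ref{decomposition_convex_hull}'s proof showing edges of $K^x$ survive) yields exactly the desired inequality.

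The main obstacle I anticipate is the bookkeeping in the inductive step: one must verify carefully that (a) removing $p'$ does not destroy any of the relevant edges, i.e. every edge $[p,q]$ of $K'$ with $p,q\in\Delta^*(x,p',q')$ and $[p,q]\ne[p',r]$ remains an edge of $\conv(Q'\setminus\{p'\})$ and still lies in $\Delta^*(x,r,q')$, and (b) the edges $[x,r]$ and $[x,q']$ genuinely form the analogue of the hypothesis for the smaller set. Both of these are essentially contained in the proof of Lemma~\ref{decomposition_convex_hull}, but making the correspondence of triangular regions $\Delta^*(x,p',q')\leftrightarrow\Delta^*(x,r,q')\cup\Delta^*(x,p',r)$ precise — and checking that the edge $[p',r]$ accounts exactly for the discrepancy $d(x,p')-d(x,r)$ absorbed by the triangle inequality — requires the Pasch and Peano axioms (Lemmas~\ref{Pasch} and~\ref{Peano}) to control how geodesics from $x$ sweep across the boundary chain of $K'$. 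A cleaner alternative, which I would also consider, is to avoid induction entirely and instead extend each edge $[p,q]$ of $K'$ inside $\Delta^*(x,p',q')$ to a tangent line, use these tangents to decompose the region between $x$ and the boundary chain into geodesic triangles, and sum the triangle inequalities telescopically along the chain $p'=r_0,r_1,\dots,r_k=q'$; the telescoping $\sum_i(d(x,r_{i})+d(x,r_{i+1})) \ge \sum_i d(r_i,r_{i+1}) + 2\sum_{0<i<k}d(x,r_i)$ combined with nonnegativity of the interior terms $d(x,r_i)$ gives the bound, and this makes the role of "edges inside $\Delta^*(x,p',q')$" transparent.
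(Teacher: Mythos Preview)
Your inductive scheme has a genuine gap: the triangle inequality points the wrong way. From the induction hypothesis you control $d(x,r)+d(x,q')$, and from $\Delta(x,p',r)$ you get $d(x,p')+d(x,r)\ge d(p',r)$. But to reach the goal you need $d(x,p')+d(x,q')\ge d(p',r)+\bigl(d(x,r)+d(x,q')\bigr)-d(x,r)$, i.e.\ $d(x,p')\ge d(p',r)$, and this is simply false in general (take $x$ close to the origin, $p'=(-1,1)$, $q'=(1,1)$, and $r=(0.9,0.9)$ in the Euclidean plane). Your ``cleaner alternative'' has the same defect: summing $d(x,r_i)+d(x,r_{i+1})\ge d(r_i,r_{i+1})$ over $i$ yields
\[
d(x,p')+d(x,q')+2\sum_{0<i<k}d(x,r_i)\ \ge\ \sum_i d(r_i,r_{i+1}),
\]
so the extra $2\sum d(x,r_i)$ sits on the \emph{left}, not the right, and the telescoping gives nothing.

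The missing idea, which the paper uses, is not to keep $x$ as the apex throughout but to move it. Take the first edge $[p',q_0]$ of the chain, extend it to a line meeting $[x,q']$ in a point $x'$. Now $p',q_0,x'$ are collinear, so $d(p',q_0)+d(q_0,x')=d(p',x')$ is an \emph{equality}, and the triangle inequality in $\Delta(x,p',x')$ together with $x'\in[x,q']$ gives $d(x,p')+d(x,q')=d(x,p')+d(x,x')+d(x',q')\ge d(p',x')+d(x',q')=d(p',q_0)+d(q_0,x')+d(x',q')$. One then applies the induction hypothesis with apex $x'$ and edges $[x',q_0],[x',q']$ (all remaining edges of the chain lie in $\Delta^*(x',q_0,q')$). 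The crucial point is that collinearity converts one of the two needed inequalities into an equality, which is exactly what your scheme lacks.
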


\begin{proof}
If $p' = q'$, the lemma trivially holds. In the following,
we assume that $p' \neq q'$.  We prove the lemma by induction on the
number of segments in the set $ES^-(Q',x,p',q') = \{[p,q] : [p,q] \mbox{ is an
  edge of } K' \mbox{ such that } p,q \in \Delta^*(x,p',q')\}$. Note
that since the points  of $Q'$ are in general position, there exists at
most one $p'' \in Q' \cap [x,p']$. If $p''$ exists, then $[p',p'']$ is an
edge of $K'$, $d(x,p') = d(x,p'')+d(p'',p')$, and
$ES^-(Q',x,p',q')=ES^-(Q',x,p'',q')\cup\{[p',p'']\}$. By induction, we
immediately get the result. Thus, we can further assume that $Q' \cap [x,p'] =Q'
\cap [x,q'] = \varnothing$.

By Lemma \ref{decomposition_convex_hull}, the segments of
$ES^-(Q',x,p',q')$ form a path constituted by all edges of $K'$
located in $\Delta^*(p',q',x)$.  Let $[p_0,q_0] \in ES^-(Q',x,p',q')$
with $p_0=p'$. If $q_0=q'$, then $[p_0,q_0]$ is the unique segment in
$ES^-(Q',x,p',q')$ and by triangle inequality $d(p_0,q_0) = d(p',q')
\le d(p',x)+d(x,q')$. So, let $q_0 \ne q'$. Since the points of $Q'$
are in general position and $Q' \cap [x,p'] = Q' \cap [x,q'] =
\varnothing$, $q_0$ belongs to the interior of
$\Delta^*(p',q',x)$. Let $\ell$ be a line extending $[p_0,q_0]$. Then
$\ell$ necessarily intersects the segment $[x,q']$ in a point
$x'$. Note that all segments of $ES^-(Q',x,p',q')$ except $[p_0,q_0]$
are located in the triangle $\Delta^*(q_0,q',x')$.

By induction assumption, $d(q_0,x') + d(q',x') \ge \sum \{d(p,q) :
[p,q] \in ES^-(Q',x',q_0,q') \}$. On the other hand, $d(p_0,q_0) +
d(q_0,x') = d(p',x') \leq d(p',x) + d(x,x')$ by triangle inequality.
Putting these two inequalities together and taking into account that
$p' = p_0,$ we obtain:
\begin{align*}
d(p',x) + d(x,q') &= d(p',x) + d(x,x') + d(x',q') \ge d(p',x') + d(x',q') \\
                  &= d(p_0,q_0) + d(q_0,x') + d(x',q') \\
&\ge d(p_0,q_0) + \sum \{d(p,q) : [p,q] \in ES^-(Q',x',q_0,q') \} \\
                  &= \sum \{ d(p,q): [p,q] \in ES^-(Q',x',p',q')\}.
\end{align*}

\end{proof}

\begin{lemma} \label{cases2-4} Let $Q$
be a finite set of points in general position of a Busemann surface
$(S,d).$ Let $q,r\in Q$ two distinct points of $Q$ and $\ell$ an arbitrary line
passing via $[q,r]$. Let $\ell$ separates
the set $Q-\{ q,r\}$ into (possibly empty) sets $A$ and $B$. For any pair $(X,Y)$ among
 $(A \cup \{r\},B \cup \{q\})$, $(A \cup \{q\},B \cup \{r\})$, $(A,B \cup \{q,r\})$, $(A\cup\{ q,r\},B)$,
if $X\ne\varnothing$ and $Y\ne\varnothing$, then $X$ and $Y$ are line-separable.
\end{lemma}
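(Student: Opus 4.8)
The plan is to verify, for each of the four listed pairs $(X,Y)$, the hypotheses of Lemma~\ref{separation} and then invoke that lemma. Note first that $(X,Y)$ is a bipartition of $Q$, that $Q$ is in general position, and that $q,r\in\ell$; hence no point of $Q\setminus\{q,r\}$ lies on $\ell$ (it would be collinear with $q$ and $r$), so $\ell\cap Q=\{q,r\}$ and $Q\setminus\{q,r\}=A\sqcup B$ with $A\subseteq\mathring{H}^-_{\ell}$, $B\subseteq\mathring{H}^+_{\ell}$. Inspecting the four pairs, in each of them $X$ is contained in one closed halfplane determined by $\ell$ and $Y$ in the other: the points of $A$ (resp.\ $B$) lie in the corresponding open halfplane, whereas $q$ and $r$ lie on $\ell=H^-_{\ell}\cap H^+_{\ell}$ and hence may be assigned to either side. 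By Lemma~\ref{closed_halfplane}, $K_X:=\conv(X)$ and $K_Y:=\conv(Y)$ then lie in complementary closed halfplanes of $\ell$, so $\ell$ weakly line-separates $K_X$ and $K_Y$.

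The remaining hypothesis of Lemma~\ref{separation} is that $(K_X\cap K_Y)\cap Q=\varnothing$. Since $K_X$ and $K_Y$ lie in complementary closed halfplanes of $\ell$ we have $K_X\cap K_Y\subseteq\ell$, and as $\ell\cap Q=\{q,r\}$ it is enough to show that neither $q$ nor $r$ lies in $K_X\cap K_Y$. Each of $q,r$ belongs to exactly one of $X,Y$ and hence, trivially, to the corresponding hull; so everything reduces to the following claim, applied twice per case: \emph{if $z\in\{q,r\}$ and $Z$ is the part of the bipartition not containing $z$, then $z\notin\conv(Z)$.} Suppose, for contradiction, that $z\in\conv(Z)$. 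By Lemma~\ref{Caratheodory} there are $u,v,w\in Z$ with $z\in\Delta^*(u,v,w)$, and $z\notin\{u,v,w\}$ because $z\notin Z$. If $z$ were on one of the geodesic segments $[u,v],[v,w],[w,u]$, then $z$ and the two (distinct) endpoints of that segment would be three distinct collinear points of $Q$, contradicting general position; hence $z\in\Delta^*(u,v,w)\setminus\Delta(u,v,w)$. In particular $u,v,w$ are distinct and non-collinear, and by Lemmas~\ref{triangle} and~\ref{decomposition_convex_hull} the set $\Delta^*(u,v,w)\setminus\Delta(u,v,w)$ is precisely the topological interior of $\conv(u,v,w)$. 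On the other hand $\conv(u,v,w)\subseteq\conv(Z)$, and $\conv(Z)$ is contained in the closed halfplane of $\ell$ that contains $Z$ (Lemma~\ref{closed_halfplane}); thus $\ell$ is a tangent line of $\conv(u,v,w)$ passing through the point $z\in\conv(u,v,w)$. But a tangent line cannot meet the interior of a convex region: a small ball around an interior point lying on $\ell$ would be contained in that closed halfplane, whereas every ball around a point of $\ell$ meets both open halfplanes (by the local description of $\ell$ in the proof of Lemma~\ref{halfplanes}). This contradiction proves the claim, and therefore $(K_X\cap K_Y)\cap Q=\varnothing$.

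Having established weak line-separability of $K_X,K_Y$ together with $(K_X\cap K_Y)\cap Q=\varnothing$, Lemma~\ref{separation} gives at once that $X$ and $Y$ are line-separable, which is the assertion; the four cases differ only in which of $q,r$ is placed on which side and are otherwise identical. The only nontrivial ingredients are the two halves of the displayed claim: that general position forbids $z$ from lying on an edge of a ``Carathéodory triangle'' $\Delta^*(u,v,w)$, and the elementary fact that a tangent line misses the interior of a convex region. I expect these, and not the case bookkeeping, to be where the substance lies; in particular, the main point is that one should not try to separate by perturbing $\ell$ itself (awkward, since $\ell$ is a global object and $A,B$ need not be near it) but rather appeal to Lemma~\ref{separation}, which was built for exactly this passage from weak to strict separation.
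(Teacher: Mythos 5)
Your proof is correct and follows essentially the same route as the paper: both establish that $\ell$ weakly separates the two convex hulls, show that $\conv(X)\cap\conv(Y)$ meets $Q$ in no point (via general position), and then invoke Lemma~\ref{separation}. The only (minor) deviation is in excluding $q,r$ from the hull of the opposite part: the paper notes the point would lie on the boundary of that hull and applies Lemma~\ref{decomposition_convex_hull} to get an edge $[q',q'']$ through it, whereas you reduce via Lemma~\ref{Caratheodory} to a single triangle and rule out the interior case by the explicit tangent-line/ball argument.
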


\begin{proof}  Let $\mathring{H}^-_{\ell}, \mathring{H}^+_{\ell}$ denote the open halfplanes containing
respectively the sets $A$ and $B$. Since the closed halfplanes
$H^-_{\ell}$ and $H^+_{\ell}$ are convex and $A \cup \{q,r\} \subset
H^-_{\ell}, B \cup \{q,r\} \subset H^+_{\ell}$, the sets $\conv(A),
\conv(A \cup \{r\}), \conv(A \cup \{q\})$, and $\conv(A \cup \{q,r\})$
are contained in $H^-_{\ell}$ while the sets $\conv(B), \conv(B \cup
\{q\}), \conv(B \cup \{r\})$, and $\conv(B \cup \{q,r\})$ are
contained in $H^+_{\ell}$. Hence the convex hulls of the four pairs of sets from
the statement are weakly-separated by the line $\ell$;
thus their intersections are contained in the line $\ell$, consequently,
$\conv(X)\cap \conv(Y)\cap Q\subseteq \{ q,r\}$.

If $r\in \conv(X)\cap \conv(Y),$ then $r\in \conv(A\cup \{ q\})$ or $r\in \conv(B\cup \{ q\})$,
say the first. Since $r$ will belong
to the boundary of $\conv(A\cup \{ q\}),$ by Lemma
\ref{decomposition_convex_hull} there exist  points $q',q''\in
A\cup\{q\}\subset Q$ such that $r\in [q',q''],$ a contradiction with the fact
that the points of $Q$ are in general position. Hence $\conv(X)\cap \conv(Y)\cap Q=\varnothing.$ By Lemma \ref{separation},
the convex sets $X$ and $Y$ are line-separable.
\end{proof}

Here is the main
result of this section:

\begin{proposition} \label{crofton} (Crofton formula)
Let $Q=\{ p_1,\ldots, p_n\}$
be a finite set of points in general position of a Busemann surface
$(S,d).$ Then for any two points $p_i,p_j\in Q,$ we have
\begin{align*}
2d(p_i,p_j) = \sum\{ \sigma_t: t\in [1,m] \mbox{ and any line of } L_t \mbox{ intersects the segment } [p_i,p_j]\}. \tag{1}
\end{align*}
In particular, the finite metric space $(Q,d)$ is $l_1$-embeddable.
\end{proposition}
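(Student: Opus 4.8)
The plan is to prove the Crofton formula (1) and then deduce $l_1$-embeddability of $(Q,d)$ as a direct consequence. The right-hand side of (1), read as a function of the pair $(p_i,p_j)$, is visibly a nonnegative combination of cut semimetrics: for each equivalence class $L_t$ of separating lines we have the bipartition $(Q'_t,Q''_t)$ of $Q$, and ``any line of $L_t$ intersects $[p_i,p_j]$'' is exactly the condition $|\{p_i,p_j\}\cap Q'_t|=1$, i.e.\ $\delta_{Q'_t}(p_i,p_j)=1$. So once (1) is established we will have written $2d=\sum_t \sigma_t\,\delta_{Q'_t}$ with coefficients $\sigma_t\ge 0$ (this nonnegativity is precisely what was proved just above via $\sigma_t=\sigma'_t+\sigma''_t$ and the preceding lemma). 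Hence $2d$ lies in the cut cone $\mathrm{CUT}_n$, so $d$ does too, and $(Q,d)$ is $l_1$-embeddable by the discussion of the cut cone in the Preliminaries. I would state this deduction in one sentence at the end.

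For the formula itself, I would follow Alexander's combinatorial argument adapted to Busemann surfaces. First I would fix $p_i,p_j$ and consider a generic geodesic line $\ell$ extending $[p_i,p_j]$, together with the family of lines obtained by ``pivoting'' $\ell$ continuously. The key idea is to track how the quantity $\sum\{\sigma_t : \text{lines of } L_t \text{ cross } [p_i,p_j]\}$ changes, but in fact the cleaner route (and the one matching the setup the paper has built) is a double-counting / incremental argument: move a separating line through a one-parameter family sweeping all the equivalence classes whose lines cross $[p_i,p_j]$, and show that each time the combinatorial type changes — i.e.\ a point of $Q$ crosses the moving line, equivalently a positive extremal segment degenerates or a negative extremal segment is created or destroyed — the partial sum $\sum\sigma_t$ changes by exactly the change predicted by the telescoping of the $d$-terms along $[p_i,p_j]$. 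Concretely, one uses Lemma~\ref{bitangent} and Lemma~\ref{negative_segment} to identify, for each class $L_t$, that the positive extremal segments are the two inner bitangents' chords $[p',p'']$, $[q',q'']$ and the negative ones are precisely the edges of $K'_t$ and $K''_t$ inside $\Delta^*(p',q',x)$; then $\sigma_t$ telescopes: $\sigma'_t+\sigma''_t = d(x,p')+d(x,q')+d(x,p'')+d(x,q'') - (\text{edge lengths}) = d(p',p'')+d(q',q'') - (\text{edge lengths})$ only after using the triangle-inequality estimates — but the cleaner identity is that summing $\sigma_t$ over the relevant classes telescopes along the geodesic $[p_i,p_j]$ to $2d(p_i,p_j)$, with each endpoint $p_i$ and $p_j$ contributing $d(p_i,p_j)$ once.

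More precisely, I would argue as follows. Consider the set of all equivalence classes $L_t$ whose lines cross $[p_i,p_j]$; for such a class, both $p_i$ and $p_j$ participate as endpoints of extremal segments of $(K'_t,K''_t)$ — in fact $p_i\in Q'_t$, $p_j\in Q''_t$ (up to swapping), so $[p_i,p_j]$ is itself a positive extremal segment candidate, but more to the point each such class contributes the two inner bitangents and one of the four ``quadrant'' chords is controlled. The decisive local move is: rotate a line about a point, or translate it, so that one point of $Q$ at a time passes from one side to the other; Lemma~\ref{cases2-4} guarantees line-separability is preserved through the intermediate configurations, and Lemma~\ref{decomposition_convex_hull}, Lemma~\ref{negative_segment}, and the length lemma give that the jump in $\sum\sigma_t$ equals the corresponding jump in a telescoping sum of distances along $[p_i,p_j]$. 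Carrying out the base case (when the separating line is extreme, e.g.\ a tangent of $\mathrm{conv}(Q)$ through one of $p_i,p_j$, where the relevant sum collapses to $d(p_i,p_j)$ itself or to $0$) and then accumulating the increments over the full sweep yields $2d(p_i,p_j)$ on the nose — the factor $2$ coming from sweeping through ``both ends,'' i.e.\ each separating line and its ``reverse'' orientation, or equivalently each crossing of $[p_i,p_j]$ being counted from the $p_i$-side and from the $p_j$-side.

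The main obstacle I anticipate is the bookkeeping of the incremental sweep: proving that as the separating line moves and a single point of $Q$ switches sides, the change $\Delta\bigl(\sum\sigma_t\bigr)$ is exactly accounted for by the change in the telescoping distance sum, i.e.\ that no extremal segment is gained or lost unaccountably and that the contributions of the positive segments (inner bitangents) and negative segments (hull edges) recombine correctly. This is where the convexity lemmas of Section~\ref{sec:npcsurfaces} — Pasch, Peano, Carathéodory number $3$ (Lemma~\ref{Caratheodory}), the edge/tangent characterization (Lemma~\ref{edges_general_position}), and especially Lemma~\ref{negative_segment} and the length lemma — do all the work, replacing the Euclidean-geometry facts Alexander used. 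The $l_1$-embeddability conclusion, by contrast, is immediate: (1) exhibits $2d$ as a nonnegative sum $\sum_t \sigma_t\,\delta_{Q'_t}$ of cut semimetrics over the bipartitions $(Q'_t,Q''_t)$, so $d\in\mathrm{CUT}_n$ and $(Q,d)$ embeds isometrically into some $({\mathbb R}^m,d_1)$.
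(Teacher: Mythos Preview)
Your deduction of $l_1$-embeddability from (1) is correct and matches the paper: once $2d=\sum_t\sigma_t\,\delta_{Q'_t}$ with $\sigma_t\ge 0$, the metric $d$ lies in the cut cone. The problem is your argument for (1) itself.

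You propose a \emph{sweep}: pivot or translate a separating line through a one-parameter family covering all classes $L_t$ that cross $[p_i,p_j]$, and track the increments in $\sum\sigma_t$ as points of $Q$ cross the moving line. This has a real gap. In a general Busemann surface there is no notion of ``rotating'' or ``translating'' a geodesic line; the paper develops no such machinery, and none of the lemmas in Section~\ref{sec:npcsurfaces} provides a continuous one-parameter family of lines through a given crossing point, let alone one that visits every equivalence class exactly once. Even combinatorially it is unclear that the classes $L_t$ crossing $[p_i,p_j]$ admit a linear order in which adjacent classes differ by a single point switching sides. Your ``base case'' and ``incremental step'' are therefore not well-posed, and the explanation of the factor $2$ (``sweeping through both ends'') is not the right one.

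The paper's argument is entirely static and avoids any motion of lines. It is a double counting over \emph{segments}, not over classes: expand the right-hand side of (1) as a signed sum of distances $\pm d(q,r)$ over all pairs $q,r\in Q$ and all classes $L_t$ crossing $[p_i,p_j]$ for which $[q,r]$ is an extremal segment of $(K'_t,K''_t)$. Then, for each fixed pair $(q,r)$, one enumerates \emph{directly} the classes in which $[q,r]$ is extremal and with which sign. There are four cases according to whether $\{q,r\}$ meets $\{p_i,p_j\}$ and whether a line through $[q,r]$ cuts $[p_i,p_j]$. In the generic case (all four points distinct, some line through $[q,r]$ cuts $[p_i,p_j]$), letting that line split $Q\setminus\{q,r\}$ into $A$ and $B$, there are exactly four relevant bipartitions: $(A\cup\{r\},B\cup\{q\})$, $(A\cup\{q\},B\cup\{r\})$ (both giving $[q,r]$ as a positive extremal segment), and $(A,B\cup\{q,r\})$, $(A\cup\{q,r\},B)$ (both negative). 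Lemma~\ref{cases2-4} is precisely what certifies that each of these bipartitions is genuinely line-separable, hence corresponds to an actual class $L_t$. The net contribution of $(q,r)$ is zero. The degenerate cases ($r=p_i$, etc.) give one positive and one negative occurrence, again net zero; only $[q,r]=[p_i,p_j]$ yields two positive occurrences and no negative ones, producing the $2d(p_i,p_j)$. No continuous deformation is needed anywhere.
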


\begin{proof}
The proof uses double counting and closely follows the proof of Lemma
1 of \cite{Al}. We will show that each segment $[q,r]$ with $q,r \in Q$
occurs as an extremal segment in such a way that $+d(q,r)$ appears
exactly the same number of times as $-d(q,r)$ on the right side of the
equation (1), unless $[q,r] = [p_i,p_j]$, in which case $+ d(q,r)$
appears twice. We distinguish four cases:

\medskip\noindent {\bf Case 1:} The points $p_i, p_j, q, r$ are distinct
and no line $\ell$ containing $[q,r]$ cut the segment $[p_i,p_j]$.
\medskip

We assert that in this case $\pm d(q,r)$ cannot appear on the right
side of (1). Indeed, if $d(q,r)$ were to appear, then there would exist an
equivalence class of lines $L_t$ such that any line of $L_t$
intersects the segment $[p_i,p_j]$ and $[q,r]$ is an extremal segment
for the couple $(K'_t,K''_t).$ By the definition of an extremal
segment, there exists a line $\ell'$ of $L_t$ passing via $[q,r]$ such that the
convex sets $K'_t$ and $K''_t$ belong to different closed halfplanes
defined by $\ell'$. Since $q, r, p_i, p_j$ are pairwise distinct, the
points $p_i, p_j$ are located in different open halfplanes defined by
$\ell'$. But then $\ell'$ necessarily cuts the segment $[p_i,p_j],$ a
contradiction.

\medskip\noindent
{\bf Case 2:} The points $p_i,p_j,q,r$ are distinct and some line $\ell$ containing $[q,r]$ cuts the segment $[p_i,p_j]$.
\medskip

We assert that in this case there exist precisely four equivalence
classes of lines which separate $p_i$ and $p_j$ in such a way that
$[q,r]$ is an extremal segment; moreover, $[q,r]$ occurs twice as
positive and twice as negative extremal segment. We will identify
these equivalence classes by giving the four pairs $(K'_t,K''_t)$.

Since the line $\ell$ cuts the segment $[p_i,p_j],$ $\ell$ separates
the set $Q-\{ q,r\}$ into the necessarily nonempty sets $A$ and
$B$. Let $\mathring{H}^-_{\ell}, \mathring{H}^+_{\ell}$ denote the open halfplanes containing
respectively the sets $A$ and $B$. Here are the four possible choices
for the pairs $(K'_t,K''_t):$

\medskip\noindent
1. $(\conv(A \cup \{r\}),\conv(B \cup \{q\}))$,

\noindent
2. $(\conv(A \cup \{q\}),\conv(B \cup \{r\}))$,

\noindent
3. $(\conv(A),\conv(B \cup \{q,r\}))$,

\noindent
4. $(\conv(A\cup\{ q,r\}),\conv(B))$.

\medskip\noindent By Lemma \ref{cases2-4}, in each of four choices  the
respective partition $(Q'_t,Q''_t)$ of $Q$ is line-separable. We immediately
conclude that in the first two
pairs $[q,r]$ occurs as a positive extremal segment while in the last
two pairs $[q,r]$ occurs as a negative extremal segment.

\medskip\noindent
{\bf Case 3:} The segments $[p_i,p_j]$ and $[q,r]$ are distinct but $r=p_i$.
\medskip

We assert that in this case there exist precisely two equivalence
classes of lines which separate $p_i$ and $p_j$ in such a way that
$[q,r]$ is an extremal segment; moreover, $[q,r]$ occurs once as
positive and once as negative extremal segment. We will identify these
equivalence classes by giving the two pairs $(K'_t,K''_t)$. We use the
notation of Case 2 except that we assume that $p_j$ is in $B$ and we
allow $A$ to be empty.  Here are the two possible choices for the
pairs $(K'_t,K''_t):$

\medskip\noindent
1. $(\conv(A\cup\{ r\}),\conv(B\cup \{ q\})),$

\noindent
2. $(\conv(A\cup\{ q,r\}),\conv(B)).$

\medskip
Again, by Lemma \ref{cases2-4}, the respective partitions
$(A\cup \{ r\},B\cup \{ q\})$ and $(A\cup\{ q,r\},B)$ of $Q$ are
line-separable. The first pair leads to a
positive extremal segment and the second pair to a negative extremal
segment.
%

\medskip\noindent
{\bf Case 4:} The segments $[p_i,p_j]$ and $[q,r]$ coincide.
\medskip

We assert that in this case there exist exactly two equivalence
classes of lines which separate $p_i$ and $p_j$ in such a way that
$[q,r]$ is an extremal segment. The sets $A$ and $B$ are defined as in
Cases 2 and 3; here we allow that either $A$ or $B$ to be empty.  The
two possible choices for the pairs $(K'_t,K''_t)$ are:

\medskip\noindent
1. $(\conv(A\cup\{ q\}),\conv(B\cup \{ r\})),$

\noindent
2. $(\conv(A\cup\{ r\}),\conv(B\cup \{ q\})).$

\medskip
By Lemma \ref{cases2-4},  the partitions $(A\cup\{
q\},B\cup \{ r\})$ and $(A\cup\{ r\},B\cup \{ q\})$ are
line-separable. Each pair leads to a positive extreme segment, thus to
a contribution of $+d(q,r)=+d(p_i,p_j)$ to the right side of the
equation (1).

The Cases 1-4 show the validity of (1). Since  $\sigma_t\ge 0, t=1,\ldots, m,$
in order to deduce that $(Q,d)$ is $l_1$-embeddable,
it suffices to consider each
$\sigma_t$ with coefficient $\frac{1}{2}.$
\end{proof}

\section{Proof of main results}\label{sec:main}
\subsection{Proof of Theorem~\ref{main}}
Let $(S,d)$ be a Busemann surface.  For a finite
set $Q$ of $S$, denote by $N(Q)$ the number of different collinear
triplets of points of $Q$. Clearly, $N(Q)=0$ if and only if the points
of $Q$ are in general position.

First we will show that for any finite set (not necessarily in
general position) $Q$  of $S,$ the metric space $(Q,d)$ is
$l_1$-embeddable.  
Let $Q=\{ p_1,\ldots, p_n\}$. For a given $\epsilon>0$, in
$m\le n^3$ steps we will define a set of points $Q_{\epsilon}=\{
p'_1,\ldots, p'_n\}$ in general position such that
$d(p_i,p'_i)<\epsilon/2$ and $|d(p_i,p_j)-d(p'_i,p'_j)|<\epsilon$ for
any $p_i,p_j\in Q$.  For this, setting $Q_0:=Q,$ we will construct a
sequence of sets $Q_1,\ldots,Q_m$ such that $N(Q)=N(Q_0)>N(Q_1)>\ldots
>N(Q_{m-1})>N(Q_m)=0$. Each set $Q_{i+1}$ is obtained from the set
$Q_i$ by moving a single point of $Q_i$ at distance
$<\frac{\epsilon}{2m}$. We will set $Q_{\epsilon}:=Q_m$ and denote by
$p'_i$ the final position of the point $p_i$ after all these
movements. Since each initial point can move at most $m$ times,
$d(p_i,p'_i)\le \epsilon /2,$ thus for each pair $p_i,p_j\in Q$ we
will have $|d(p_i,p_j)-d(p'_i,p'_j)|<\epsilon$.

We will describe now how from a set $Q$ with $N(Q)>0$ to define a new
set $Q_1$ with $N(Q_1)<N(Q).$ Let $p,q,r$ be three points of $Q$ such
that $q\in [p,r]$.  Let $\mathcal R$ denote the set of pairs $\{
p',q'\}$ of points of $Q$ such that $q,p',q'$ are not collinear. This
means that for any pair $\{ p',q'\}\in {\mathcal R},$ the point $q$
does not belong to the geodesic $[p',q']$ and to the cones
$C(p',q')=\{ x\in S: p'\in [x,q']\}$ and $C(q',p')=\{ x\in S: q'\in
[p',x]\}$. Since the set $\mathcal R$ is finite, the sets
$R(p',q'):=C(p',q')\cup [p',q']\cup C(q',p')$ are closed and $q$ does
not belong to any such set, we conclude that $\epsilon':=\min\{
d(q,R(p',q')): \{ p',q'\}\in {\mathcal R}\}>0.$ Let $\epsilon_0=\min
\{\epsilon', \frac{\epsilon}{2m}\}.$ Since $q$ contains a neighborhood
homeomorphic to a circle, there exists a direction in
the neighborhood of $q$ different from the directions on $[q,p]$ and
$[q,r].$ Let $q_0$ be a point obtained from $q$ by moving along that
direction at distance $<\epsilon_0$ from $q$. Let $Q_1:=Q\setminus \{
q\} \cup \{ q_0\}$.  We assert that $N(Q_1)<N(Q).$ From the
construction, $q_0\notin R(p',q')$ for any pair $\{ p',q'\}\in
{\mathcal R},$ thus $q_0$ cannot create new collinear triplets. On the
other hand, since $q_0\notin [p,q]$ we conclude that indeed
$N(Q)>N(Q_1).$

As a result, for each $\epsilon > 0$ we can define a set
$Q_{\epsilon} = \{p'_1,\ldots, p'_n\}$ of points in general position
such that $d(p_i,p'_i) < \epsilon$ and
$|d(p_i,p_j)-d(p'_i,p'_j)| < \epsilon$. By Proposition \ref{crofton},
the metric spaces $(Q_{\epsilon},d)$ are $l_1$-embeddable. On the set
$Q$ we define the metric $d_{\epsilon}$ by setting
$d_{\epsilon}(p_i,p_j) = d(p'_i,p'_j)$ for any two points $p_i,p_j \in
Q$, where $p'_i$ and $p'_j$ are the images of $p_i$ and $p_j$ in the
set $Q_{\epsilon}$. Since $(Q_{\epsilon},d)$ are $l_1$-embeddable, the
metric spaces $(Q,d_{\epsilon})$ are also $l_1$-embeddable, whence
each $d_{\epsilon}$ belongs to the cut cone CUT$_n$. Since CUT$_n$ is
closed and $d_{\epsilon}$ converge to $d$ when $\epsilon$ converges to
$0$, we conclude that $d \in \mbox{CUT}_n$. This establishes that the
finite metric space $(Q,d)$ is $l_1$-embeddable. Since all finite
subspaces of $(S,d)$ are $l_1$-embeddable, the metric space $(S,d)$ is
$L_1$-embeddable by the compactness result of \cite{BrDhCaKr}.

\subsection{Proof of Corollary \ref{graph}}
First we  present an example of a
Busemann graph $B_n$, which is not isometrically embeddable into
$l_1$. The graph $B_n$ consists of two 3-cycles $T'=u'uu''$ and
$T''=v'vv''$ and two odd $(2n+1)$-cycles $C',C''$ sharing the edge
$uv$.  Let $C' \cap T' = u'u, C' \cap T'' = v'v$ and $C'' \cap T' =
u''u, C'' \cap T''= v''v$; see
Fig. \ref{fig-planar-no-isometric-embedding}.  Let $X_n$ be the planar
polygonal complex obtained by replacing the cycles $C',C''$ by regular
$(2n+1)$-gons and the 3-cycles $T',T''$ by equilateral triangles. Now,
if $n\ge 6,$ then the angles around the vertices $u$ and $v$ are $\ge
2\pi$. Hence $X_n$ is a CAT(0) complex and $B_n$, its 1-skeleton, is a
Busemann graph. To show that a graph $G$ is not $l_1$-embeddable, we will
use the well-known fact (which can be derived from the pentagonal
inequality for $L_1$-spaces, see \cite{DeLa}) that all intervals
$I(a,b)$ in $l_1$-graphs are convex (i.e., if $c',c''\in I(a,b)$ and
$c\in I(c',c'')$, then $c\in I(a,b)$; the interval $I(a,b)$ consists
of all vertices on shortest $(a,b)$-paths between $a$ and $b$).  Let
$x$ and $y$ be the vertices of $C'$ and $C''$ opposite to the edge
$uv$. Then $u',u'',v',v''\in I(x,y),$ however $u,v\in
I(u',v')\setminus I(x,y),$ thus $I(x,y)$ is not convex and therefore
$B_n$ is not an $l_1$-graph.

\begin{figure}[h]
\begin{center}
\begin{tikzpicture}[x=2cm,y=2cm]
\foreach \i [evaluate=\i as \angle using (2 * \i + 1) * 360 / 26] in {0,1,2,...,12}
 { \draw (\angle:1) node[circle,inner sep = 0pt,minimum size =3pt,fill = black] {};
   \draw ($2*cos(360/26)*(1,0) - (\angle:1)$) node[circle,inner sep = 0pt,minimum size =3pt,fill = black] {};
   \draw (\angle:1) -- (\angle + 360/13:1);
   \draw ($2*cos(360/26)*(1,0) - (\angle:1)$) -- ($2*cos(360/26)*(1,0) - (\angle + 360/13:1)$);
 }
\draw (3*360/26:1) -- ($2*cos(360/26)*(1,0) - (23*360/26:1)$);
\draw (23*360/26:1) -- ($2*cos(360/26)*(1,0) - (3*360/26:1)$);
\draw (360/26:1) node[anchor = west] {$u$};
\draw (3*360/26:1) node[anchor = south] {$u'$};
\draw (13*360/26:1) node[anchor = east] {$x$};
\draw (23*360/26:1) node[anchor = north] {$v'$};
\draw (25*360/26:1) node[anchor = west] {$v$};
\draw ($2*cos(360/26)*(1,0) - (23*360/26:1)$) node[anchor = south] {$u''$};
\draw ($2*cos(360/26)*(1,0) - (13*360/26:1)$) node[anchor = west] {$y$};
\draw ($2*cos(360/26)*(1,0) - (3*360/26:1)$) node[anchor = north] {$v''$};
\end{tikzpicture}
\end{center}
\caption{A Busemann graph that admits no isometric embedding into $l_1$.}
\label{fig-planar-no-isometric-embedding}
\end{figure}
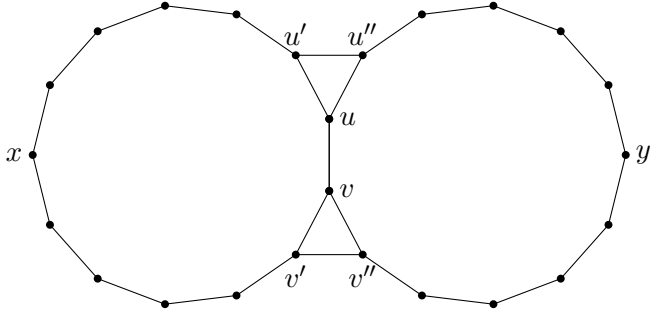

Now, we prove Corollary \ref{graph} that for any Busemann graph
$G=(V,E)$ with a standard graph metric $d_G$, $(V,d_G)$ admits an
embedding into $L_1$ with constant distortion. Every (combinatorial)
graph $G=(V,E)$ equipped with its standard distance $d_G$ can be
transformed into a (network-like) geodesic space $G'=(V',d_{G'})$ by
replacing every edge $e=(u,v)$ by a segment $\gamma_{uv}=[u,v]$ of
length 1; the segments may intersect only at common ends.  Then
$(V,d_G)$ is isometrically embedded in a natural way in $(V',d_{G'}).$
$G'$ is often called a {\it metric graph}. Since $G$ is a Busemann
graph, $G$ is the 1-skeleton of a non-positively curved regular planar
complex $X$; let $d$ be the intrinsic $\ell_2$-metric on $X$. The
graph $G$ and its metric graph $G'$ are naturally embedded in $X$ by
the identity mapping $\id: V' \rightarrow X$.

As we noticed in Subsection 2.4, $(X,d)$ can be extended to a Busemann
surface, and this extension is an isometric embedding. Thus Theorem
\ref{main} implies that $(X,d)$ is isometrically embeddable into
$L_1$. Therefore, in order to show that $(V,d_G)$ is embeddable into
$L_1$ with distortion $c = 2+ \pi / 2$, it suffices to show that $\id$ embeds
$(V',d_{G'})$ into $(X,d)$ with
distortion $c$. Pick any two points $x,y \in V'$ and let $[x,y]$ be the
geodesic segment between $x$ and $y$ in $X$. Let
$x =: x_0,x_1,\ldots, x_{k-1},x_k := y$ be the consecutive intersections of
$[x,y]$ with the 1-faces (edges) of $X$. Then each pair of consecutive
points $x_{i-1},x_i$ belongs to a common 2-face $F_i$ of $X$. Let
$P_i$ be the shortest of the two boundary paths of $F_i$ connecting
$x_{i-1}$ and $x_i$ and let $\ell_i$ be its length. Since the union
$\cup_{i=1}^k P_i$ of these paths is a path between $x$ and $y$ in the
metric graph $G',$ we deduce that $d_{G'}(x,y) \le \sum_{i=1}^k
\ell_i$. Therefore, to prove that $d_{G'}(x,y) \le c \cdot d(x,y)$ it
suffices to show that ${\ell}_i \le c \cdot d(x_{i-1},x_i)$ for all
$i=1,\ldots, k$, where $d(x_{i-1},x_i)$ is simply the Euclidean
distance between two boundary points $x_{i-1},x_i$ of the regular
polygon $F_i$. This is a consequence of the following result:

\begin{lemma}\label{cl:shortcut-face}
If $a,b$ are two points on the boundary of a regular Euclidean polygon
$F$ and $\ell(a,b)$ is the length of the shortest boundary path $P$ of
$F$ connecting $a$ and $b,$ then $\ell(a,b) \leq (2+\pi/2)d(x,y)$.
\end{lemma}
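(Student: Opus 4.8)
The plan is to argue in the Euclidean plane containing $F$; write $c$ for the center of $F$, $R$ for its circumradius, $r=R\cos(\pi/n)$ for its inradius, $s=2R\sin(\pi/n)$ for its side length, and note that the half-perimeter of $F$ is $nR\sin(\pi/n)$. First I reduce by a short case analysis. If $a,b$ lie on a common edge, then $\ell(a,b)=d(a,b)$. If $a,b$ lie on two edges sharing a vertex $v$, then $a\!\to\!v\!\to\!b$ is a boundary path, so $\ell(a,b)\le d(a,v)+d(v,b)$; since the interior angle at $v$ equals $\pi-2\pi/n\ge\pi/3$, writing $x=d(a,v)$, $y=d(v,b)$ the law of cosines gives $d(a,b)^2=x^2+y^2-2xy\cos(\angle avb)\ge x^2+y^2-xy\ge\tfrac14(x+y)^2$, whence $\ell(a,b)\le x+y\le 2\,d(a,b)$. (For $n=3$ every pair of edges shares a vertex, so this already finishes the lemma.) Thus it remains to treat the case $n\ge 4$ with $a,b$ lying on two non-adjacent edges; then the shorter boundary arc $\gamma$ from $a$ to $b$ passes through consecutive vertices $v_1,\dots,v_k$ with $k\ge 2$, with $a$ on the edge preceding $v_1$ and $b$ on the one following $v_k$, so that $\ell(a,b)=d(a,v_1)+(k-1)s+d(v_k,b)$.

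The core step compares $\gamma$ with the circumscribed circle $\Sigma$. Let $a',b'\in\Sigma$ be the radial projections of $a,b$ from $c$, so that $d(a,a')=R-|ca|\le R-r$ and likewise $d(b,b')\le R-r$. Radial projection preserves angular position, so the sub-arc of $\Sigma$ from $a'$ to $b'$ on the same side as $\gamma$ passes through $v_1,\dots,v_k$ in order and has the same angular width as $\gamma$. Bounding chords by the corresponding circular sub-arcs: $d(a,v_1)\le d(a,a')+\mathrm{arc}_\Sigma(a',v_1)\le (R-r)+\mathrm{arc}_\Sigma(a',v_1)$, similarly $d(v_k,b)\le (R-r)+\mathrm{arc}_\Sigma(v_k,b')$, and $(k-1)s=(k-1)\cdot 2R\sin(\pi/n)<(k-1)\cdot\tfrac{2\pi R}{n}=\mathrm{arc}_\Sigma(v_1,v_k)$. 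Adding these three estimates,
\[
\ell(a,b)\;<\;2(R-r)+\mathrm{arc}_\Sigma(a',b').
\]
Now, since $\gamma$ is the shorter boundary arc, its angular width is at most $\pi$ (equivalently, a line through $c$ splits $\partial F$ into two arcs of equal length; for odd $n$ there may be a tiny excess over $\pi$, which is harmless below), and for a circular arc of width $\le\pi$ one has $\mathrm{arc}_\Sigma(a',b')\le\tfrac{\pi}{2}\,d(a',b')$ because $t\le\pi\sin(t/2)$ on $[0,\pi]$. Finally $d(a',b')\le d(a,b)+d(a,a')+d(b,b')\le d(a,b)+2(R-r)$, so
\[
\ell(a,b)\;<\;2(R-r)+\tfrac{\pi}{2}\bigl(d(a,b)+2(R-r)\bigr)\;=\;(2+\pi)(R-r)+\tfrac{\pi}{2}\,d(a,b).
\]

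It remains to absorb $(2+\pi)(R-r)$. As $a,b$ lie on non-adjacent edges, $d(a,b)$ is at least the minimum distance between two non-adjacent edges of a regular $n$-gon, which is $s=2R\sin(\pi/n)$ (attained at the two vertices facing each other across the single separating edge). So it suffices to verify $(2+\pi)(R-r)\le 2s$, i.e.\ $(2+\pi)(1-\cos(\pi/n))\le 4\sin(\pi/n)$; via half-angle formulas this is $\tan(\pi/2n)\le\tfrac{4}{2+\pi}$, which holds for all $n\ge 3$ since $\tan(\pi/6)=1/\sqrt3<\tfrac{4}{2+\pi}$ and $\tan$ is increasing. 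Hence $\ell(a,b)\le 2\,d(a,b)+\tfrac{\pi}{2}\,d(a,b)=(2+\tfrac{\pi}{2})\,d(a,b)$, as required. The two delicate points are the bookkeeping of angular widths in the chord-by-chord comparison with $\Sigma$, and the claim that the shorter boundary arc has angular width (essentially) at most $\pi$; I expect the latter to be the main obstacle, and it is handled by observing that the ``radial speed'' function $\theta\mapsto r/\cos^2(\theta-\theta_{\mathrm{mid}})$ of $\partial F$ has period $2\pi/n$ and is symmetric about every vertex- and midpoint-direction, so that the two arcs cut off by a central line differ in length by a controlled amount (exactly zero when $n$ is even).
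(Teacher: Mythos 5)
Your Case 1 and your final absorption computation are fine, but there is a genuine gap in the core step of Case 2: the claim that the shorter boundary arc $\gamma$ has angular width at most $\pi$. This is simply false for odd $n\ge 5$: a line through the center bisects the perimeter only when $F$ is centrally symmetric (i.e.\ $n$ even), and for a regular pentagon one can place $a,b$ so that the two boundary arcs have equal length while the angular width of one of them exceeds $\pi$ by roughly $0.07$ rad; perturbing slightly makes that wider arc the strictly shorter one. Your argument needs $\mathrm{arc}_\Sigma(a',b')\le\frac{\pi}{2}d(a',b')$, which is exactly the statement that the central angle is at most $\pi$ (since $t\le \pi\sin(t/2)$ fails for $t>\pi$), so for odd $n$ your chain as written does not close. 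You acknowledge the problem, but the remark that the two arcs cut by a central line ``differ by a controlled amount'' is not a proof: you neither bound the excess quantitatively nor show that the final inequality, which you use with the exact constant $\pi/2$ and no slack in the $\kappa\,d(a,b)$ term, survives it. (The statement is in fact salvageable --- in the configurations where the excess is largest, $a',b'$ are nearly antipodal so $d(a,b)$ is large and there is room to spare, and the excess is $O(1/n^2)$ for large $n$ --- but that quantitative analysis is precisely the missing content.)

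The paper's proof avoids this issue entirely, and you could too with a small change of bookkeeping: compare only the vertex-to-vertex portion of the path with the circumcircle. The part of $\gamma$ from $v_1$ to $v_k$ consists of $k-1$ chords of $\Sigma$, each at most its arc, and since $\gamma$ is the shorter boundary path one has $(k-1)s\le$ half the perimeter, hence $k-1\le n/2$ and the central angle $(k-1)2\pi/n$ is automatically at most $\pi$; therefore $\ell(v_1,v_k)\le \frac{\pi}{2}d(v_1,v_k)\le\frac{\pi}{2}d(a,b)$ (the paper uses $d(a,b)\ge d(z_1,z_k)$ here). The two end pieces are handled directly by $d(a,v_1)\le s\le d(a,b)$ and $d(v_k,b)\le s\le d(a,b)$ (your own observation that non-adjacent edges are at distance at least $s$), giving $\ell(a,b)\le 2d(a,b)+\frac{\pi}{2}d(a,b)$ with no radial projection, no $(R-r)$ error terms, and no angular-width subtlety.
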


\begin{proof} Let $C$ be the circle circumscribed to the regular polygon $F$. We distinguish two cases.

\medskip\noindent{\bf Case 1.} $a$ and $b$ lie on incident edges $[z',z], [z,z'']$ of $F$.
Let $\alpha$ be the angle between these edges.  Then $d(a,b)
= (d(a,z)^2 + d(b,z)^2 - 2d(a,z)d(b,z) \cos \alpha)^{\frac{1}{2}}$. Simple
calculations using the fact that $\pi/3\le \alpha\le \pi$
show that if $d(a,z) + d(z,b)$ is fixed, then $d(a,b)$ is
minimized when $\alpha$ is minimized (i.e., $\alpha=\pi/3$) and $d(a,z) =
d(z,b)$, in which case $d(a,b) \geq (d(a,z) + d(z,b))/2\ge \ell(a,b)/2$.\medskip

\noindent{\bf Case 2.} $a$ and $b$ lie on  non-incident edges, say
$P=(a,z_1,z_2,\ldots z_{k},b)$, with $k \geq 2$. Then clearly $d(a,b)
\geq 1$ and $d(a,b) \geq d(z_1,z_k)$. Also $\ell(z_1,z_k)$ (the length of the portion of
$P$ between  $z_1$ and $z_k$) is upper bounded by the length of the
$(z_1,z_k)$-arc of the circle $C$, and thus is at most $\frac{\pi}{2} d(z_1,z_k)$.
From this we get
$$\ell(a,b)\leq 2 + \ell(z_1,z_k) \leq 2d(a,b) + \frac{\pi}{2} d(z_1,z_k) \leq \left(2 + \frac{\pi}{2}\right) d(a,b).$$
\end{proof}

Hence, $\id: V'\rightarrow X$ is a non-expansive embedding  of  $G'$ (and $G$) into $X$ with distortion
$c = 2 + \pi / 2$: for any two
vertices $x,y$ of $G$ (and, more generally, for any two points $x,y \in
V'$ of $G'$) we have $\frac{1}{c}d_{G'}(x,y) \le d(x,y) \le
d_{G'}(x,y)$. This concludes the proof of Corollary \ref{graph}.


\end{document}